\newcommand\commentToggle{1} 
\newcommand{\TODO}[1]{\if\commentToggle1 {\color{red}TODO #1} \fi}
\newcommand{\Jason}[1]{\if\commentToggle1 {\color{blue}Jason: #1} \fi}
\newcommand{\starBool}{0} 
\newcommand{\starCommand}{\if\starBool1{ {\normalfont($\star$)}}\fi}
\def\thmhead@plain#1#2#3{%
	\thmname{#1}\thmnumber{\@ifnotempty{#1}{ }\@upn{#2\starCommand}}%
	\thmnote{ {\the\thm@notefont (#3)}}}
\let\thmhead\thmhead@plain
\newcommand{\starToggle}{\if\starBool0{\gdef\starBool{1}}\else{\gdef\starBool{0}}\fi}
\newcommand{\sameR}{\stackrel{\mathclap{\normalfont\footnotesize\mbox{R}}}{\simeq}}
\title{On inefficiently connecting temporal networks}
\author{Esteban {Christiann}}{École normale supérieure Paris Saclay, 91190 Gif-sur-Yvette, France \and \url{}}{estebanc@protonmail.ch}{}{}
\author{Eric {Sanlaville}}{Normandie Univ, UNIHAVRE, LITIS, 76600 Le Havre, France \and \url{}}{eric.sanlaville@univ-lehavre.fr}{}{DyNet RIN Tremplin Région Normandie 2020-2022}
\author{Jason {Schoeters}}{University of Cambridge, United Kingdom \and \url{https://jschoete.github.io/}}{js2807@cam.ac.uk}{}{DyNet RIN Tremplin Région Normandie 2020-2022 \& Leverhulme Trust International Professorship in Neuroeconomics}
\authorrunning{E. Christiann, E. Sanlaville, J. Schoeters} 
\keywords{Network design principles, Network dynamics, Paths and connectivity problems, Branch-and-bound} 
\newcommand\lineNumberingToggle{0} 
\newcommand\spannersAbstractIntro{1} 
\begin{document}
	
	\maketitle
	
	\begin{abstract}
		\if\spannersAbstractIntro1{%
		A temporal graph can be represented by a graph with an edge labelling, such that an edge is present in the network if and only if the edge is assigned the corresponding time label. A journey is a labelled path in a temporal graph such that labels on successive edges of the path are increasing, and if all vertices admit journeys to all other vertices, the temporal graph is temporally connected. A temporal spanner is a sublabelling of the temporal graph such that temporal connectivity is maintained.
		The study of temporal spanners has raised interest since the early 2000's. Essentially two types of studies have been conducted: the positive side where families of temporal graphs are shown to (deterministically or stochastically) admit sparse temporal spanners, and the negative side where constructions of temporal graphs with no sparse spanners are of importance. Often such studies considered temporal graphs with happy or simple labellings, which associate exactly one label per edge. In this paper, we focus on the negative side and consider proper labellings, where multiple labels per edge are allowed. More precisely, we aim to construct dense temporally connected graphs such that all labels are necessary for temporal connectivity.
		Our contributions are multiple: we present the first labellings maximizing a local density measure; exact or asymptotically tight results for basic graph families, which are then extended to larger graph families; an extension of an efficient temporal graph labelling generator; and overall denser labellings than previous work even when restricted to happy labellings. 
		}\else{%
		Temporal network design studies how to optimally design networks with connections over time, modelled as a temporal graph, which are often represented as a graph with an edge labelling. 
		This area of research has previously focused on minimising the number of labels to ensure temporal connectivity, whether optimising the total or local number of labels, all-to-all or terminal-to-terminal connectivity, or with additional constraints such as limiting the age of the network. 
		In this work, we focus on trying to achieve the maximum number of connections, such that all such connections are necessary for reachability. This can represent the power of an adversary trying to waste temporal network resources, even when constrained. This also implies worst-case results for temporal spanner problems, or for corresponding greedy algorithms for finding such spanners.
		Our contributions are multiple: we present denser labellings than previous work; the first dense labellings maximizing a novel measure; structural results for basic graph families, which are then extended to larger graph families; and an extension of an efficient temporal graph labelling generator. Over the course of the paper, we make some bounds meet exactly or asymptotically, and leave some interesting open questions in the conclusion.
		}\fi
	\end{abstract}
	
	\section{Introduction}
	\label{sec:intro}
	
	\if\spannersAbstractIntro1{%
	A temporal graph is a graph which can evolve over time, through the appearing and/or disappearing of edges. Numerous classical graph problems and parameters have been extended to temporal graphs, such as colouring, connected components, maximum matchings, and independent sets.\cite{marino2022coloring,vernet2022study,mertzios2020computing,hermelin2022temporal} In temporal graphs, connectivity may become very poor when considering the graph at every distinct time step, but the graph may still be connected when considering connectivity over time. Indeed, temporal connectivity is motivated through many contexts in which temporal graphs naturally arise, most notably the context of swarms of mobile entities with distance-based communication capabilities (drone networks, insect colonies, and, particularly useful during the COVID-19 pandemic: people). \cite{khanda2021efficient,dibrita2022temporal,charbonneau2013social,enright2019deleting} This temporal connectivity has since redefined classical connectivity problems, such as (temporal) dominating sets, and (temporally) connected components, and particularly interesting concerning this paper: (temporal) spanners.\cite{casteigts2021temporal,bilo2022blackout,kutner2023temporal,balev4590651temporally}
	
	After presenting a wide range of interesting changes and results concerning typical graph problems with temporal paths instead of paths, Kempe, Kleinberg, and Kumar discuss further interesting questions.\cite{KKK02} 
	One of these is whether a temporally connected graph can always be sparsified (that is, if labels can be removed) so as to obtain a ``sparse'' remaining structure maintaining temporal connectivity. Such a structure is later called a temporal spanner. Note that the static graph analogue would be asking whether a connected graph always admits a spanning tree, which is of course always the case. They follow up with a preliminary negative result, stating that some temporal graphs do not admit a linear size spanner (hypercube graphs with each edge labelled with the corresponding dimension). The real question then became whether dense temporal graphs could always admit a sparse spanner, the intuition being that there exists many more ways to potentially sparsify a dense graph.
	The question remained open for many years, until Axiotis and Fotakis answered in the negative: they construct a non-trivial dense temporal graph in which some labels may be removed but prove that a dense part has to remain to ensure temporal connectivity. \cite{axiotis2016size} A couple of years afterwards, a complementing positive result was presented by Casteigts, Peters, and Schoeters: any temporal complete graph always admits a sparse spanner.\cite{casteigts2021temporal}
	Following these, more papers surfaced related to temporal spanners: sharp thresholds on the density of random temporal graphs to asymptotically almost surely admit particular sparse spanners; positive and negative results regarding spanners which have a limited stretch, as well as on temporal spanners which are blackout-resistant.\cite{casteigts2022sharp,bilo2022sparse,bilo2022blackout}
	
	Another topic of interest in temporal graph theory is that of temporal network design, where instead of analysing a given temporal graph, one would like to design a temporal graph with some desired property or decide such a temporal graph does not exist. In most works on temporal network design, the graph itself is given and a corresponding labelling needs to constructed. 
	One of the earliest such design problems was to create a gossip protocol, that is, a schedule of pairwise communications between $n$ agents, each having some piece of information which can be transferred over successive communications, such that at the end of the schedule, all agents are up to date with all the information. It is natural to minimise the number of communications (\textit{e.g.} the total cost of phone calls), and thus some tight results arise with protocols using $2n-3$ communications, with the idea being to converge cast all information to some agent and then broadcast the information out again, designing essentially a temporal in-tree and a temporal out-tree \textit{resp.} 
	For more information, see survey \cite{zhang_broadcasting_2013}.
	More recently in \cite{mertzios2019temporal}, Mertzios \textit{et al.} reconsider and extend this work as a temporal graph design problem. Direct results from gossiping apply, but more importantly, they include other restrictions on the labelling, such as a maximum lifetime \textit{i.e.} the labels cannot be greater than some value, which was further investigated in \cite{klobas2022complexity}. Also, two measures of density, both of interest for this paper, are defined regarding a temporal graph: the temporal cost, being the total amount of labels; and the temporality, being the maximum amount of labels on some edge. The former can be seen as a global density measure, and the latter as a local one.
	Other temporal graph design problems include \cite{enright2021assigning} in which the authors aim to minimize reachability of a given graph by choosing the order of appearance of a set of given time edges, and \cite{klobas2023realizing} where the authors aim to construct a temporal graph which respects the given fastest travel times between vertices. 
	
	In this paper, we combine the study of temporal spanners and of temporal graph design, by designing dense temporal graphs such that each label is necessary for temporal connectivity.
	As opposed to most previous work we will not restrict ourselves to happy or simple labellings (one label per edge) but instead extend to consider proper labellings (multiple labels per edge allowed). This is a double-edged sword: on the one hand this intuitively may allow for much denser labellings, but on the other hand, a combinatorial explosion on the amount of possible labellings occurs implying algorithmics may be more difficult in this setting.
	In a sense, we are interested in designing the most inefficient temporal networks possible.
	Outside of the already established applications of temporal spanners in related work, the negative results in particular can have direct implications concerning adversarial behaviour in temporal network game theory and the potential waste of temporal and structural resources.\cite{liu2023adversarial,sun2022adversarial}  
	Lastly, a slowly temporally connected network may allow for time to detect any anomalies/viruses before the whole network is infected, while not hindering the supposedly essential connectivity of the network, and may have applications for fraud detection in financial transactions.\cite{rajeshwari2016real}
	
	In short, we build upon previous density measures and we aim to answer these two questions:
	\begin{itemize}
		\item 
		What is the densest temporal graph overall?
		\item 
		Given a graph class, what is the densest labelling all graphs can attain?
	\end{itemize} 
	Throughout the paper, we will steadily answer both questions simultaneously.
	}\else{%
	A fundamental part of graph theory is the design of graphs such that some useful property is respected, such as a small chromatic number or diameter, or being at least $k$-connected.\cite{heule2018computing,imase1981design,imase1983design,cai1989minimum,ishii2000minimum} Other examples include the design of good example graphs for the training of machine learning methods.\cite{lecun1989generalization} The construction of counterexamples to conjectures can be seen as graph design as well, where the property then represents the worst-case scenario regarding the conjecture. Examples of this include Grundy colouring, representing the worst case of greedy colouring, busy beavers, representing the worst amount of space that can be covered by a finite state machine, or more recently, a planar graph needing four pages for any book embedding.\cite{zaker2006results,michel2009busy,bekos2020four}
	
	
	Of course, the natural minimisation of some parameters occurs in temporal graph theory as well. A temporal graph is a graph which can evolve over time, through the appearing and/or disappearing of edges. Numerous classical graph problems and parameters have been extended to temporal graphs, such as colouring, connected components, maximum matchings, and independent sets.\cite{marino2022coloring,vernet2022study,mertzios2020computing,hermelin2022temporal} In temporal graphs, connectivity may become very poor when considering the graph at every distinct time step, but the graph may still be connected when considering connectivity over time. Indeed, temporal connectivity is motivated through many contexts in which temporal graphs naturally arise, most notably the context of swarms of mobile entities with distance-based communication capabilities (drone networks, insect colonies, and, particularly useful during the COVID-19 pandemic, human communities). \cite{khanda2021efficient,dibrita2022temporal,charbonneau2013social,enright2019deleting} This temporal connectivity has since redefined classical connectivity problems, such as (temporal) spanners, (temporal) dominating sets, and (temporally) connected components.\cite{casteigts2021temporal,bilo2022blackout,kutner2023temporal,balev4590651temporally}
	
	In \cite{zhang_broadcasting_2013}, the authors aim to create a gossip protocol, that is, a schedule of pairwise communications between $n$ agents, each having some piece of information which can be transferred over successive communications, such that at the end of the schedule, all agents are up to date with all the information. It is natural to minimise the number of communications (\textit{e.g.} the total cost of phone calls), and thus some tight results arise with protocols using $2n-3$ communications, with the idea being to converge cast all information to some agent and then broadcast the information out again, designing essentially a temporal in-tree and a temporal out-tree \textit{resp.} This can be improved slightly to $2n-4$ if instead of converge/broadcasting to/from one agent, a cycle of four agents is considered.
	More recently in \cite{mertzios2019temporal}, Mertzios \textit{et al.} reconsider the problem under the guise of temporal graphs, effectively creating a temporal graph design problem. A graph is given, and a labelling has to be assigned to the edges so as to obtain a temporally connected temporal graph, while minimising the amount of labels. Direct results from gossiping apply, but more importantly, they include other restrictions on the labelling, such as a maximum lifetime \textit{i.e.} the labels cannot be greater than some value, which was further investigated in \cite{klobas2022complexity}. Also, two measures of density, both of interest for this paper, are defined regarding a temporal graph: the temporal cost, being the total amount of labels; and the temporality, being the maximum amount of labels on some edge. The former can be seen as a global density measure, and the latter as a local one.
	Other temporal graph design problems include \cite{enright2021assigning} in which the authors aim to minimize reachability of a given graph by choosing the order of appearance of a set of given time edges, and \cite{klobas2023realizing} where the authors aim to construct a temporal graph which respects the given fastest travel times between vertices. 
	
	In these temporal graph design problems, the number of labels is minimised (or fixed and given). In this paper, we are interested in exploring the other side of the problem, when the number of labels is maximised instead. Consider the problem of, given a graph, assigning a labelling to the edges such that the resulting temporal graph is temporally connected, while maximising the amount of labels. For natural reasons, we consider maximising the amount of necessary labels, which as the term suggests, are necessary for some vertex reachability in the temporal graph. In some sense, we are interested in designing the most inefficient temporal networks possible, similarly to how Grundy colouring and busy beaver problems aim to maximise cost values which are naturally minimised in most problems. This has direct implications concerning adversarial behaviour in temporal network design, and the potential waste of temporal and structural resources.\cite{liu2023adversarial,sun2022adversarial}  Also, results on maximum density of temporally connected graphs imply worst-case scenarios for temporal spanners, as well as qualitative results on greedy algorithms for such spanners.\cite{casteigts2021temporal}
	Lastly, a slowly temporally connected network may allow for time to detect any anomalies/viruses before the whole network is infected, while not hindering the supposedly essential connectivity of the network, and may have applications for fraud detection in financial transactions.\cite{rajeshwari2016real}
	
	We build upon previous density measures and we aim to answer these two questions:
	\begin{itemize}
		\item 
		What is the densest labelling overall?
		\item 
		Given a graph class, what is the densest labelling any graph can attain?
	\end{itemize} 
	Throughout the paper, we will steadily answer both questions simultaneously.
	}\fi
	
	\subsection{Contributions}
	
	First, in \Cref{sec:preliminaries}, we give standard graph theory and temporal graph theory notation and define our setting as well as a global and a local density measure. Lower bounds from the literature and upper bounds through analysis are presented.
	Then, in \Cref{sec:trees}, we focus on tree graphs for which we obtain tight results on dense labellings through an argument on bridge edges. These results do not beat aforementioned lower bounds however. 
	Thus, in \Cref{sec:better_lower_bounds}, we present first an ad-hoc graph with a specific labelling so as to beat both lower bounds (significantly for the local measure and insignificantly for the global one). Another labelling is presented, this time on cycle graphs, which significantly beats the lower bound for the global measure as well.
	In \Cref{sec:cycles}, we decide to focus on cycles, partly due to the latter labelling showing promise for obtaining even denser labellings. For this, we decide to extend labelling generator STGen so as to fit to our setting and to cycles specifically. After executing it on small cycles, we obtain the intuition for a complex labelling which beats the lower bounds from \Cref{sec:better_lower_bounds} for local density by a factor of $1.5$, and for global density by exactly 1 label.
	Ultimately, in \Cref{sec:cacti}, we combine our previous results to obtain a labelling for cactus graphs. The density of the labelling depends highly on the largest cycle of the cactus graph. 
	We summarise and extend results in \Cref{sec:conclusion}, and conclude with some open questions and possible future work, along with some preliminary results. This includes considering different types of labellings than the one considered in this paper, as well the computational complexity of related problems.
	
	\section{Preliminaries}
	\label{sec:preliminaries}
	
	A graph $G= (V, E)$ is defined according to vertex set $V$ and edge set $E \subseteq \binom{V}{2}$. In this paper, all graphs are simple and undirected (except for the reachability graph defined below). 
	A temporal graph is a graph which can change over time, often modelled as $(G, \lambda)$ with graph $G$, called the footprint or underlying graph, and edge labelling $\lambda : E \rightarrow 2^\mathbb{N}$. The labels correspond to when the edges are present over the lifetime of the temporal graph. A pair $(e, \ell)$ with $e \in E$ and $\ell \in \lambda(e)$ is called a temporal edge, or contact. Reachability in temporal graphs is defined through temporal paths, also called journeys, which are incident temporal edges $j = (c_1, c_2, ..., c_k)$ such that for all $c_i = (e_i, \ell_i)$ with $i \in [2, k]$ we have that $\ell_i > \ell_{i-1}$. In other words, journeys obey the chronological order of time. A journey's length is the number of contacts it is composed of. If $e_1 = \{u, v\}$ and $u \not\in e_2$, and similarly $e_k = \{w, x\}$ and $x \not\in e_{k-1}$, then we say $u$ can reach $x$, or $x$ can be reached by $u$, through $j$. A journey $\mathcal{J}$ is said to cover a set of vertices $V$ if for all vertices $v$ in $V'$, $v$ is part of some contact of $\mathcal{J}$.
	For a label $\ell$ of temporal graph $\mathcal{G}$, $\mathcal{G}^{-\ell}$ corresponds to $\mathcal{G}$ without label $\ell$ (if other labels exist in $\mathcal{G}$ with the same value, then these remain; in other words, $\ell$ only represents a single label on a single edge in $\mathcal{G}^{-\ell}$). 
	
	A tree $T = (V, E)$ is a connected graph with $|E| = n-1$. A temporal branching $\mathcal{B} = (T, \lambda)$ with root $r$ is a tree $T$ with $|\lambda| = n-1$ such that vertex $r$ can reach all vertices, or equivalently for all incident edges $e_1$ and $e_2$, if edge $e_2$ is distanced further from root $r$ than edge $e_1$, then $\lambda(e_1) < \lambda(e_2)$. A tree $T = (V', E')$ of a graph $G = (V, E)$ is a subgraph of $G$ which is a tree, and it is spanning if $V' = V$. Similarly, a temporal branching $\mathcal{B} = (T, \lambda')$ with root $v$ of a temporal graph $\mathcal{G} = (G, \lambda)$ is a temporal subgraph of $\mathcal{G}$ which is a temporal branching, and it is spanning if $V(T) = V(G)$.
	It is possible to compute a temporal branching, or the reachability, of vertex $v$ by applying a search algorithm which, unlike Breadth-First Search or Depth-First Search, will prioritize the earliest edges incident to the discovered vertex set. For a more efficient version of this algorithm similar to Dijkstra's algorithm, see \cite{xuan2003computing}.
	
	
	The reachability graph $R(\mathcal{G})$ is defined on the same vertex set as $\mathcal{G}$ and an arc exists from $u$ to $v$ if and only if $u$ can reach $v$ in $\mathcal{G}$.
	If all vertices can reach all other vertices in $\mathcal{G}$, we say $\mathcal{G}$ is temporally connected. Note that a temporal graph is temporally connected if and only if the corresponding reachability graph is complete (with arcs in both directions).
	From \cite{casteigts2022simple}, two temporal graphs $\mathcal{G}_1$ and $\mathcal{G}_2$ are reachability-equivalent\footnote{Originally \textit{closure-equivalent}, but changed to reachability-equivalent for journal version (private message).} if reachability graphs $R(\mathcal{G}_1)$ and $R(\mathcal{G}_2)$ are isomorphic, denoted $\mathcal{G}_1 \sameR \mathcal{G}_2$. 
	
	A labelling is proper when no incident edges share a same label. This setting is used in many contexts, \textit{e.g.} historically for phone calls when at most one number could be phoned at the same time. It conveniently avoids the question of whether journeys are allowed to use multiple incident contacts having the same label. In \Cref{sec:conclusion} we briefly go over other types of labellings (see \cite{casteigts2022simple} for a study on different types of labellings), but for the rest of the paper, the labellings are supposed proper. A labelling is globally proper if no edges share a same label, and is incremental if all labels between 1 and $|\lambda|$ are used. Note that incremental implies globally proper, which implies proper. 
	Using terms from \cite{akrida}, a label $\ell$ in a temporal graph $\mathcal{G}$ is redundant if and only if it can be removed from $\mathcal{G}$ without modifying reachability, \textit{i.e.} $\mathcal{G} \sameR \mathcal{G}^{-\ell}$. Conversely, a label $\ell$ of $\mathcal{G}$ is necessary if and only if $\mathcal{G} \not\sameR \mathcal{G}^{-\ell}$. Note that a label is either redundant or necessary. A minimal labelling contains only necessary labels. Note that if a labelling is not minimal, it can be modified (by removing redundant labels) in polynomial time to become minimal\footnote{Many distinct minimal labellings may result from this technique, depending on the removal order.}. A temporal graph with a proper (\textit{resp.} minimal) labelling is a proper (\textit{resp.} minimal) temporal graph.
	
	The aim of this paper is to understand and construct dense temporally connected temporal graphs. In \cite{mertzios2019temporal}, the authors defined two measures of density for a temporal graph $\mathcal{G}$: the temporal cost $T(\mathcal{G})$, which is the total amount of labels in $\mathcal{G}$; and the temporality $\tau(\mathcal{G})$ which is the maximum amount of labels on an edge, among all edges. The former is intended as a global density measure, whereas the latter is more of a local one, potentially of interest for example in distributed or parallel computing. 
	The authors focussed on several connectivity constraints, among others temporal connectivity, as well as additional constraints such as restraining the lifetime of the temporal graph.
	In this paper, we only focus on temporal connectivity. 
	We thus adapt the measure of temporal cost in the following suiting manner. 
	\begin{itemize}
		\item Let $T^+(G)$ be the maximum temporal cost of graph $G$, \textit{i.e.} the maximum temporal cost $T(\mathcal{G} = (G, \lambda))$ of all proper minimal labellings $\lambda$ such that $\mathcal{G}$ is temporally connected;
		\item Let $T^+(\texttt{Class})$ be the maximum temporal cost of graph class \texttt{Class}, \textit{i.e.} the maximum value $x$ such that for all graphs $G$ of \texttt{Class}, $T^+(G) \geq x$;
		\item Let $T^+$ be the maximum temporal cost, \textit{i.e.} the maximum temporal cost $T^+(G)$ among all graphs $G$ on $n$ vertices.
	\end{itemize}
	
	Informally, the maximum temporal cost of a graph is the densest the given graph can be, the maximum temporal cost of a graph class the densest all graphs of the class can be, and the maximum temporal cost the densest any graph can be. 
	The three types of maximum temporality are defined analogously.
	Note that in order to avoid the trivial and unsatisfactory solution of assigning all natural numbers to all edges so as to obtain the densest labelling possible (with all labels being redundant), we naturally restrained the setting to minimal labellings.
	
	Since a labelling can assign no labels to any edge, a density result for class \texttt{C} translates as a lower bound for any class $\texttt{C}'$ if for all graphs $G' \in \texttt{C}'$, there exists $G \in C$ such that $G$ is an edge-deleted subgraph of $G'$. We say that such a class $\texttt{C}'$ subgraph dominates class \texttt{C}. For example, density results for \texttt{Trees} imply lower bounds for \texttt{Complete}, the class of complete graphs, since a labelling could simply ignore most edges and consider only a spanning tree of the complete graph.
	Reversely, a density result for class \texttt{C} implies an upper bound for any superclasses of \texttt{C}. For example, density results for \texttt{Cycles} are an upper bound for \texttt{Cacti}, because if by contradiction any cactus graph can be denser than cycles, then the same labelling can be used on cycles since they are cactus graphs. 
	Together, this means that density results directly transfer from one class \texttt{C} to superclass $\texttt{C}'$ if $\texttt{C}'$ subgraph dominates \texttt{C}. For example, density results for \texttt{Trees} directly transfer to \texttt{Connected}, the class of connected graphs. 
	These observations also mean that $T^+ = T^+(\texttt{Complete})$ and $\tau^+ = \tau^+(\texttt{Complete})$ unless the order of the graph $n$ is important (parity, primality, \textit{etc.}) in which case $T^+ = T^+(\texttt{S})$ and $\tau^+ = \tau^+(\texttt{S})$ for some subclass $\texttt{S}$ of \texttt{Complete}.
	
	We study three simple graph classes in this work, being \texttt{Trees} (\Cref{sec:trees}), \texttt{Cycles} (\Cref{sec:cycles}), and \texttt{Cacti} (\Cref{sec:cacti}), and in \Cref{sec:conclusion} superclasses are discussed.
	
	\subsection{Lower and upper bounds on $T^+$ and $\tau^+$}
	
	Kempe, Kleinberg and Kumar in \cite{KKK02} provide the following lower bound on the maximum temporal cost $T^+$. Consider a hypercube graph of order $n$ where the vertices represent all $\log n$-bit strings, and in which vertices $u$ and $v$ admit an edge of label $k \leq \log n$ when they agree on all bit positions but position $k$. The resulting temporal graph is minimal, and thus $T^+ \geq \tfrac{1}{2} n \log n$. With this preliminary result, they ask the open question whether denser graphs could admit denser minimal labellings. The question remained unanswered for 16 years until Axiotis and Fotakis in \cite{axiotis2016size} present a non-trivial construction of a dense and minimal temporal graph, the main idea being to make all labels of a clique of size $\tfrac{n}{3}$ necessary. The size of their construction is at most $\tfrac{1}{18} n^2 + \tfrac{3}{2}n + O(1)$, and is thus a lower bound for the maximum temporal cost $T^+$. (The linear term doesn't usually matter much, but we improve upon it later in the paper.)
	
	Contrary to $T^+$ and to the best of the authors' knowledge, no previous work has focussed on lower bounds for the maximum temporality $\tau^+$. There's the trivial lower bound of two, 
	which can easily be attained on the path graph of order three $P_3$ on vertices $v_1$, $v_2$ and $v_3$: each edge needs to have at least one label for temporal connectivity, furthermore if using only one label on each edge, then one of the labels will be larger than the other, meaning either $v_1$ cannot reach $v_3$ or vice versa, so another label is needed. 
	This lower bound is often attained in related work, for example: gossiping protocols such as in \cite{baker1972gossips, hajnal1972cure} use an edge twice (if not then the graph is of temporality 1 which is NP-hard to recognize\cite{gobel_label-connected_1991}); in \cite{mertzios2019temporal} the example minimal labelling for the cycle has temporality two (another labelling for the cycle is presented with temporality $n$, although this depends highly on the additional age restriction they consider and the graph being directed).
	
	The upper bounds are both obtained through the following observation:
	%
	%
	
	\begin{observation}
		\label{observation:label_not_in_branching_means_redundant}
		
		Take any $n$ spanning temporal branchings of temporal graph $\mathcal{G}$ such that all roots are distinct. Any label in $\mathcal{G}$ which is not part of any of these temporal branchings is redundant, as removing it doesn't change the branchings and thus doesn't affect reachability.
	\end{observation}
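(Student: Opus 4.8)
The plan is to prove the two reachability graphs $R(\mathcal{G})$ and $R(\mathcal{G}^{-\ell})$ are both complete, hence isomorphic, which is exactly the condition $\mathcal{G} \sameR \mathcal{G}^{-\ell}$ defining $\ell$ as redundant. First I would record the trivial monotonicity direction: since $\mathcal{G}^{-\ell}$ is a temporal subgraph of $\mathcal{G}$, every journey of $\mathcal{G}^{-\ell}$ is a journey of $\mathcal{G}$, so $R(\mathcal{G}^{-\ell}) \subseteq R(\mathcal{G})$ automatically. All the content therefore lies in showing that $\mathcal{G}^{-\ell}$ is still temporally connected.

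The counting step comes next. The hypothesis supplies $n$ spanning temporal branchings $\mathcal{B}_1, \dots, \mathcal{B}_n$ with pairwise distinct roots $r_1, \dots, r_n$; as the graph has exactly $n$ vertices, these roots are precisely all of $V$. By the definition of a spanning temporal branching, each root $r_i$ reaches every vertex of $\mathcal{G}$ using only contacts belonging to $\mathcal{B}_i$. In particular this already shows $R(\mathcal{G})$ is the complete reachability graph (with arcs in both directions).

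The key step is the survival argument. Because $\ell$ lies in none of the branchings, every $\mathcal{B}_i$ is still a temporal subgraph of $\mathcal{G}^{-\ell}$, and it remains a spanning temporal branching there (its contacts, labels, and tree structure are untouched). Hence in $\mathcal{G}^{-\ell}$ each $r_i$ still reaches all vertices, and since the $r_i$ exhaust $V$, every vertex reaches every other in $\mathcal{G}^{-\ell}$; thus $R(\mathcal{G}^{-\ell})$ is also complete. Combining this with $R(\mathcal{G}^{-\ell}) \subseteq R(\mathcal{G})$ from the first step, both reachability graphs equal the complete digraph on $V$, so they are isomorphic and $\ell$ is redundant.

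I do not expect a genuine obstacle: the statement essentially contains its own proof, and the surviving-branching observation does all the work. The only point requiring a little care is the bookkeeping of the two inclusions, namely recognising that label removal can only shrink reachability so that the burden reduces entirely to checking that the untouched branchings keep $\mathcal{G}^{-\ell}$ temporally connected.
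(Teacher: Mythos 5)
Your proposal is correct and follows exactly the reasoning the paper gives inline in the observation itself (removing a label outside all $n$ branchings leaves each branching intact, and since the $n$ distinct roots exhaust $V$, reachability — hence the complete reachability graph — is preserved). You merely spell out the monotonicity of reachability under label removal and the counting of roots, which the paper leaves implicit; there is no substantive difference.
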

	
	
	Note that \Cref{observation:label_not_in_branching_means_redundant} does not prove that all labels which are part of the temporal branchings are necessary. In fact, it is possible to have temporally connected graphs (and thus $n$ spanning temporal branchings with distinct roots) with only redundant labels, for example the temporal graph consisting of the complete graph with label 1 on each edge. What \Cref{observation:label_not_in_branching_means_redundant} does imply is that all necessary labels must be part of (some of) these temporal branchings. 
	This has some nice implications for minimal temporally connected graphs. 
	
	
	\begin{lemma}
		\label{lemma:min_tc_graph_equals_its_span_temp_branch}
		A minimal temporally connected graph $\mathcal{G}$ equals the union of any $n$ spanning temporal branchings with distinct roots of $\mathcal{G}$.
	\end{lemma}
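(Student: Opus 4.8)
The plan is to prove the two inclusions separately, with the substantive direction being an immediate application of \Cref{observation:label_not_in_branching_means_redundant}. First I would fix any collection of $n$ spanning temporal branchings $\mathcal{B}_1, \dots, \mathcal{B}_n$ of $\mathcal{G}$ with pairwise distinct roots; such a collection exists because temporal connectivity guarantees a spanning temporal branching rooted at each of the $n$ vertices (\textit{e.g.}\ via the earliest-edge search mentioned above). Write $U$ for the union of these branchings, regarded as the temporal subgraph of $\mathcal{G}$ whose contacts are exactly those appearing in some $\mathcal{B}_i$. By construction every contact of $U$ is a contact of $\mathcal{G}$, giving the inclusion $U \subseteq \mathcal{G}$.

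For the reverse inclusion $\mathcal{G} \subseteq U$, I would argue by contradiction on the labels. Suppose some label $\ell$ of $\mathcal{G}$ does not appear in any $\mathcal{B}_i$, that is, $\ell \notin U$. Then \Cref{observation:label_not_in_branching_means_redundant} applies verbatim: since $\ell$ is absent from all $n$ branchings, removing it leaves each $\mathcal{B}_i$ intact, so reachability is unchanged and $\ell$ is redundant, i.e.\ $\mathcal{G} \sameR \mathcal{G}^{-\ell}$. But $\mathcal{G}$ is minimal, so by definition it contains no redundant labels --- every label is necessary. This is a contradiction, so no such $\ell$ exists, every label of $\mathcal{G}$ lies in $U$, and hence $\mathcal{G} \subseteq U$.

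Combining the two inclusions yields $\mathcal{G} = U$, so $\mathcal{G}$ equals the union of the chosen branchings. Since the argument never used any property of the specific collection beyond having $n$ distinct roots, the conclusion holds for \emph{any} such collection. I do not expect a genuine obstacle here: the only point requiring care is bookkeeping, namely ensuring that ``equals'' is read at the level of contacts (edge--label pairs) rather than edges, so that the comparison with $U$ is well-posed and the appeal to \Cref{observation:label_not_in_branching_means_redundant} lines up exactly with the definition of minimality.
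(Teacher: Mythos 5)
Your proposal is correct and follows essentially the same route as the paper: the paper's proof is exactly the contradiction you give for the nontrivial inclusion (a label outside all branchings is redundant by \Cref{observation:label_not_in_branching_means_redundant}, contradicting minimality), with the trivial inclusion left implicit. Your added remarks on the existence of the branchings and on reading equality at the level of contacts are harmless elaborations, not a different argument.
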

	
	\begin{proof}
		By contradiction, suppose that the union of some $n$ spanning temporal branchings with distinct roots of $\mathcal{G}$ does not equal $\mathcal{G}$. This implies at least some label of $\mathcal{G}$ isn't part of the branchings, which by  \Cref{observation:label_not_in_branching_means_redundant} means it is redundant. However, $\mathcal{G}$ is minimal so no redundant labels exist which is a contradiction.
	\end{proof}
	
	Thus, in order to create a dense but minimal temporally connected graph, we need to create spanning temporal branchings which interfere as little as possible with each other, \textit{i.e.} make sure they are as disjoint as possible.
	
	\begin{theorem}
		\label{theorem:Tplus_upperbound}
		The maximum temporal cost $T^+ \leq n^2 - n - 1$.
	\end{theorem}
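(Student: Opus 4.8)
The plan is to combine \Cref{lemma:min_tc_graph_equals_its_span_temp_branch} with a counting argument and then shave off the last label by forcing two branchings to overlap. Let $\mathcal{G} = (G,\lambda)$ be a minimal temporally connected graph on $n \ge 2$ vertices attaining $T^+$. Since a spanning temporal branching has exactly $n-1$ labels, any family of $n$ of them with distinct roots contributes at most $n(n-1) = n^2 - n$ labels counted with multiplicity; by \Cref{lemma:min_tc_graph_equals_its_span_temp_branch} their union is all of $\mathcal{G}$, so this immediately gives $T(\mathcal{G}) \le n^2 - n$. The whole difficulty is to save the final label, i.e. to exhibit one contact that is counted at least twice across the family.

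First I would fix the family to be the $n$ \emph{earliest-arrival} (foremost) spanning branchings $B_{v_1},\dots,B_{v_n}$, one rooted at each vertex, in which each vertex is reached as early as possible from its root; these are genuine spanning temporal branchings because $\mathcal{G}$ is temporally connected. I would then single out a contact $\sigma = (\{c,d\}, \ell_{\min})$ whose label value $\ell_{\min}$ is globally minimum over $\mathcal{G}$.

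The key step is to show that $\sigma$ lies in both $B_c$ and $B_d$. Consider $B_c$: the earliest time at which $c$ can reach $d$ is exactly $\ell_{\min}$, since every journey into $d$ ends with a contact incident to $d$ of value at least $\ell_{\min}$, while the single contact $\sigma$ realises arrival time $\ell_{\min}$. Moreover, properness forces $\sigma$ to be the \emph{only} contact incident to $d$ of value $\ell_{\min}$ (incident edges cannot repeat a label) and no contact has smaller value, so $\sigma$ is the unique way to attain this earliest arrival; hence the foremost branching $B_c$ uses $\sigma$ as the parent edge of $d$. The symmetric argument applied to $B_d$ and the earliest arrival of $c$ shows $\sigma \in B_d$ as well. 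As $c \ne d$, the contact $\sigma$ is counted in two distinct branchings.

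Finally I would conclude that the $n(n-1)$ label-slots of the family realise at most $n(n-1) - 1$ distinct labels, because $\sigma$ has multiplicity at least two; together with $\bigcup_i B_{v_i} = \mathcal{G}$ from \Cref{lemma:min_tc_graph_equals_its_span_temp_branch}, this yields $T(\mathcal{G}) \le n^2 - n - 1$, and therefore $T^+ \le n^2 - n - 1$. I expect the main obstacle to be the rigorous justification that the globally earliest contact must appear in the foremost branchings of \emph{both} its endpoints: this is where minimality (via the lemma), the earliest-arrival property, and the properness of the labelling all genuinely interact, and it is also the step most sensitive to degenerate situations such as very small $n$ or several edges carrying the minimum value, which I would handle by appealing to properness to rule out ties at the endpoints of $\sigma$.
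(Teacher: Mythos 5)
Your proof is correct and follows essentially the same route as the paper: the same union-of-$n$-branchings counting via \Cref{lemma:min_tc_graph_equals_its_span_temp_branch}, followed by forcing the globally minimum contact to be shared by the two branchings rooted at its endpoints. The only difference is in execution — the paper obtains the overlap by an exchange argument on an arbitrary branching family, whereas you obtain it directly by fixing foremost branchings and using properness to make the parent contact of the opposite endpoint unique — but the key idea is identical.
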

	
	\begin{proof}
		By \Cref{lemma:min_tc_graph_equals_its_span_temp_branch}, a minimal temporally connected graph equals the union of any of its $n$ distinct-root spanning temporal branchings. Thus, the worst-case scenario for the total number of labels in such a graph is when these temporal branchings are all disjoint. This results in a labelling using $n-1$ labels for each branching (as they are spanning), of which there are $n$, resulting in a total of $n^2-n$ labels.
		
		Consider however the smallest label $\ell^-$ used in the graph, say on edge $e = \{u, v\}$. This label can only be part of the spanning temporal branching of root $u$, denoted $\mathcal{B}_u$, or of the spanning temporal branching of root $v$, denoted $\mathcal{B}_v$, since it's unreachable from any other vertex. Suppose \textit{w.l.o.g.} $\ell^-$ is part of $\mathcal{B}_u$. We know $v$ must reach $u$ through some journey in $\mathcal{B}_v$ arriving at some time $\ell$. Note that $\ell$ can be removed from $\mathcal{B}_v$, and $\ell^-$ added. Indeed, for all $w$, any journey $v \leadsto w$ in $\mathcal{B}_v$ is either maintained by the swap, or passes through $u$ earlier with $\ell^-$, meaning $\mathcal{B}_v$ remains a spanning temporal branching. Thus, label $\ell^-$ can be considered part of two spanning temporal branchings, decrementing the total amount of labels to $n^2 - n - 1$.  
		
		
	\end{proof}
	
	\begin{theorem}
		\label{theorem:tauplus_upperbound}
		The maximum temporality $\tau^+ \leq n-1$.
	\end{theorem}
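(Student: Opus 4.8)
The plan is to bound, for an arbitrary edge $e = \{u,v\}$, the number of labels placed on it, and show this is never more than $n-1$; since $\tau(\mathcal{G})$ is just the maximum number of labels over all edges, this immediately yields $\tau^+ \le n-1$. The engine is \Cref{lemma:min_tc_graph_equals_its_span_temp_branch}: a minimal temporally connected $\mathcal{G}$ equals the union of \emph{any} $n$ spanning temporal branchings with distinct roots. The key structural observation is that each such branching carries exactly $n-1$ labels on its $n-1$ tree edges, so it assigns \emph{at most one} label to $e$ (the single label on $e$ if $e$ lies in that branching's tree, and none otherwise). Because the union of the chosen branchings is all of $\mathcal{G}$, every label on $e$ is contributed by one of the $n$ branchings, which gives the crude bound of $n$. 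The entire difficulty is shaving off the last label.

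To remove it, I would exploit the freedom in \Cref{lemma:min_tc_graph_equals_its_span_temp_branch}, namely that the union recovers $\mathcal{G}$ for \emph{any} choice of distinct-root spanning branchings, and choose the branchings rooted at $u$ and at $v$ with care. Let $\ell_1$ be the smallest label on $e$. I would build a foremost (earliest-arrival) spanning branching $\mathcal{B}_u$ from $u$ and inspect the foremost arrival time of $v$. Since the direct contact $(e,\ell_1)$ already realises arrival at time $\ell_1$, this arrival time is at most $\ell_1$, leaving two cases.

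If $v$'s foremost arrival from $u$ is strictly smaller than $\ell_1$, then $\mathcal{B}_u$ reaches $v$ via another edge and, as $u$ is its root, never uses $e$ at all; hence only the remaining $n-1$ branchings can place labels on $e$, so $e$ carries at most $n-1$ labels. Otherwise the foremost arrival of $v$ equals $\ell_1$, and I would run the symmetric test on the foremost arrival of $u$ from $v$: if it is strictly smaller than $\ell_1$ the symmetric argument closes the case, and if it too equals $\ell_1$ then $(e,\ell_1)$ is simultaneously a valid foremost tree edge for $\mathcal{B}_u$ (to reach $v$) and for $\mathcal{B}_v$ (to reach $u$). Forcing both branchings to use this one contact makes $\mathcal{B}_u$ and $\mathcal{B}_v$ witness the \emph{same} label $\ell_1$ on $e$, so the $n$ branchings jointly cover at most $1 + (n-2) = n-1$ distinct labels on $e$.

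The main obstacle, and the step I would write out most carefully, is this last case: I must check that committing the shared contact $(e,\ell_1)$ as the tree edge to the far endpoint still produces genuinely \emph{spanning} foremost branchings. This holds because foremost arrival times are independent of tie-breaking, so fixing $v$'s predecessor to be $e$ (a legitimate foremost choice precisely when $v$'s arrival time is $\ell_1$) while routing every other vertex along its own foremost predecessor preserves all arrival times and hence spanning. A minor point to record is the claim that each spanning branching contributes at most one label to $e$, which is immediate from $|\lambda| = n-1$ matching the $n-1$ tree edges. Collecting the cases bounds the labels on every edge by $n-1$, giving $\tau^+ \le n-1$.
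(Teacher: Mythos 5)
Your proposal is correct and follows essentially the same route as the paper: invoke \Cref{lemma:min_tc_graph_equals_its_span_temp_branch}, note that each of the $n$ distinct-root spanning branchings contributes at most one label to any fixed edge $e=\{u,v\}$, and then shave one off by arguing that the branchings rooted at $u$ and at $v$ can be taken to share the earliest label on $e$ (or one of them to avoid $e$ entirely). Your case analysis via foremost arrival times is somewhat more careful than the paper's brief "otherwise the later of the two would be redundant" remark, but the underlying idea is identical.
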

	
	\begin{proof}
		By \Cref{lemma:min_tc_graph_equals_its_span_temp_branch}, a minimal temporally connected graph equals the union of any of its $n$ distinct-root spanning temporal branchings. Thus, the worst-case number of labels on an edge in such a graph is when the spanning temporal branchings are all label-disjoint  and all use one same edge $e=\{u, v\}$, resulting in an edge having $1$ label for each branching, of which there are $n$, resulting in a total of $n$ labels.
		
		Note however that the label from the branching corresponding to root $u$, and the label from branching corresponding to root $v$, are necessarily the same label, since otherwise the later of the two would be redundant, as both branchings can use the earlier label. Thus edge $e$ would have $n-1$ labels.
	\end{proof}

	In the rest of the paper, we improve upon the lower bounds. These eventually meet the provided upper bounds in an asymptotic manner, meaning they are tight up to a constant factor. We obtained these results by analysing the maximum temporal costs and temporality of simple families of graphs, the first of which being tree graphs.
	
	\section{Tree graphs}
	\label{sec:trees}
	
	In this section we prove the following labelling is densest possible for tree graphs (be it considering maximum temporal cost $T^+$ or maximum temporality $\tau^+$). This labelling was inspired by the pivot technique used in \cite{casteigts2021temporal} which in turn bears resemblance to Kosaraju-Sharir's algorithm for directed graphs \cite{sharir1981strong}. It also resembles gossiping strategies from \cite{baker1972gossips, hajnal1972cure}.
	
	\begin{definition}[Pivot labelling of tree graph $G$]
		See also \Cref{fig:pivot_labelling}.
		Let some arbitrary vertex $p$ be the pivot, and let integer $\ell = 1$. Treat the other vertices of $G$ in a reverse breadth-first search order from $p$ (\textit{i.e.} treating the furthest away vertices from $p$ first) in the following manner. For currently treated vertex $v$, determine the path to $p$ in $G$. On the edge of this path closest to $v$, assign label $\ell$ and then increment $\ell$.
		Let vertex $v$ be treated. Upon having treated all vertices, let $v'$ be the last vertex treated. Note that $\ell$ now equals $n$.		
		Now repeat the process but in a normal breadth-first search order from $p$, continuing with $\ell = n$.
		Finally, remove the largest label on edge $\{v', p\}$.
	\end{definition}
	
	\begin{figure}[h]
		\begin{subfigure}{.5\textwidth}
			\includegraphics[width=\textwidth]{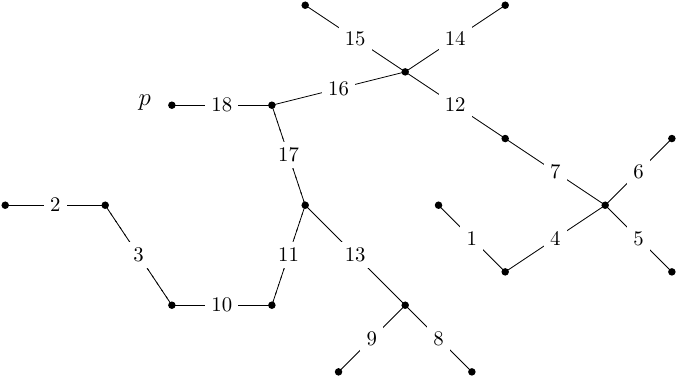}
			\caption{First (reverse) breadth-first search labelling.}
		\end{subfigure}
		\hfill
		\begin{subfigure}{.5\textwidth}
			\includegraphics[width=\textwidth]{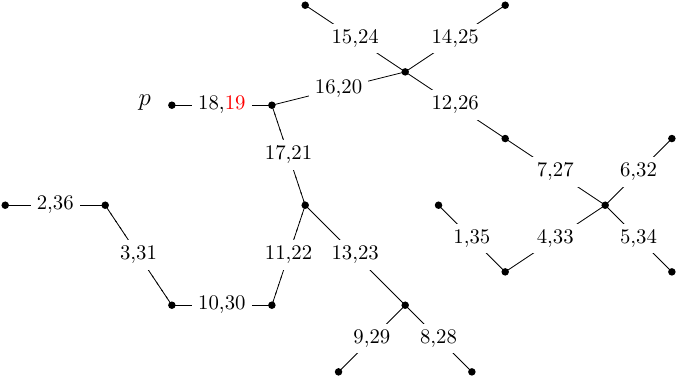}
			\caption{Adding the second breadth-first search labelling.}
		\end{subfigure}
		\caption{The pivot labelling of an example tree graph. A first labelling converging to pivot vertex $p$ is shown, which is then complemented by a second broadcasting labelling from $p$. Label 19 (shown in red) is redundant and removed.
		}
		\label{fig:pivot_labelling}
	\end{figure}
	
	Note that the pivot labelling is proper since it is incremental (there is a skip concerning the removed label, but larger labels can just be shifted to obtain an incremental labelling). The resulting temporal graph 
	is temporally connected since by design all vertices can reach pivot vertex $p$ at time $n-1$, and starting at time $n-1$, vertex $p$ can reach all vertices. It is also a minimal labelling since removing any label $\leq n-1$ on a path from a leaf vertex $f$ to $p$ reduces the reachability of $f$, and removing any label $> n-1$ makes it so $f$ cannot be reached by some other leaf vertex.
	
	\begin{lemma}
		\label{lemma:trees_Tplus_geq_2n3}
		$T^+(\texttt{Trees}) \geq 2n-3$.
	\end{lemma}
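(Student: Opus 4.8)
The plan is to use the pivot labelling defined just above as an explicit witness. Recall that $T^+(\texttt{Trees})$ is the largest $x$ with $T^+(G)\ge x$ for every tree $G$ on $n$ vertices, and that $T^+(G)$ is itself a maximum taken over all proper minimal labellings making $G$ temporally connected. Hence it suffices to exhibit, for an arbitrary tree $G$, a single such labelling of temporal cost at least $2n-3$. Since the pivot labelling is defined for any tree and any choice of pivot $p$, it furnishes exactly such a witness uniformly over the whole class, so the argument reduces to (i) counting its labels and (ii) confirming it meets the three requirements of being proper, minimal, and temporally connected.

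For the count, I would observe that the first (reverse breadth-first) phase treats each of the $n-1$ non-pivot vertices exactly once and assigns one new label per treated vertex, using the values $1,\dots,n-1$; symmetrically, the second (breadth-first) phase assigns the values $n,\dots,2n-2$, again one per non-pivot vertex. The final deletion of the largest label on $\{v',p\}$ removes a single label, leaving $2(n-1)-1=2n-3$ labels. Thus the pivot labelling has temporal cost exactly $2n-3$, which already yields the desired lower bound once the labelling is shown to be valid.

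For validity, the three properties are precisely those established in the paragraph preceding the statement: properness follows from the labelling being incremental; temporal connectivity from every vertex reaching $p$ by time $n-1$ and $p$ reaching every vertex from time $n-1$ onward; and minimality from the fact that deleting any phase-one label cuts some leaf's journey to $p$ while deleting any phase-two label stops some leaf from being reached. I would simply invoke these. The one point I expect to require genuine care, and hence the main obstacle, is justifying the final deletion: one must check that the discarded label is truly redundant, so that neither minimality nor connectivity is lost. This holds because $v'$ is a neighbour of $p$, so the unique $p$--$v'$ path is the single edge $\{v',p\}$, and that edge already carries the phase-one label $n-1$; since $p$ can broadcast from time $n-1$, it reaches $v'$ using this label, rendering the larger phase-two label on $\{v',p\}$ redundant. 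With the deletion justified, the pivot labelling is a valid proper minimal temporally connected labelling of cost $2n-3$ on an arbitrary tree, whence $T^+(G)\ge 2n-3$ for all trees and therefore $T^+(\texttt{Trees})\ge 2n-3$.
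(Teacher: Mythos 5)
Your proposal is correct and follows the same route as the paper: the pivot labelling is the witness, the two breadth-first phases assign $2(n-1)$ labels, and removing the one redundant label on $\{v',p\}$ leaves $2n-3$. The paper's own proof is just the count (the validity of the labelling having been argued in the preceding paragraph), so your extra justification of the final deletion is consistent elaboration rather than a divergence.
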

	
	\begin{proof}
		Since $\ell = 2n-1$ at the end of the labelling, a total of $2n-2$ labels are assigned, of which one is removed.
	\end{proof}
	
	\begin{lemma}
		\label{lemma:trees_tauplus_geq_2}
		$\tau^+(\texttt{Trees}) \geq 2$.
	\end{lemma}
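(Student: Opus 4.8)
The plan is to reuse the pivot labelling introduced just above, which the surrounding text has already certified to be proper, minimal, and temporally connected. Since $\tau^+(G)$ is defined as a maximum over all valid (proper, minimal, temporally connected) labellings, it suffices to exhibit a single such labelling whose temporality is at least two; the pivot labelling will be that witness. So the whole task reduces to computing the temporality of the pivot labelling.

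First I would pin down how labels land on edges in each of the two passes. The key observation is that when a vertex $v \neq p$ is treated, the new label is placed on the edge of the $v$-to-$p$ path incident to $v$, which is exactly the parent edge of $v$ in the BFS tree rooted at $p$. The map sending each non-pivot vertex to its parent edge is a bijection between the $n-1$ non-pivot vertices and the $n-1$ edges of the tree. Consequently each pass deposits exactly one label on every edge, so before the final deletion every edge carries precisely two labels (one from the reverse-BFS pass, one from the forward-BFS pass). I would then note that the final step removes the largest label only on the single edge $\{v',p\}$, leaving the remaining $n-2$ edges untouched with two labels each; no edge ever exceeds two labels, since each pass contributes at most one per edge.

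From here the conclusion is immediate: for any tree on $n \geq 3$ vertices there are $n-2 \geq 1$ edges retaining two labels, so the pivot labelling has temporality exactly two, whence $\tau^+(G) \geq 2$; as this holds for every such tree, $\tau^+(\texttt{Trees}) \geq 2$. As an alternative I could simply invoke \Cref{lemma:trees_Tplus_geq_2n3}: the labelling places $2n-3$ labels on $n-1$ edges, and $2n-3 > n-1$ for $n \geq 3$, so pigeonhole forces some edge to receive at least two labels, giving the same bound with less bookkeeping.

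The only real subtlety — and the step I would take care over — is the parent-edge bijection together with checking that the single deletion reduces the count on just one edge, so that the guaranteed temperature-two edge genuinely survives. I would also flag the degenerate case $n = 2$ (the single-edge tree $P_2$), where the lone edge needs only one label and a second would be redundant; there the bound $\tau^+ \geq 2$ fails, so the statement is to be read for trees with $n \geq 3$, consistent with the $n-2 \geq 1$ count above.
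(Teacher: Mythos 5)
Your proposal is correct and matches the paper's argument, which likewise just exhibits the pivot labelling as a witness assigning two labels to some edge of any non-trivial tree. Your extra bookkeeping (the parent-edge bijection, the count of $n-2$ surviving two-label edges, and the $n=2$ caveat) only makes explicit what the paper compresses into one sentence.
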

	
	\begin{proof}
		For all non-trivial tree graphs, the pivot labelling assigns 2 labels to some edge. 
	\end{proof}
	
	To prove \Cref{lemma:trees_Tplus_geq_2n3,lemma:trees_tauplus_geq_2} are tight, \textit{i.e.} no denser labellings exist, we first present a lemma focussing on bridge edges, which is then applied on all the edges of tree graphs. 
	
	\begin{lemma}
		\label{lemma:bridge_edge}
		For any bridge edge $e$ of graph $G$ and any minimal labelling $\lambda$ such that $\mathcal{G} = (G, \lambda)$ is temporally connected, $\lambda$ can assign at most two labels to $e$.
	\end{lemma}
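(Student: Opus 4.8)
The plan is to fix the bridge $e=\{u,v\}$, let $A$ and $B$ be the two connected components of $G-e$ with $u\in A$ and $v\in B$, and exploit the fact that \emph{every} journey between $A$ and $B$ must traverse $e$. Each label on $e$ can therefore only ever be used to cross in one of the two directions $u\to v$ (serving $A\to B$ reachability) or $v\to u$ (serving $B\to A$ reachability). Since removing a label lying on a bridge can only destroy reachability between the two sides, any \emph{necessary} label on $e$ is necessary for exactly one of these two directions; and by minimality of $\lambda$ every label on $e$ is necessary. It thus suffices to show that at most one label of $e$ is necessary for the $A\to B$ direction and, symmetrically, at most one for the $B\to A$ direction.

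First I would reduce cross-bridge reachability to a one-dimensional interval condition. For $a\in A$ let $f(a)$ be the earliest time $a$ can reach $u$ (detours through $B$ never help, since one must already be at $u$ in order to leave $A$, so this earliest time is realised inside $A$); for $b\in B$ let $g(b)$ be the latest time at which one may depart from $v$ and still reach $b$ inside $B$. A short monotonicity argument then shows that $a$ reaches $b$ if and only if $e$ carries some label $\ell$ with $f(a) < \ell \le g(b)$: crossing as early as possible after reaching $u$ dominates, so no multi-crossing journey can do better than a single crossing at such an $\ell$.

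The crux is then a nested-interval observation. Let $F=\max_{a\in A} f(a)$ and $r=\min_{b\in B} g(b)$, attained at some $a^\star$ and $b^\star$. Temporal connectivity forces $a^\star$ to reach $b^\star$, so by the reduction there is a label $\ell^\star$ of $e$ with $F<\ell^\star\le r$. But then for \emph{every} pair $(a,b)$ we have $f(a)\le F<\ell^\star\le r\le g(b)$, so the single label $\ell^\star$ already witnesses every $A\to B$ pair. Consequently, deleting any label of $e$ other than $\ell^\star$ cannot break an $A\to B$ pair, i.e. no label of $e$ besides $\ell^\star$ is necessary for the $A\to B$ direction. The symmetric argument, with the roles of $A$ and $B$ (and of $u$ and $v$) exchanged, yields at most one necessary label for $B\to A$. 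Hence $e$ has at most two necessary labels, and by minimality at most two labels in total.

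I expect the main obstacle to be the reduction step — making the interval characterisation airtight, in particular verifying that detours (journeys crossing $e$ several times) are genuinely useless in both directions, and handling the degenerate cases where $A$ or $B$ is a single vertex. Everything after that is the clean combinatorial observation that the ``hardest'' pair pins down a single universally useful label per direction; alternatively, one could phrase the same fact through \Cref{lemma:min_tc_graph_equals_its_span_temp_branch}, choosing the $n$ distinct-root spanning temporal branchings so that all $A$-rooted ones cross $e$ at $\ell^\star$ and all $B$-rooted ones at the symmetric label, whence the union carries at most two labels on $e$.
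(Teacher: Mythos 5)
Your proposal is correct and follows essentially the same route as the paper's proof: your $F=\max_{a\in A}f(a)$ and $r=\min_{b\in B}g(b)$ are exactly the paper's $t^-_u$ (earliest time all of $\mathcal{G}_1$ has reached $u$) and $t^+_v$ (latest time from which $v$ still reaches all of $\mathcal{G}_2$), and in both arguments temporal connectivity of the extremal pair forces a single label in the interval that serves every $A\to B$ pair, with the symmetric label handling $B\to A$. The only differences are cosmetic (you phrase it as "at most one necessary label per direction" rather than "all other labels are redundant, contradicting minimality", and you spell out the interval reduction that the paper leaves implicit).
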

	
	\begin{proof}
		See also \Cref{fig:bridge_edge}. Consider a temporally connected graph $\mathcal{G}$ with bridge edge $e = \{u,v\}$, separating $\mathcal{G}$ into two temporal subgraphs $\mathcal{G}_1$ (with vertex $u$) and $\mathcal{G}_2$ (with $v$). 
		Suppose by contradiction that the labelling $\lambda$ of $\mathcal{G}$ assigns more than two labels to edge $e$, say $k > 2$ labels $\ell_1, \ell_2, ..., \ell_k$, and that this labelling is minimal.
		Define $t^-_u$ to be the earliest time at which all vertices in $\mathcal{G}_1$ are able to reach $u$. Similarly, define $t^+_v$ to be the latest time at which all vertices in $\mathcal{G}_2$ can be reached by $v$. Since $\mathcal{G}$ is temporally connected, there exists some label $\ell_i$ of $e$ such that $t^-_u < \ell_i < t^+_v$. Keeping $\ell_i$ is thus sufficient for maintaining reachability from all vertices in $\mathcal{G}_1$ to all vertices in $\mathcal{G}_2$.
		A symmetrical argument can be used to find label $\ell_j$ which is sufficient for maintaining reachability from all vertices in $\mathcal{G}_2$ to all vertices in $\mathcal{G}_1$. Together, $\ell_i$ and $\ell_j$ are thus sufficient for reachability concerning journeys using edge $e$, and edge $e$ can trivially be ignored for reachability between vertices in $\mathcal{G}_1$ (\textit{resp.} $\mathcal{G}_2$). 
		This results in all other labels on edge $e$ being redundant, which contradicts our assumption of $\lambda$ being a minimal labelling.
	\end{proof}
	
	\begin{figure}[h]
		\centering
		\includegraphics[width=.7\textwidth]{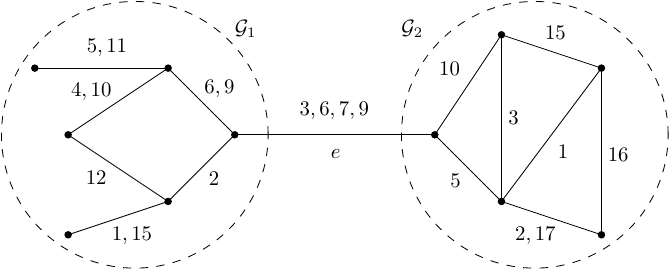}
		\caption{Example temporal graph with bridge edge $e$ dividing graph into two temporal subgraphs $\mathcal{G}_1$ and $\mathcal{G}_2$. Here we have $t_u^- = 6$, $t_v^+ = 10$, and $t_v^- = 5$, $t_u^+ = 8$. All labels of $e$ except for some label $6 < l_i < 10$ and some label $5 < l_j < 8$ can thus be removed from edge $e$. 
			Note that $l_i$ and $l_j$ may both be label 7 in this example. 
		}
		\label{fig:bridge_edge}
	\end{figure}
	
	\begin{corollary}
		\label{corollary:trees_tauplus_leq_2}
		$\tau^+(\texttt{Trees}) \leq 2$ .
	\end{corollary}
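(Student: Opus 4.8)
The plan is to reduce the statement directly to \Cref{lemma:bridge_edge} by exploiting the fact that trees are precisely the connected graphs in which \emph{every} edge is a bridge. First I would recall the elementary structural fact: in a tree $T=(V,E)$, deleting any single edge $e$ disconnects $T$ into exactly two components, so each edge of a tree is a bridge edge in the sense required by \Cref{lemma:bridge_edge}. This is the only graph-theoretic ingredient needed.

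Next I would fix an arbitrary tree $G$ together with an arbitrary proper minimal labelling $\lambda$ for which $\mathcal{G}=(G,\lambda)$ is temporally connected. Applying \Cref{lemma:bridge_edge} to each edge $e$ of $G$ (legitimate since every such $e$ is a bridge), I conclude that $\lambda$ assigns at most two labels to $e$. As this holds uniformly over all edges, the temporality satisfies $\tau(\mathcal{G}) = \max_{e \in E}|\lambda(e)| \leq 2$.

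Finally I would lift this from a single labelling to the density measure. Because the bound $\tau(\mathcal{G}) \leq 2$ holds for \emph{every} proper minimal temporally connected labelling $\lambda$ of $G$, we obtain $\tau^+(G) \leq 2$ for every tree $G$. By the class-level definition of maximum temporality (the largest value attainable by all graphs in the class), this immediately gives $\tau^+(\texttt{Trees}) \leq 2$.

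I do not expect a substantive obstacle, since essentially all the work is carried by \Cref{lemma:bridge_edge}; the corollary is a direct specialization. The only point that warrants a little care is correctly unwinding the definition of $\tau^+(\texttt{Trees})$: rather than having to construct a specific ``worst-case'' tree, the per-edge bridge bound already yields an upper bound that holds simultaneously for \emph{all} trees and for all of their proper minimal temporally connected labellings, which is exactly what the class-level upper bound requires. Combined with the matching lower bound of \Cref{lemma:trees_tauplus_geq_2}, this pins down $\tau^+(\texttt{Trees}) = 2$.
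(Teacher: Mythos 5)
Your proof is correct and follows exactly the paper's argument: every edge of a tree is a bridge, so \Cref{lemma:bridge_edge} caps each edge at two labels in any minimal temporally connected labelling, giving $\tau^+(\texttt{Trees}) \leq 2$. The extra care you take in unwinding the class-level definition of $\tau^+$ is sound but the paper treats it as immediate.
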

	
	\begin{proof}
		Tree graphs contain only bridge edges, so by \Cref{lemma:bridge_edge} no edge of a tree graph may have more than two labels in a minimal labelling.
	\end{proof}
	
	\begin{theorem}
		\label{theorem:trees_tauplus}
		$\tau^+(\texttt{Trees}) = 2$.
	\end{theorem}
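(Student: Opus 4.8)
The plan is to prove this equality by a sandwich argument, combining the two bounds already established in the excerpt. Since $\tau^+(\texttt{Trees})$ is defined as the largest value $x$ for which every tree $G$ satisfies $\tau^+(G) \geq x$, I need one inequality showing that $2$ is achievable as such a uniform guarantee, and a second showing that nothing larger than $2$ can be.

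First, for the lower bound $\tau^+(\texttt{Trees}) \geq 2$, I would simply invoke \Cref{lemma:trees_tauplus_geq_2}, which exhibits the pivot labelling assigning two labels to some edge on every non-trivial tree; thus $x = 2$ is a valid uniform lower guarantee. Second, for the upper bound $\tau^+(\texttt{Trees}) \leq 2$, I would invoke \Cref{corollary:trees_tauplus_leq_2}: since every edge of a tree is a bridge, \Cref{lemma:bridge_edge} forbids any edge from carrying more than two labels in a minimal temporally connected labelling, so $\tau^+(G) \leq 2$ holds for every tree $G$. Consequently no value $x > 2$ can satisfy the defining condition (it would require $\tau^+(G) \geq x > 2$ for all trees, contradicting the corollary), forcing $\tau^+(\texttt{Trees}) \leq 2$.

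Combining the two inequalities yields $\tau^+(\texttt{Trees}) = 2$. I do not expect any genuine obstacle here: all of the substantive work lies in \Cref{lemma:bridge_edge} (the bridge-edge redundancy argument) and in the verification that the pivot labelling is minimal and attains temporality two, both of which precede this statement. The final proof is therefore just the assembly of the matching bounds, and the only point worth stating explicitly is the translation between the per-graph bounds $\tau^+(G) \leq 2$ and $\tau^+(G) \geq 2$ and the class-level quantity $\tau^+(\texttt{Trees})$ via its definition.
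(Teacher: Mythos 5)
Your proposal is correct and matches the paper's proof exactly: the paper also establishes the theorem by citing \Cref{lemma:trees_tauplus_geq_2} for the lower bound and \Cref{corollary:trees_tauplus_leq_2} for the upper bound. Your additional remark about translating per-graph bounds into the class-level quantity is a harmless elaboration of the same argument.
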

	
	\begin{proof}
		\Cref{lemma:trees_tauplus_geq_2} and \Cref{corollary:trees_tauplus_leq_2}.
	\end{proof}
	
	We remark that \Cref{corollary:trees_tauplus_leq_2} implies that $T^+(\texttt{Trees}) \leq 2n-2$, which is only off by 1 from the lower bound of \Cref{lemma:trees_Tplus_geq_2n3}. We finish this section with proving the latter to be tight, which is less trivial than for $\tau^+$.

	We first need the following lemma concerning minimal temporally connected path graphs.
	
	\begin{lemma}
		\label{lemma:paths_Tplus_eq_2n3}
		$T^+(\texttt{Paths}) = 2n-3$
	\end{lemma}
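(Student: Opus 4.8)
The lower bound $T^+(\texttt{Paths}) \ge 2n-3$ is immediate: a path is a tree, so the pivot labelling of \Cref{lemma:trees_Tplus_geq_2n3} already witnesses it. For the upper bound, \Cref{corollary:trees_tauplus_leq_2} gives at most two labels per edge, hence at most $2(n-1) = 2n-2$ labels; the whole difficulty is to save one label, i.e.\ to show that in \emph{every} minimal temporally connected labelling of $P_n$ at least one edge carries a single label. The plan is to argue by contradiction, assuming every one of the $n-1$ edges $e_i = \{v_i, v_{i+1}\}$ carries exactly two labels $a_i < b_i$.

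The first step is a clean reformulation of temporal connectivity on a path. Since any journey between $v_j$ and $v_k$ is forced to use exactly the edges between them, all left-to-right pairs are connected if and only if there is a strictly increasing selection $x_1 < x_2 < \dots < x_{n-1}$ with $x_i \in \lambda(e_i)$, and symmetrically all right-to-left pairs are connected if and only if there is a strictly decreasing selection $y_1 > y_2 > \dots > y_{n-1}$ with $y_i \in \lambda(e_i)$. Fixing such selections $X$ and $Y$, I would first observe that on a two-label edge they must disagree: if $x_i = y_i$, then the common label already serves every pair crossing $e_i$ in both directions, so the other label of $e_i$ is redundant, contradicting minimality. Thus $\{x_i, y_i\} = \{a_i, b_i\}$ for every $i$, and the sign $s_i := \operatorname{sign}(x_i - y_i)$ is well defined.

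Next I would pin down the behaviour at the two leaves. At the leaf $v_1$, journeys entering $v_1$ end there, so they may as well cross $e_1$ using the \emph{larger} label $b_1$ (reaching an endpoint imposes no later constraint); hence the smaller label $a_1$ can only be necessary for $v_1$'s outgoing (rightward) journeys, forcing $x_1 = a_1$ and $s_1 = -1$. The mirror argument at $v_n$ forces $y_{n-1} = a_{n-1}$ and $s_{n-1} = +1$. Because $x_i$ is strictly increasing and $y_i$ strictly decreasing, $x_i - y_i$ is strictly increasing in $i$; together with $s_1 = -1$ and $s_{n-1} = +1$ this guarantees an index $k$ where the sign flips, with $x_k = a_k$, $y_k = b_k$ and $x_{k+1} = b_{k+1}$, $y_{k+1} = a_{k+1}$.

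The final, and I expect hardest, step is to extract a redundant label at this crossing. The two smaller labels $a_k$ and $a_{k+1}$ sit on incident edges, so by properness $a_k \ne a_{k+1}$. If $a_k < a_{k+1}$, I can reroute the increasing selection through $a_{k+1}$ on $e_{k+1}$ (it still lies between $x_k = a_k$ and $x_{k+2}$), making $a_{k+1}$ usable in both directions and so rendering $b_{k+1}$ redundant; if $a_{k+1} < a_k$, the symmetric reroute of the decreasing selection through $a_k$ renders $b_k$ redundant. Either way minimality is contradicted, so not all edges carry two labels, giving $T^+(\texttt{Paths}) \le 2n-3$ and hence equality. The main obstacle throughout is handling minimality correctly: a label can be ``kept'' even when an alternative single label would suffice, so every redundancy claim must exhibit concrete reroutings of the spanning increasing and decreasing journeys rather than merely argue that one label could in principle be replaced.
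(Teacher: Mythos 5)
Your proof is correct and follows essentially the same route as the paper's: reduce temporal connectivity of the path to the two end-to-end journeys, conclude that every edge must carry exactly two distinct labels (one per direction), and extract a redundant label at the edge where the signed difference between the two journeys' labels changes sign. The only cosmetic differences are that you locate this crossing edge via the leaf conditions and the monotonicity of $x_i - y_i$ rather than via the paper's minimal-difference edge, and your final case split invokes properness ($a_k \neq a_{k+1}$) where the paper uses the growth of the difference between consecutive edges; both close the contradiction equally well.
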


	\newcommand{\Jonetwo}{\ensuremath{\mathcal{J}_{1,2}}\xspace}
	\newcommand{\Jtwoone}{\ensuremath{\mathcal{J}_{2,1}}\xspace}
	\newcommand{\liright}{\ensuremath{\ell_i^\rightarrow}\xspace}
	\newcommand{\liplusright}{\ensuremath{\ell_{i+1}^\rightarrow}\xspace}
	\newcommand{\liminusright}{\ensuremath{\ell_{i-1}^\rightarrow}\xspace}
	\newcommand{\lileft}{\ensuremath{\ell_i^\leftarrow}\xspace}
	\newcommand{\liplusleft}{\ensuremath{\ell_{i+1}^\leftarrow}\xspace}
	\newcommand{\liminusleft}{\ensuremath{\ell_{i-1}^\leftarrow}\xspace}
	\begin{proof}			
		Since path graphs are tree graphs, we have through \Cref{lemma:trees_Tplus_geq_2n3} and \Cref{corollary:trees_tauplus_leq_2} that  $2n-3 \leq T^+(\texttt{Paths}) \leq 2n-2$. 
		
		Note that for a path graph to be temporally connected, we only need to ensure that both extremities, $v_1$ and $v_n$, can reach each other, via a journey in one direction, and a journey in the other. Indeed, any other pair of vertices can use these journeys to reach each other. 
		Thanks to this, we can reason on temporally connected path graphs and only need to worry about the reachability between these two leaf vertices, instead of between all vertices.
		
		Suppose by contradiction that a minimal temporally connected path graph $G$ exists with $2n-2$ labels.
		By our previous argument, we have that any label which is not part of either the journey from $v_1$ to $v_n$, or of the journey from $v_n$ to $v_1$, is redundant.
		Each of these two journeys is composed of $n-1$ labels, meaning that to obtain $2n-2$ necessary labels from only these two journeys, we must have that all labels on the two journeys must be distinct.
		There must exist one edge $e_i$ such that the corresponding labels have the smallest difference among all edges of path $G$. By the temporal nature of journeys, there cannot be more than one such edge as the difference must necessarily increase on edges further away from $e_i$.
		Consider the labels on edge $e_i$ and incident edges $e_{i-1}$ and $e_{i+1}$, denoted $\ell_i^\rightarrow$, $\ell_i^\leftarrow$, $\ell_{i-1}^\rightarrow$, $\ell_{i-1}^\leftarrow$, $\ell_{i+1}^\rightarrow$, and $\ell_{i+1}^\leftarrow$. 
		(If edge $e_i$ only has one incident edge, then ignore the following concerning the non-existent other edge and labels.)
		Consider labels $\ell_{i-1}^\rightarrow < \ell_i^\rightarrow < \ell_{i+1}^\rightarrow$ to be part of $v_1 \leadsto v_n$, and $\ell_{i-1}^\leftarrow > \ell_i^\leftarrow > \ell_{i+1}^\leftarrow$ to be part of $v_n \leadsto v_1$.
		Suppose \textit{w.l.o.g.} that $\ell_i^\rightarrow > \ell_i^\leftarrow$. Thus we have that $\liplusright > \liright > \lileft$ and also $\liplusleft < \lileft < \liright$. On edge $e_{i-1}$ however, two cases are possible:
		
		\begin{itemize}
			\item $\liminusright < \lileft$: this means label \liright is redundant as $\liminusright < \lileft < \liplusright$;
			\item $\liminusleft > \liright$: this means label \lileft is redundant as $\liplusleft < \liright < \liminusleft$.
		\end{itemize}
		
		Due to the inequalities presented, and the fact that the difference between \liminusleft and \liminusright must be larger than the difference between \lileft and \liright, at least one of the previous cases must be present, meaning at least one label must be redundant which is our contradiction and finishes the proof.
	\end{proof}
	
	Intuitively, this proof by contradiction may be generalizable for trees as follows: prove that on the path between some two leaf vertices there must be a label which is redundant regarding the reachability between these leaf vertices, which must thus be necessary for some other pair of leaf vertices. Consider the path between these latter leaf vertices and repeat. The contradiction should occur after no more other pairs of leaf vertices can be considered. Nevertheless, the proof in its current state is sufficient (and much less complicated) to prove the following.

	\begin{theorem}
		\label{theorem:trees_Tplus_eq_2n3}
		$T^+(\texttt{Trees}) = 2n-3$.
	\end{theorem}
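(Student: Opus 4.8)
The plan is to establish the two matching bounds and then to concentrate all the work on the upper bound. The lower bound $T^+(\texttt{Trees}) \geq 2n-3$ is already in hand from the pivot labelling (\Cref{lemma:trees_Tplus_geq_2n3}), and \Cref{corollary:trees_tauplus_leq_2} gives $T^+(\texttt{Trees}) \leq 2n-2$, since a tree has $n-1$ (bridge) edges, each carrying at most two labels in a minimal temporally connected labelling. So it suffices to rule out the extremal case: no minimal temporally connected tree on $n$ vertices uses exactly $2n-2$ labels. I would argue by contradiction, assuming such a tree $\mathcal{G}=(T,\lambda)$ exists; then every edge must carry exactly two labels.

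The key reduction is to transfer the problem onto paths, where \Cref{lemma:paths_Tplus_eq_2n3} is available. For any two leaves $a$ and $b$, let $P_{ab}$ be the unique $a$--$b$ path in $T$. Because $T$ is a tree, every journey between $a$ and $b$ is confined to the edges of $P_{ab}$, so $(P_{ab},\lambda|_{P_{ab}})$ is itself temporally connected, and it inherits exactly two labels on each of its edges. Writing $m$ for its number of vertices, this is a temporally connected path with $2m-2$ labels, exceeding the path maximum $2m-3$ of \Cref{lemma:paths_Tplus_eq_2n3}. Hence $\lambda|_{P_{ab}}$ is not minimal, and there is a label $\ell$ on some edge of $P_{ab}$ that is redundant for the reachability between $a$ and $b$.

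The remaining --- and main --- difficulty is to promote this local redundancy to a genuine redundancy in $\mathcal{G}$, contradicting minimality of $\lambda$. The label $\ell$ found above is only guaranteed to be dispensable for the endpoints $a,b$; it may still be necessary for some other pair, which in a tree forces a new leaf pair $a',b'$ whose path $P_{a'b'}$ contains $\ell$'s edge and for which $\ell$ is necessary. I would iterate the previous step on $P_{a'b'}$ to extract a label redundant for $(a',b')$, as sketched after \Cref{lemma:paths_Tplus_eq_2n3}, and set up a potential that strictly decreases along the iteration, for instance by charging each necessary label to a leaf pair whose path witnesses its necessity. The crux is to show the process cannot continue indefinitely: once no unexamined leaf pair can absorb the redundant label, that label is redundant for all pairs, hence globally redundant, giving the contradiction.

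An alternative route I would try in parallel is induction on $n$ by deleting a leaf $f$ with neighbour $f'$. The induced labelling on $T-f$ is temporally connected on $n-1$ vertices, since inter-vertex journeys in a tree never route through the leaf $f$; and if one can select $f$ so that no label of $T-f$ loses its necessity, then the inductive bound $2(n-1)-3$ together with the at-most-two labels on the leaf edge $e_f$ yields $2n-3$. The delicate point here --- and the analogue of the main obstacle above --- is exactly the preservation of minimality, since a label of $T-f$ may have been necessary solely for journeys to or from $f$. Either way, combining the resulting upper bound $T^+(\texttt{Trees}) \leq 2n-3$ with the lower bound of \Cref{lemma:trees_Tplus_geq_2n3} establishes $T^+(\texttt{Trees}) = 2n-3$.
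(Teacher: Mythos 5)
Your argument sets out to prove more than the theorem asserts, and the extra part is where it breaks down. Under the paper's definition, $T^+(\texttt{Trees})$ is the maximum $x$ such that \emph{every} tree $G$ satisfies $T^+(G)\geq x$; equivalently, it is the minimum of $T^+(G)$ over all trees $G$ on $n$ vertices. Since path graphs are trees, \Cref{lemma:paths_Tplus_eq_2n3} already forces $T^+(\texttt{Trees})\leq T^+(\texttt{Paths})=2n-3$, and combining this with \Cref{lemma:trees_Tplus_geq_2n3} is the entire proof --- it is exactly the paper's two-line argument. You instead read the upper bound as the claim that \emph{no} tree admits a minimal temporally connected labelling with $2n-2$ labels, i.e.\ that $T^+(G)\leq 2n-3$ for every individual tree $G$. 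That is a strictly stronger statement, which the paper explicitly does not establish: the remark following the theorem points out that \texttt{Trees} $\setminus$ \texttt{Paths} might still admit denser labellings unless the path argument can be generalized.

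For that stronger statement your proof has a genuine, self-acknowledged gap. Restricting the labelling to a leaf-to-leaf path $P_{ab}$ and invoking \Cref{lemma:paths_Tplus_eq_2n3} only yields a label that is redundant for reachability \emph{among the vertices of $P_{ab}$}; it may still be necessary for a pair involving a vertex off that path. Your plan to iterate over further leaf pairs with a decreasing potential, and your alternative leaf-deletion induction, both hinge on exactly the step you flag as ``the crux'' --- showing the iteration terminates in a globally redundant label, respectively that minimality is preserved when a leaf is deleted --- and neither is carried out. As written, the upper bound $T^+(\texttt{Trees})\leq 2n-3$ invoked in your final sentence therefore rests on an unproved claim. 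The fix is simply to drop the stronger claim and use the definitional transfer from \texttt{Paths} described above.
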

	
	\begin{proof}
		\Cref{lemma:trees_Tplus_geq_2n3} states the lower bound and since path graphs are tree graphs, \Cref{lemma:paths_Tplus_eq_2n3} implies the upper bound.
	\end{proof}
	
	Due to the upper bound coming from the subclass \texttt{Paths}, this means that for the class of tree graphs excluding path graphs, \textit{i.e.} \texttt{Trees} $\setminus$ \texttt{Paths}, it may be possible to obtain denser results. This is disproved if the proof of \Cref{lemma:paths_Tplus_eq_2n3} is in fact generalizable for trees as discussed.
	
	As a side note, we remark that the maximum temporal cost (\textit{resp.} temporality) of tree graphs corresponds to the minimum temporal cost (\textit{resp.} temporality) of tree graphs. In other words, all minimal labellings of tree graphs contain exactly $2n-3$ labels and exactly 2 labels on all edges except one. For proofs of these minimisation costs, we refer the reader to \cite{zhang_broadcasting_2013}.
	
	For trees, the results are mixed: on one hand we exactly determined both maximum temporal cost and maximum temporality of tree graphs $T^+(\texttt{Trees})$ and $\tau^+(\texttt{Trees})$, but on the other hand both are very sparse and do not improve upon lower bounds of maximum temporal cost $T^+$ and maximum temporality $\tau^+$.
	
	\section{Asymptotically optimal lower bounds}
	\label{sec:better_lower_bounds}
	
	From related work in \Cref{sec:preliminaries} and results on trees in \Cref{sec:trees}, one may wonder if it is even possible to obtain a maximum temporality larger than two, let alone a linear-size one. Also, regarding the maximum temporal cost, from Axiotis and Fotakis' result it is unclear whether a quadratic-size temporal cost is possible in sparse graph classes. 
	In this section, we show all of this is indeed possible, through two constructions.
	
	\begin{definition}[Ad-hoc construction] 
		\label{def:adhoc_construction}
		See also \Cref{fig:adhoc_construction}.
		Let $V$ be the set of vertices $a$, $b$, $k-1$ triplets $u_i$, $v_i$, $w_i$ with $i\leq k-1$, and vertices $u_k$ and $v_k$. The size of $V$ is thus $n = 3k + 1$. 
		Add edges $\{u_i, a\}$ with label $ik^2 -1$ for all vertices $u_i$, edge $e = \{a, b\}$ with labels $1$ and $ik^2$ for all natural numbers $i \leq k$, and edges $\{b, v_i\}$ with label $ik^2 + 1$ for all vertices $v_i$.
		Then, add edges $\{v_i, w_i\}$ with label $k^4$ and edges $\{u_i, u_{i+1}\}$ with label $k^4 + k-i$, as well as edges $\{v_i, v_{i+1}\}$ with label $k-i$, for all natural numbers $i \leq k-1$. 
		For all vertices $u_i$ and $w_j$ such that $i > j$, add edge $\{u_i, w_j\}$ with label $ik^2-1-j$.
	\end{definition}
	
	\begin{figure}[h]
		\begin{subfigure}{.5\textwidth}
			\includegraphics[width=\textwidth]{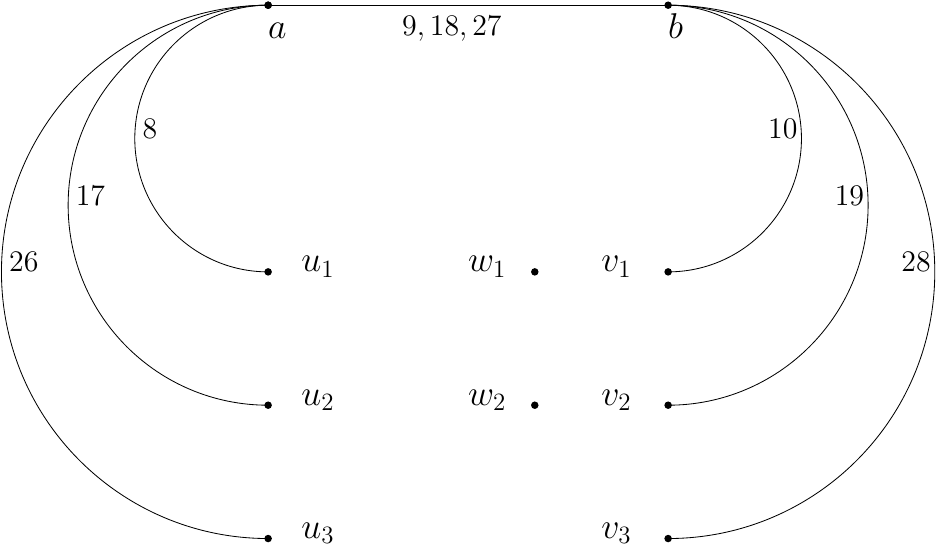}
			\caption{Initial step}
		\end{subfigure}
		\hfill
		\begin{subfigure}{.5\textwidth}
			\includegraphics[width=\textwidth]{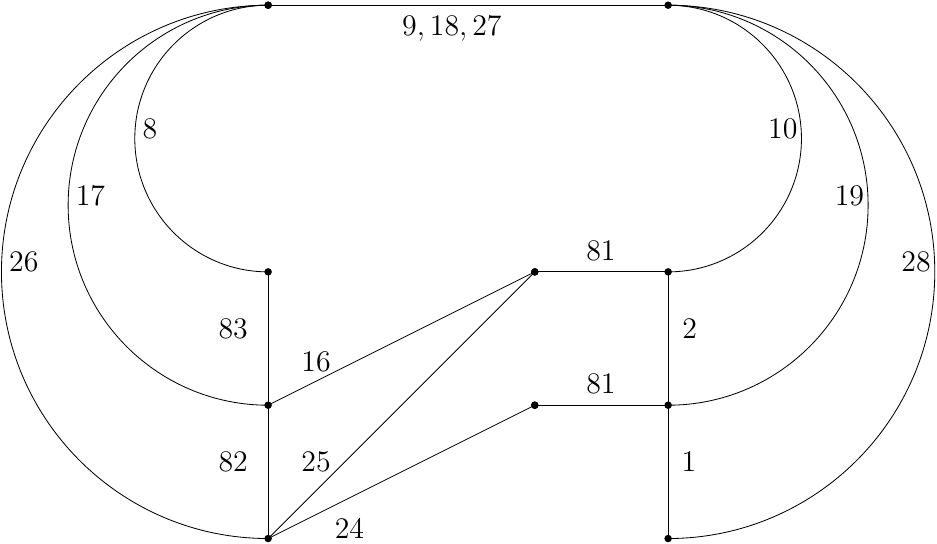}
			\caption{Completed temporally connected version}
		\end{subfigure}
		\caption{The ad-hoc construction for $k=3$, initialised with three necessary labels on edge $e$, and completed to a minimal and temporally connected graph. 
		}
		\label{fig:adhoc_construction}
	\end{figure}
	
	\begin{lemma}
		\label{lemma:adhoc_minimal}
		The ad-hoc construction yields a minimal temporally connected graph.
	\end{lemma}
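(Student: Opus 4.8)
The claim bundles two obligations: that $\mathcal{G}=(G,\lambda)$ is temporally connected, and that $\lambda$ is minimal, i.e.\ every contact is necessary. The plan is to read the construction as three temporally separated regimes glued together: a \emph{convergecast} on the $v$-chain at the small times $1,\dots,k-1$; a single-pass \emph{highway} through $a$, $e$ and $b$ at the times $k^2,2k^2,\dots,k^3$; and a \emph{broadcast} on the $u$-chain at the large times above $k^4$, with the $w$-vertices acting as bridges via the contacts $\{u_i,w_j\}$ and $\{v_i,w_i\}$. For connectivity I would exhibit a canonical journey for each type of ordered pair; for minimality I would attach to every label a \emph{witness pair} whose reachability that label alone provides.

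For temporal connectivity I would not hunt for a single pivot vertex, since none exists: the highway runs only once from $k^2$ up to $k^3$, so a vertex that meets the hub late cannot re-enter it. Instead I argue by endpoint type. The forward direction of the highway gives $u_i\leadsto a\leadsto b\leadsto v_{i'}$ for every $i'\ge i$ using the contact $ik^2$ on $e$; a lower-index target $v_j$ with $j<i$ is instead reached through $u_i\leadsto w_j$ (at time $ik^2-1-j$) and then $w_j\leadsto v_j$ at time $k^4$. The $v$-chain lets any $v_i$ reach every lower-index $v_j$ at a time below $k$, hence reach $b$ and then $a$; this detour is exactly what lets the ``late'' vertices $v_{k-1},v_k$ enter the hub early enough to climb to $a$ and reach $u_k$ before $k^4$. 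Finally the $u$-chain broadcasts $u_k$ to all $u_i$ at times above $k^4$, so the pattern ``reach $u_k$ before $k^4$, then chain down'' handles every remaining $u$-target. Verifying that these patterns cover all ordered pairs, with source and target ranging over $a,b,u_i,v_i,w_i$, is a finite but careful case analysis.

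For minimality, the core of the construction is the set of $k$ contacts $ik^2$ on $e$, and these admit a clean argument: $u_i$ reaches $a$ only at time $ik^2-1$ and can enter $v_i$ only via $\{b,v_i\}$ at time $ik^2+1$, so any journey $u_i\leadsto v_i$ must cross $e$ strictly between $ik^2-1$ and $ik^2+1$, and $ik^2$ is the unique such contact. It then remains to rule out a journey $u_i\leadsto v_i$ avoiding $e$ altogether: the $w$-bypass only delivers $u_i$ to targets $v_j$ with $j<i$, and the $v$-chain fires far too early, so deleting $ik^2$ genuinely severs $u_i\leadsto v_i$. The remaining structural labels follow the same template: $\{u_i,a\}$ and $\{b,v_i\}$ are the forced first and last legs of $u_i\leadsto v_i$; $\{v_i,w_i\}$ is the only access to $w_i$ from the $v$-side; each $u$-chain contact $\{u_i,u_{i+1}\}$ is the unique way a higher-index $u$ descends to $u_i$; and each $v$-chain contact $\{v_i,v_{i+1}\}$ is the unique way $v_{i+1}$ reaches $v_i$.

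I expect the genuine obstacle to be the necessity of the bridging contacts $\{u_i,w_j\}$, which admit many competing routes: $w_j\leadsto u_i$ can be realized by bouncing through another $u_{i'}$ together with the hub or the $u$-chain, and $u_i\leadsto v_j$ can be realized by a bounce $u_i\leadsto w_{j'}\leadsto u_{i'}\leadsto w_j$ whenever $j<j'<i<i'$, so no single uniform witness pair works for all $(i,j)$. The resolution I would pursue is arrival-time bookkeeping: the direct contact $\{u_i,w_j\}$ is the \emph{earliest} moment, namely $ik^2-1-j$, at which $w_j$ can sit at $u_i$, strictly before any bounce, and this head start is precisely what is needed to catch a subsequent contact (an onward $\{u_i,w_{j''}\}$ with $j''<j$, or the crossing $ik^2$ toward $v_i$) before it fires. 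Making this precise for every $(i,j)$, using the clean endpoint cases $i=k$ and $j=i-1$ (where no bounce exists) as the base, and separately checking the lone contact $1$ on $e$, is the technical heart. Once each label is matched to a witness pair, minimality follows, and together with temporal connectivity this proves the construction is a minimal temporally connected graph.
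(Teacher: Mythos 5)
Your architecture is the same as the paper's: temporal connectivity via an explicit journey for each type of ordered pair, and minimality via a witness pair per label; your witnesses for the contacts $ik^2$ on $e$, for $\{u_i,a\}$, $\{b,v_i\}$, the two chains, and $\{v_i,w_i\}$ all match the paper's and are sound. Where you diverge is exactly where you say you do: the contacts $\{u_i,w_j\}$. The paper disposes of them in one line (``necessary for $u_i$ to reach $w_j$''); you rightly observe that this witness fails whenever some $j<j'<i<i'$ exists, because $u_i\xrightarrow{ik^2-1-j'}w_{j'}\xrightarrow{i'k^2-1-j'}u_{i'}\xrightarrow{i'k^2-1-j}w_j$ is an increasing journey avoiding the direct contact. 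You are more careful than the paper here, and the objection is correct.

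The gap is that the repair you sketch cannot be completed. Take $k=4$ and the contact $(\{u_3,w_1\},46)$. The only vertices that can occupy $u_3$ before time $46$ are $u_3$ and $w_2$ (via label $45$), and the only ones that can occupy $w_1$ before $46$ are $w_1$ and $u_2$ (via label $30$); tracing every continuation after the contact (to $u_4$ at $62$, thence $a$, $b$, $v_4$ and back down the $u$-chain, or to $v_1$ at $256$; respectively to $a$ at $47$, thence $b$, $v_3$, $v_4$, $w_3$, $u_4$, \dots) one checks that every source--target pair so served also admits a journey avoiding the contact, e.g.\ $u_3\leadsto v_1$ via $45,61,62,256$ and $w_1\leadsto v_3$ via $30,31,48,49$. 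The ``head start'' is never cashed in, so this contact is redundant outright, which contradicts the lemma for $k\ge 4$ rather than merely leaving your witness unproved. (The lone label $1$ on $e$, which you also single out, has the same problem: the paper's connectivity argument never uses it, its minimality list never mentions it, and its label count omits it.) So the technical heart you identified is not just hard, it is unfixable as sketched: one must either restrict to the clean pairs you isolate ($i=k$ or $j=i-1$) or modify the $\{u_i,w_j\}$ labels so that the bounces disappear --- and the paper's own proof shares this defect.
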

	
	\begin{proof}
		Since all edges but $e$ have only one label, we will identify these labels with their corresponding edges for simplicity in the following proof.
		
		Let us first show the ad-hoc construction is temporally connected, by going over all vertices and showing they can reach all others. All vertices can trivially reach their neighbours, so we only look at the vertices which aren't neighbours.
		
		\begin{itemize}
			\item Vertex $a$ can reach vertices $v_i$ and $w_i$ by journey $(a, b, v_i, w_i)$ using label $k^2$ on edge $e$. 
			\item Vertex $b$ can reach vertices $w_i$ through journey $(b, v_i, w_i)$, and vertices $u_i$ through journey $(b, a, u_k, u_{k-1}, u_{k-2}, ..., u_i)$ using label $k^2$ on edge $e$. 
			\item Vertices $u_i$ can reach vertex $b$ and vertices $v_j$ and $w_j$ such that $i \leq j$ through journey $(u_i, a, b, v_j, w_j)$ using label $ik^2$ on edge $e$. They can reach the other vertices $v_j$ and $w_j$ such that $i > j$ by journey $(u_i, w_j, v_j)$. Finally, vertices $u_i$ can reach all vertices $u_j$ such that $i > j$ through journey $(u_i, u_{i-1}, u_{i-2}, ..., u_j)$, and vertices $u_j$ such that $i<j$ through journey $(u_i, a, u_j)$.
			\item Vertices $v_i$ can reach vertex $a$ through journey $(v_i, v_{i-1}, v_{i-2}, ..., v_1, b, a)$ using label $2k^2$, and from there on reach vertices $u_j$ through journey $\{a, u_k, u_{k-1}, u_{k-2}, ..., u_j\}$. 
			Vertices $v_i$ can reach vertices $v_j$ and $w_j$ such that $i \geq j$ through journey $(v_i, v_{i-1}, v_{i-2}, ..., v_j, w_j)$, and can reach vertices $v_j$ and $w_j$ such that $i < j$ through journey $(v_i, b, v_j, w_j)$.
			\item Finally, vertices $w_i$ can reach vertices $u_j$ through journey $(w_i, u_k, u_{k-1}, u_{k-2}, ..., u_j)$.
			They can reach vertices $a$ and $b$ through journey $(w_i, u_{i+1}, a, b)$ using label $(i+1)k^2$ on edge $e$, and from there on reach vertices $v_j$ and $w_j$ such that $i < j$ through journey $(a, v_j, w_j)$. Finally, they can reach vertices $v_j$ and $w_j$ such that $i > j$ through journey $(w_i, u_k, w_j, v_j)$.
		\end{itemize}
		Since all vertices can reach each other, the temporal graph is temporally connected.
		Let us finish by proving this labelling is minimal as well, by going over all labels and showing they are necessary.
		
		\begin{itemize}
			\item Labels $ik^2$ on edge $e$ are necessary for vertex $u_i$ to reach vertex $v_i$. 
			\item Edges $\{u_i, a\}$ are necessary for $u_i$ to reach $v_i$.
			\item Edges $\{b, v_i\}$ are necessary for $b$ to reach $v_i$.
			\item Edges $\{u_i, u_j\}$ are necessary for $u_k$ to reach $u_1$.
			\item Edges $\{v_i, v_j\}$ are necessary for $v_k$ to reach $v_1$.
			\item Edges $\{v_i, w_i\}$ are necessary for $v_i$ to reach $w_i$.
			\item Edges $\{u_i, w_j\}$ are necessary for $u_i$ to reach $w_j$.
		\end{itemize}
	\end{proof}
	
	%
	
	\begin{theorem}
		\label{theorem:adhoc_construction_tauplus}
		$\tau^+ \geq \tfrac{1}{3}n - O(1)$.
	\end{theorem}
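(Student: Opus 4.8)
The plan is to read the bound directly off the ad-hoc construction of \Cref{def:adhoc_construction}, letting \Cref{lemma:adhoc_minimal} do the heavy lifting. By that lemma the construction yields a proper, minimal, temporally connected temporal graph $\mathcal{G} = (G, \lambda)$ on $n = 3k+1$ vertices, so every one of its labels is necessary. The key observation is that the single edge $e = \{a, b\}$ already carries a linear number of labels: by construction it receives label $1$ together with the labels $ik^2$ for each $i \leq k$, and the minimality proof of \Cref{lemma:adhoc_minimal} exhibits, for every index $i$, a pair of vertices (namely $u_i$ and $v_i$) whose reachability forces the label $ik^2$ to remain. Hence none of these $k$ labels can be dropped while keeping the graph temporally connected.

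Concretely, since $e$ carries its $k$ necessary labels $ik^2$ (in fact $k+1$ labels in total, as $\lambda$ is minimal) while every other edge carries a single label, the temporality satisfies $\tau(\mathcal{G}) \geq k$, and in particular $\tau^+(G) \geq k$ for the footprint $G$. Because $\mathcal{G}$ has $n = 3k+1$ vertices we have $k = \tfrac{n-1}{3}$, whence
\[
\tau^+ \;\geq\; \tau^+(G) \;\geq\; k \;=\; \tfrac{n-1}{3} \;=\; \tfrac{1}{3}n - \tfrac{1}{3},
\]
which establishes the claim for every $n \equiv 1 \pmod 3$.

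To cover the remaining residues of $n$ modulo $3$, I would take the largest $k$ with $3k+1 \leq n$, build the construction on those $3k+1$ vertices, and attach the at most two leftover vertices as a short pendant path (for instance off vertex $a$), giving them fresh labels that keep the graph connected, proper, and temporally connected without touching the labels of $e$. This changes the vertex count by only a constant, so $k \geq \tfrac{n}{3} - O(1)$ and the bound $\tau^+ \geq \tfrac{1}{3}n - O(1)$ persists; the pendant edges must be chosen to be themselves necessary so that minimality is preserved, but since only $O(1)$ vertices are involved this is routine.

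The genuine content of the argument lives entirely in \Cref{lemma:adhoc_minimal}: once minimality and temporal connectivity are granted, the theorem is a counting statement. The only point demanding care is confirming that all $k$ labels on $e$ survive minimization \emph{simultaneously} — precisely what the per-index reachability witnesses $u_i \leadsto v_i$ in the proof of \Cref{lemma:adhoc_minimal} provide — together with the mod-$3$ padding needed to phrase the bound for all $n$ rather than only for $n = 3k+1$, which is what forces the harmless $-O(1)$ slack in the statement.
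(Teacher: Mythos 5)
Your proposal is correct and follows essentially the same route as the paper: invoke \Cref{lemma:adhoc_minimal} for minimality and temporal connectivity, then read the temporality off the $k$ labels $ik^2$ on edge $e$ with $n = 3k+1$, giving $\tau^+ \geq k = \tfrac{1}{3}n - O(1)$. The only addition is your explicit padding for $n \not\equiv 1 \pmod 3$, which the paper leaves implicit in the $O(1)$ slack.
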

	
	\begin{proof}
		By \Cref{lemma:adhoc_minimal}, the ad-hoc construction yields a minimal temporally connected graph, which has temporality $k = \tfrac{1}{3}n - \tfrac{2}{3}$ (the number of labels on edge $e$). 
	\end{proof}
	
	The temporality obtained through the ad-hoc construction is linear in $n$, which by \Cref{theorem:tauplus_upperbound} is asymptotically optimal.
	As a bonus, the ad-hoc construction also admits a large temporal cost, albeit comparable to the dense construction from Axiotis and Fotakis \cite{axiotis2016size}.
	
	\begin{corollary}
		\label{corollary:adhoc_construction_Tplus}
		$T^+ \geq \tfrac{1}{18}n^2 + \tfrac{31}{18}n - O(1)$.
	\end{corollary}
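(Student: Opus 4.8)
The plan is to observe that \Cref{lemma:adhoc_minimal} already does the substantive work: the ad-hoc construction is a minimal temporally connected graph, so its temporal cost is by definition a valid lower bound on $T^+$. What remains is therefore a pure counting exercise---tallying the labels contributed by each family of edges in \Cref{def:adhoc_construction} and re-expressing the total in terms of $n$ via $n = 3k+1$.

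First I would count the single-label edge families. The edges $\{u_i, a\}$ and the edges $\{b, v_i\}$ each contribute $k$ labels, one per index $i \le k$. The three ``chain'' families $\{v_i, w_i\}$, $\{u_i, u_{i+1}\}$ and $\{v_i, v_{i+1}\}$ each contribute $k-1$ labels, since the triplets run only over $i \le k-1$. The single multi-label edge $e = \{a,b\}$ carries the label $1$ together with the $k$ labels $ik^2$ for $i \le k$, giving $k+1$ labels. The dominant, quadratic contribution comes from the edges $\{u_i, w_j\}$ with $i > j$: since $u_i$ ranges over $i \le k$ and $w_j$ over $j \le k-1$, the number of such pairs is $\sum_{j=1}^{k-1}(k-j) = \binom{k}{2} = \tfrac{k(k-1)}{2}$, each carrying one label.

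Summing these contributions yields
\[
T(\mathcal{G}) = 2k + 3(k-1) + (k+1) + \tfrac{k(k-1)}{2} = \tfrac{1}{2}k^2 + \tfrac{11}{2}k - 2,
\]
and substituting $k = \tfrac{n-1}{3}$ gives $T(\mathcal{G}) = \tfrac{1}{18}n^2 + \tfrac{31}{18}n - \tfrac{34}{9}$, which is exactly of the form $\tfrac{1}{18}n^2 + \tfrac{31}{18}n - O(1)$ and hence establishes the claimed lower bound on $T^+$. The only step demanding genuine care is getting the index ranges right---in particular that $w_j$ exists only for $j \le k-1$, which controls both the three chain families and the range of the quadratic $\{u_i, w_j\}$ count; a miscount there would shift the linear coefficient. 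The quadratic leading term itself is robust, arising entirely from the $\binom{k}{2}$ pairs $\{u_i, w_j\}$, and matches the $\tfrac{1}{18}n^2$ from Axiotis and Fotakis as expected.
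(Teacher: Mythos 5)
Your proof is correct and follows essentially the same route as the paper: invoke \Cref{lemma:adhoc_minimal} for minimality, tally the labels contributed by each edge family, and substitute $k = \tfrac{n-1}{3}$, with the quadratic term coming from the $\binom{k}{2}$ edges $\{u_i, w_j\}$. The only divergence is that you count $k+1$ labels on edge $e$ (taking the ``label $1$'' in \Cref{def:adhoc_construction} literally), whereas the paper's own tally counts only the $k$ labels $ik^2$ --- consistent with its figure and with its minimality argument, which never uses label $1$ on $e$ and which indeed indicates that label would be redundant --- so your constant term $-\tfrac{34}{9}$ sits one above the paper's $-\tfrac{43}{9}$; since both are absorbed by the $O(1)$, the corollary is unaffected.
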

	
	\begin{proof}
		The ad-hoc construction was proven minimal and temporally connected in \Cref{lemma:adhoc_minimal}. Counting the total number of labels yields:
		\begin{align*} 
			T&=k + k + k + k-1 + k-1 + k-1 + (1 + 2 + 3 + ... + k-1)\\
			&= 6k -3 + \frac{k(k-1)}{2}\\
			&= \frac{1}{2}k^2 + \frac{11}{2}k -3\\
			&= \frac{1}{2}(\frac{1}{3}n - \frac{1}{3})^2 + \frac{11}{2}(\frac{1}{3}n - \frac{1}{3}) -3\\
			&= \frac{1}{18}n^2 + \frac{31}{18}n - \frac{43}{9}
		\end{align*}
		The dense part is situated between vertices $v_i$ and $w_j$.
	\end{proof}
	
	The ad-hoc construction is asymptotically as dense Axiotis and Fotakis, even sharing the same quadratic factor of $\tfrac{1}{18}$, but ours is slightly denser as their linear term is at most $\tfrac{3}{2}n$. 
	
	Since tree graphs resulted in small maximum temporality and temporal cost, and since the ad-hoc construction, as well as Axiotis and Fotakis' result, is built using a dense underlying graph, evidence may suggest that to obtain large maximum temporal cost and temporality, the underlying graph needs to be dense. We show this is not the case through the parity construction, which is a dense labelling of the sparsest class of graphs outside of trees: cycle graphs. This labelling was discussed
	during the open problem session of the Dagstuhl seminar on temporal graphs in 2021, after a preliminary presentation of this work (see \cite{casteigts_et_al:DagRep.11.3.16} for the report).
	
	\begin{definition}[Parity construction] See also \Cref{fig:parity_construction}.
		Let $n$ be even. Let $V$ be the set of vertices $v_i$ with $0 \leq i \leq n-1$, and let the set $E$ contain edges $\{v_i, v_{(i+1)\%n}\}$ for all $i < n$. Graph $G = (V, E)$ is thus a cycle graph. Note that $G$ admits exactly two maximum matchings. Assign all $\lceil \tfrac{n}{4} \rceil$ odd labels between 1 and $\tfrac{n}{2}$ included to all edges of one of the matching, say the one containing edge $\{v_0, v_1\}$, and all $\lfloor \tfrac{n}{4} \rfloor$ even labels to all edges of the other matching.
	\end{definition}
	
	\begin{figure}[h]
		\begin{center}
			\includegraphics[width=.37\textwidth]{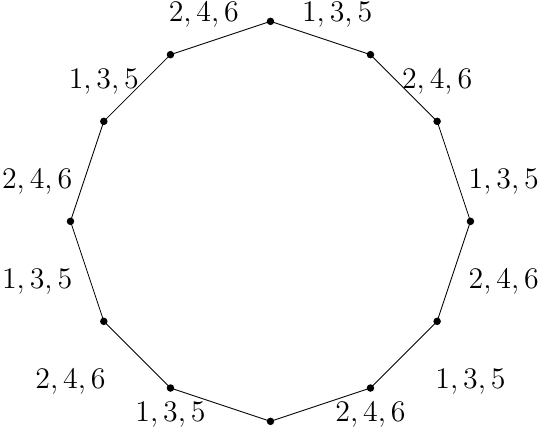}
		\end{center}
		\caption{The parity construction for $n=12$. Note that, even with the labelling, the graph is vertex-transitive, \textit{i.e.} no vertex is structurally different than any other vertex.
		}
		\label{fig:parity_construction}
	\end{figure}
	
	\begin{lemma}
		\label{lemma:parity_minimal}
		The parity construction yields a minimal temporally connected graph.
	\end{lemma}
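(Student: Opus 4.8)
The plan is to verify the two required properties in turn: that the construction is temporally connected, and that it is minimal (every contact necessary). I first record three structural facts I will lean on. The labelling is proper, since the two edges incident to any vertex lie in different matchings and hence carry disjoint (odd versus even) label sets. Next, every label lies in $[1, \tfrac{n}{2}]$ and the labels along a journey strictly increase, so any journey uses at most $\tfrac{n}{2}$ contacts. Finally, the temporal graph is vertex-transitive: rotation $v_i \mapsto v_{i+2}$ and the reflection $v_i \mapsto v_{1-i}$ both send each matching to itself and therefore preserve the labelling, and together they act transitively on $V$; this lets me reduce claims about an arbitrary source to the single vertex $v_0$.

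For temporal connectivity I would show $v_i$ reaches $v_j$ for every pair. Let $d = \min(|i-j|, n-|i-j|) \le \tfrac{n}{2}$ and follow the corresponding shortest arc of $d$ edges, which alternates between the two matchings. If $d < \tfrac{n}{2}$ this arc is unique: whether its first edge is odd- or even-matching, an increasing label sequence within $[1, \tfrac{n}{2}]$ exists, namely $1,2,\dots,d$ in the first case and $2,3,\dots,d+1$ in the second (legal since $d+1 \le \tfrac{n}{2}$). The only delicate case is the antipodal one $d = \tfrac{n}{2}$: both directions have length $\tfrac{n}{2}$, but exactly one of them begins with an odd-matching edge (determined by the parity of $i$), so I pick that direction and use the labels $1,2,\dots,\tfrac{n}{2}$.

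For minimality I would exhibit, for each contact, a pair of vertices whose sole journey uses it. The key observation is that any journey between two antipodal vertices has length exactly $\tfrac{n}{2}$, since the graph distance forces at least $\tfrac{n}{2}$ edges while the counting bound above allows at most $\tfrac{n}{2}$ contacts; its $\tfrac{n}{2}$ strictly increasing labels in $[1,\tfrac{n}{2}]$ are therefore forced to be precisely $1,2,\dots,\tfrac{n}{2}$. Matching label parities to edge-matchings then singles out one direction, so the antipodal journey from $v_i$ to $v_{i+n/2}$ is unique and assigns label $t$ to the $t$-th edge of that arc. A direct index computation then shows that an arbitrary contact, label $\ell$ on edge $\{v_m, v_{m+1}\}$, occurs in exactly one such antipodal journey, the one starting at $v_{m-\ell+1}$. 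Hence deleting any single contact destroys the unique journey between some antipodal pair, so every label is necessary.

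The main obstacle is the minimality half: connectivity is a routine reachability check, but minimality requires both the uniqueness of antipodal journeys (resting on the bound that a journey has at most $\tfrac{n}{2}$ contacts) and the verification that these unique journeys collectively cover all $\tfrac{n^2}{4}$ contacts. Vertex-transitivity keeps the bookkeeping light, but the covering count is where I would be most careful to get the index arithmetic and parity conditions exactly right.
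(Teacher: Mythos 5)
Your proposal is correct and follows essentially the same route as the paper: vertex-transitivity plus the two alternating-parity arc journeys for connectivity, and for minimality the observation that label $\ell$ on edge $\{v_m,v_{m+1}\}$ is needed by the antipodal pair starting at $v_{m-\ell+1}$ (the paper states exactly this witness pair, with $a=(j-i+1)\bmod n$ and $b=(j+1+\tfrac{n}{2}-i)\bmod n$). You additionally spell out why that witness works — the length bound forcing antipodal journeys to use labels $1,\dots,\tfrac{n}{2}$ and hence be unique — which the paper leaves implicit.
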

	
	\begin{proof}
		The obtained graph, even with the labelling, is vertex-transitive. This means that to prove temporal connectivity, we only need to prove that one vertex can reach all others. Let us prove that vertex $v_0$ can reach all vertices. Following the journey starting with edge $\{v_0, v_1\}$ and label 1, by definition we have that this journey can continue in this direction, say clockwise, around the cycle using label $i+1$ after having used label $i$, and this up to label $\tfrac{n}{2}$ included. This implies that through this journey, vertex $v_n$ can reach all vertices $v_i$ such that $i \leq \tfrac{n}{2}$.
		Concerning the other vertices, consider the journey starting at $v_0$ using edge $\{v_{n-1}, v_0\}$ with label 2, going around the cycle in a counter-clockwise manner, using incrementing labels as well. This counter-clockwise journey reaches all vertices of $V$ not reached by the clockwise journey, and vice-versa.
		
		Concerning minimality: any label $i$ on any edge $\{v_j, v_{(j+1)\%n}\}$ is necessary for vertex $v_a$ to reach vertex $v_b$, with $a = (j-i+1) \% n$ and $b = (j+1+\tfrac{n}{2}-i) \% n$. 
	\end{proof}
	
	Although a fairly simple labelling of a simple graph, this labelling beats Axiotis and Fotakis' result by a large factor.
	
	\begin{theorem}
		$T^+ \geq T^+(\texttt{Even Cycles}) \geq \tfrac{1}{4}n^2$.
	\end{theorem}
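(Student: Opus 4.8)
The plan is to split the statement into its two inequalities. The first, $T^+ \geq T^+(\texttt{Even Cycles})$, is immediate from the definitions given in \Cref{sec:preliminaries}: $T^+$ is the maximum of $T^+(G)$ over all graphs $G$ on $n$ vertices, whereas $T^+(\texttt{Even Cycles})$ is the largest value that lower-bounds $T^+(G)$ across the class (and for each even $n$ this class contains the single cycle $C_n$, so $T^+(\texttt{Even Cycles}) = T^+(C_n)$). Since $C_n$ is one particular graph on $n$ vertices, the global maximum cannot be smaller than the class value, giving the first inequality for free. It therefore suffices to establish $T^+(\texttt{Even Cycles}) \geq \tfrac{1}{4}n^2$.

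For the second inequality I would invoke \Cref{lemma:parity_minimal}, which already guarantees that the parity construction is a proper minimal temporally connected labelling of $C_n$. By the definition of $T^+(G)$ as a maximum over all such labellings, the temporal cost of this one labelling is a valid lower bound for $T^+(C_n) = T^+(\texttt{Even Cycles})$. Hence the theorem reduces entirely to counting the labels the parity construction assigns; no further structural argument is needed.

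To perform the count: the cycle has $n$ edges partitioned into two perfect matchings, each of size $\tfrac{n}{2}$. One matching receives every odd label in $[1,\tfrac{n}{2}]$, of which there are $\lceil\tfrac{n}{4}\rceil$, and the other receives every even label in that range, of which there are $\lfloor\tfrac{n}{4}\rfloor$. Summing over all edges gives
\begin{align*}
T &= \frac{n}{2}\left\lceil\frac{n}{4}\right\rceil + \frac{n}{2}\left\lfloor\frac{n}{4}\right\rfloor = \frac{n}{2}\left(\left\lceil\frac{n}{4}\right\rceil + \left\lfloor\frac{n}{4}\right\rfloor\right) = \frac{n}{2}\cdot\frac{n}{2} = \frac{1}{4}n^2,
\end{align*}
using that $\lceil x\rceil + \lfloor x\rfloor = 2x$ whenever $2x$ is an integer, which holds for $x=\tfrac{n}{4}$ since $n$ is even. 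The bound is thus attained with equality.

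There is essentially no hard step here: given \Cref{lemma:parity_minimal}, the proof collapses to the label count above. The only subtlety worth flagging is verifying the floor/ceiling identity across the two residue classes $n\equiv 0$ and $n\equiv 2\pmod 4$ — in the first case both terms equal $\tfrac{n}{4}$, and in the second they are $\tfrac{n+2}{4}$ and $\tfrac{n-2}{4}$ — but both cases sum to $\tfrac{n}{2}$ and hence yield the same clean total $\tfrac{1}{4}n^2$.
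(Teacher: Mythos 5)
Your proof is correct and follows the paper's own argument exactly: both rely on \Cref{lemma:parity_minimal} to certify that the parity construction is a proper minimal temporally connected labelling, and then count $\lceil\tfrac{n}{4}\rceil$ and $\lfloor\tfrac{n}{4}\rfloor$ labels on the $\tfrac{n}{2}$ edges of each matching to reach $\tfrac{1}{4}n^2$ in both residue classes of $n$ modulo $4$. Your explicit handling of the first inequality $T^+ \geq T^+(\texttt{Even Cycles})$ is a small tidy addition the paper leaves implicit, but the substance is identical.
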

	
	\begin{proof}
		By \Cref{lemma:parity_minimal}, the parity construction results in a minimal temporally connected graph. By definition, for $n$ a multiple of four, this construction assigns $\tfrac{n}{4}$ labels on each of the $n$ edges of the underlying cycle graph, giving a total of $\tfrac{1}{4}n^2$ labels. If $n$ is not a multiple of four, then each edge of one of the matchings is assigned $\lceil \tfrac{n}{4} \rceil = \tfrac{n}{4} + \tfrac{1}{2}$ labels, and each edge of the other matching contains $\lfloor \tfrac{n}{4} \rfloor = \tfrac{n}{4} - \tfrac{1}{2}$, for a total of $\tfrac{1}{4}n^2$ labels as well.
	\end{proof}
	
	Also, this construction results in a large temporality, although the ad-hoc construction admits a better one.
	
	\begin{corollary}
		$\tau^+(\texttt{Even Cycles}) \geq \lceil \tfrac{1}{4}n \rceil$.
	\end{corollary}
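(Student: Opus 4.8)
The plan is to read the claimed bound directly off the parity construction, which \Cref{lemma:parity_minimal} already certifies to be a minimal temporally connected labelling of the even cycle on $n$ vertices. Since any minimal temporally connected labelling of the even cycle $C_n$ is exactly a witness for $\tau^+(C_n)$, it suffices to compute the temporality of this one labelling and check that it equals $\lceil \tfrac{1}{4}n\rceil$; because the parity construction is defined for every even $n$, the resulting lower bound holds uniformly over the whole class \texttt{Even Cycles}, as required by the definition of $\tau^+(\texttt{Class})$.

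First I would isolate the edge carrying the most labels. By construction the two maximum matchings of the cycle receive disjoint label sets: the matching containing $\{v_0,v_1\}$ receives all odd values in $[1,\tfrac{n}{2}]$, and the complementary matching receives all even values in the same range. A short count shows the number of odd integers in $[1,\tfrac{n}{2}]$ is $\lceil \tfrac{1}{4}n\rceil$, while the number of even integers there is $\lfloor \tfrac{1}{4}n\rfloor$. Hence every edge of the first matching carries exactly $\lceil \tfrac{1}{4}n\rceil$ labels and every edge of the second carries $\lfloor \tfrac{1}{4}n\rfloor$, so the temporality of the construction is $\max\{\lceil \tfrac{1}{4}n\rceil,\lfloor \tfrac{1}{4}n\rfloor\}=\lceil \tfrac{1}{4}n\rceil$.

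Combining the two observations yields $\tau^+(\texttt{Even Cycles})\geq \lceil \tfrac{1}{4}n\rceil$. There is essentially no obstacle beyond bookkeeping: the genuine content (minimality and temporal connectivity) was already established in \Cref{lemma:parity_minimal}, so the only point requiring care is the parity-sensitive count of odd versus even labels in $[1,\tfrac{n}{2}]$ and the observation that it is the odd matching that attains the maximum. I would also note in passing that since the graph is vertex-transitive the choice of witnessing edge is immaterial within each matching, which keeps the argument clean.
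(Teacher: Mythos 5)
Your proposal is correct and matches the paper's intended argument exactly: the paper leaves this corollary proofless because it follows immediately from the definition of the parity construction (which assigns all $\lceil \tfrac{n}{4}\rceil$ odd labels in $[1,\tfrac{n}{2}]$ to one matching) together with \Cref{lemma:parity_minimal}, which is precisely the reading-off you perform. Your parity-sensitive count of odd versus even labels and the observation that the bound holds for every even $n$ are the only points of substance, and you handle both correctly.
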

	
	
	Concluding this section, we thus already obtain asymptotically optimal results and beat previous work concerning maximum temporal cost and maximum temporality. For the former, the parity construction is the best result thus far, with $T^+ \geq \tfrac{1}{4} n^2$, and concerning the latter, the ad-hoc construction is the largest, with $\tau^+ \geq \tfrac{1}{3} n - O(1)$. In the next section, we decide to focus on another simple graph class, one which we now know does admit dense labellings, being cycle graphs. We end up beating both constructions from this section with one labelling. Nevertheless, we argue that the constructions in this section remain interesting: the parity construction for its simplicity and symmetry; and the ad-hoc construction serves as a basis for a result in \Cref{sec:conclusion} concerning other types of labellings.
	
	\section{Cycle graphs}
	\label{sec:cycles}
	
	After \Cref{sec:better_lower_bounds}, we focus on finding dense labellings of cycle graphs, which we know exist thanks to the parity construction. For this, we decide to adapt a temporal graph generator, which ultimately leads us to a the densest labelling in this paper, the generator labelling.
	
	\subsection{Adaptation of STGen}
	
	STGen is a generator of temporal graphs, made specifically for happy labellings (called ``simple'' labellings at the time of creation) by Arnaud Casteigts. It efficiently generates all happy temporal graphs of some given size which is useful when wanting to test some temporal graph property or find a counterexample. As an example, in \cite{casteigts2021temporal} STGen was used to verify some temporal spanner property on all temporal cliques up to size $n=7$.
	A brief and incomplete description of STGen follows. For more information and details, see \cite{casteigts2020efficient}. 
	The main point for the generator to be able to generate all (infinitely many) happy labellings, is that of bijective-equivalence group representatives. These are labellings with the smallest labels possible among their group of bijective-equivalent graphs, and more importantly, there is a finite amount of them given graph order $n$. These are in fact the labellings that are generated by STGen. In other words, a specific happy labelling may never actually be generated, but its representative will, which is sufficient when working on some connectivity property.
	STGen starts of with a complete graph and an empty labelling, which will be the root of the generation tree. Then, he creates a ``child'' temporal graph for every possible matching and assigns label 1 on these edges. Then, for every child, create their children by considering all possible matchings among unlabelled edges and assigning label 2 to them. This process is repeated recursively with incrementing labels until no more unlabelled labels can be assigned. The nodes of the generation tree now contain all bijective-equivalent representatives of happy labellings. Lots of clever optimisation techniques are applied during this process, from colouring lemmas, temporal isomorphism and automorphism, to rigidity and dissimilarity inheritance.
	
	We adapt STGen in the following manners, with some details and reasoning for each point below: 
	\begin{itemize}
		\item initialize with a cycle graph;
		\item cut branch if labelling is non-minimal (and discard);
		\item cut branch if labelling is temporally connected (and store);
		\item already labelled edges are still considered in descendants matchings;
		\item the stop condition, which activates when no unlabelled edges remain, is removed.
	\end{itemize}
	
	Let us refer to STGen with these adaptations as PTGen.
	
	The first point not only gives the benefit of working with only a sparse amount of edges to choose from for the matchings of children, but also has the nice property of encoding the reachability of vertices in the temporal graph as a continuous part of the cycle graph, which can be represented by the two extreme vertices only. This in turn allows for important operations such as a comparison or a union of reachabilities to be computable in constant time.
	
	For the second and third point, efficient testing of minimality and temporal connectivity is essential. The naive manner would be to test independently each child $\mathcal{G}$, taking time $O(n(m \log L + n \log n))$ to create the corresponding reachability graph $R(\mathcal{G})$ to check for temporal connectivity using the algorithm from \cite{xuan2003computing}, where $L$ is the largest label and $m$ the number of (labelled) edges, and repeating that $O(nL)$ times to check the necessity of label $\ell$ for each of the $O(nL)$ labels by removing it and constructing from scratch the reachability graph $R(\mathcal{G}^{-\ell})$.
	We present a more efficient method by maintaining accessibility variables $A(u)$ for all vertices $u$, which contain all vertices which can reach vertex $u$,
	and variables $A^{-\ell}(u)$ which contain all vertices which can reach vertex $u$ without using label $\ell$ (as with notation $G^{-\ell}$, here $\ell$ represents a single label on a single edge). 
	
	\begin{lemma}
		Updating the accessibility variables and testing for temporal connectivity and minimality, takes time $O(nT)$, where $T$ is the number of labels.
	\end{lemma}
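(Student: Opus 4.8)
The plan is to exploit the one-dimensional structure of the cycle, namely the fact (established above) that for every vertex $u$ the set of vertices that can reach $u$ is a contiguous arc of the cycle, so that each accessibility variable $A(u)$ is stored as a pair of extreme vertices and the union or equality test of two such arcs costs $O(1)$. Under this representation temporal connectivity is simply the statement that every $A(u)$ is the whole cycle, an $O(n)$ scan, and a label $\ell$ on an edge is necessary exactly when $A^{-\ell}(u) \neq A(u)$ for at least one $u$ (since the in-reachability arcs over all $u$ determine the full reachability relation, so $\mathcal{G} \sameR \mathcal{G}^{-\ell}$ iff $A^{-\ell}(u)=A(u)$ for all $u$); the labelling is minimal iff this holds for every label. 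The whole lemma therefore reduces to showing that the table of arcs $\{A(u)\}_u$ and $\{A^{-\ell}(u)\}_{\ell,u}$ can be brought up to date in $O(nT)$ time when PTGen descends from a node to a child, after which the two tests cost $O(n)$ and $O(nT)$ respectively.

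The key observation driving the update is that a child is obtained from its parent by assigning the current \emph{largest} label $\ell$ to a matching $M$ of edges. Because $\ell$ is maximal, any journey that uses one of the new contacts must end with it; consequently the only reachability sets that can grow are those of the endpoints of the edges of $M$. For an edge $\{x,y\} \in M$ I would update $A(y) \leftarrow A(y) \cup A(x) \cup \{x\}$ and, symmetrically, $A(x) \leftarrow A(x) \cup A(y) \cup \{y\}$, using the parent's pre-update values on the right-hand side; this is correct since a single contact cannot be traversed in both directions, and, as $M$ is a matching, the new contacts are vertex-disjoint and pairwise non-interfering. Each such update is a constant number of arc-unions, so refreshing $\{A(u)\}_u$ costs $O(|M|) = O(n)$.

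The punctured variables are handled by the same recurrence. There are $n$ vertices and $O(T)$ labels, hence $O(nT)$ arcs $A^{-\ell}(u)$ to maintain. For a label $\ell$ already present in the parent, adding the new contact on $\{x,y\}$ (whose label differs from $\ell$) updates $A^{-\ell}(y)$ and $A^{-\ell}(x)$ by exactly the arc-union above, costing $O(1)$ per (label, matching-edge) pair and $O(nT)$ in total. For each newly created label $\ell$ on an edge $\{x,y\} \in M$, its punctured table is obtained from the freshly updated $\{A(\cdot)\}$ by reverting the two endpoints to the parent's values (removing precisely the contact on $\{x,y\}$, a maximal contact, leaves every other entry unchanged), which is $O(n)$ per new label and $O(nT)$ across the matching. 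Finally I would run the two tests on the updated tables, yielding the claimed $O(nT)$ bound overall.

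The part that needs the most care is the correctness of these recurrences rather than their running time: one must argue precisely that assigning the maximal label to a matching perturbs only the endpoints' reachability (the ``terminal contact'' argument), that the right-hand sides use parent values so that no spurious compounding occurs within a single time step, and that each union of two arcs sharing the endpoint $x$ or $y$ is again an arc so that the constant-time arc operations apply. Once these invariants are verified, the count of $O(nT)$ manipulated arcs, each touched an $O(1)$ number of times, gives the stated complexity.
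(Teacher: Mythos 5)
Your proposal is correct and follows essentially the same approach as the paper: maintain the in-reachability sets $A(u)$ and their punctured counterparts as contiguous arcs of the cycle so that unions and comparisons cost $O(1)$, propagate updates only at the endpoints of the newly labelled matching (and across all older punctured tables), and read off temporal connectivity in $O(n)$ and minimality in $O(nT)$. The only cosmetic difference is that you build the punctured tables of the newly created contacts from the post-update $A$ with the two endpoints reverted to parent values, whereas the paper initializes them from the pre-update $A$; both give the claimed $O(nT)$ bound.
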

	
	\begin{proof}
		Initially, variables $A(u)$ contain only vertex $u$. Then, whenever a matching $M$ of edges with label $\ell$ is added to the temporal graph, we initialize for all edges $e \in M$ variables $A^{-\ell, e}(u)$ as $A(u)$ for all vertices $u$.
		Now, for all edges $e = \{u,v\}$ with label $\ell$, update $A(u)$ and $A(v)$ to be $A(u) \cup A(v)$ which is done in constant time.
		Finally, for all edges $e = \{u, v\}$ with a label of value $\ell$ and all labels $\ell'$ with value less than $\ell$, update $A^{-\ell', e}(u)$ and $A^{-\ell', e}(v)$ to be $A^{-\ell', e}(u) \cup A^{-\ell', e}(v)$.
		These operations can be done in time $O(n^2 + n + nT) = O(nT)$.
		Testing for temporal connectivity can now be done in time $O(n)$ by verifying if all $A(u) = V$, and checking for minimality is doable in time $O(nT)$ by verifying if there exists a label $\ell$ and edge $e$ for which $A^{-\ell, e}(u) = A(u)$ for all $u$.\footnote{In reality, we use counters for constant time testing but this explanation suffices for asymptotic analysis.} Both get absorbed in the asymptotic notation.
	\end{proof}
	
	For a better comparison between the naive method and the one presented, note that $T = O(nL)$, and thus we outperform the naive method by a factor of $O(m \log L + n \log n)$.
	
	The last two adaptations of STGen are put in place so as to enable the generator to produce proper (and in particular, non-happy) labellings. Indeed, not considering the already labelled edges for the next matching with the next label directly counters the generation of non-happy proper labellings, and the stop condition, being when no unlabelled edges exist, indirectly counters the generation as well.
	We note that the colouring lemma of STGen, which states that edges with label $t$ should be incident to edges with label $t-1$, is still in use in PTGen which in particular for proper labellings means that label $t+1$ cannot be put on edges already having label $t$. 
	Let us show that PTGen terminates, even without a stop condition, thanks to our other branch cutting conditions. 
	
	\begin{lemma}
		PTGen terminates.
	\end{lemma}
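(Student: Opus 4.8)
The plan is to prove that the generation tree explored by PTGen is finite; since PTGen is finitely branching (at each node only finitely many matchings of the cycle $C_n$ are available, hence finitely many children) and it does a bounded amount of work per node by the previous lemma, it then suffices to bound the length of every root-to-leaf branch. Each edge of the generation tree corresponds to adding a matching $M_t$ carrying the next integer label $t$, and we may assume $M_t \neq \emptyset$, since an empty matching would produce no new temporal edge (and no new graph); moreover, by the colouring lemma a branch that fails to place label $t$ can never place any later label, so such a branch is stuck and ends. Thus along any surviving branch the labels used are exactly $1, 2, \ldots, t$ for some $t$, with $M_s \neq \emptyset$ for every $s \le t$.

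The key observation I would use is that reachability is \emph{monotone} along a branch: adding temporal edges (whatever their label values) can only create new journeys, never destroy existing ones. Writing $R_{t-1}$ and $R_t$ for the reachability graphs just before and just after adding $M_t$, this gives $R_{t-1} \subseteq R_t$ as sets of arcs. I would then establish the crucial claim: if a branch survives step $t$ (the labelling after adding $M_t$ is minimal and not temporally connected), then reachability strictly increases, $R_{t-1} \subsetneq R_t$. Suppose instead $R_{t-1} = R_t$. For any newly placed label $\ell \in M_t$, the temporal graph $\mathcal{G}^{-\ell}$ lies between the step-$(t-1)$ graph and the step-$t$ graph, so monotonicity sandwiches its reachability as $R_{t-1} \subseteq R(\mathcal{G}^{-\ell}) \subseteq R_t = R_{t-1}$, forcing $R(\mathcal{G}^{-\ell}) = R_t$ and hence $\mathcal{G} \sameR \mathcal{G}^{-\ell}$. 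Then $\ell$ is redundant, the labelling is non-minimal, and the branch is cut for non-minimality, contradicting survival.

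Consequently every surviving step strictly enlarges the reachability graph. Since $R_t$ is a monotonically growing family of subsets of the arc set of a digraph on $n$ vertices, it can strictly increase at most $n(n-1) = O(n^2)$ times before either becoming complete (temporally connected, and stored) or triggering a non-minimality cut. Hence every branch has length at most $n(n-1)$, and combining bounded depth with finite branching shows the generation tree is finite, so PTGen terminates.

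I expect the step needing the most care to be the strict-increase claim, specifically the inference from ``the whole matching $M_t$ leaves reachability unchanged'' to ``each individual label of $M_t$ is redundant'', which is exactly where the monotonicity sandwich is essential. A secondary point to handle cleanly is the treatment of empty matchings together with the colouring lemma, to justify that no branch can silently continue without placing a genuinely new label.
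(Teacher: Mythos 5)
Your proof is correct and follows essentially the same route as the paper: each surviving step must strictly enlarge reachability (else the new labels are redundant and the branch is cut for non-minimality), and reachability can only grow finitely many times before temporal connectivity cuts the branch. Your monotonicity-sandwich argument makes rigorous a step the paper only asserts, and your explicit $n(n-1)$ depth bound and finite-branching remark are welcome but not a different approach.
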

	
	\begin{proof}
		Every time a new matching with a new label $\ell$ is added to the labelling, either the labelling becomes non-minimal and the branch is cut, or the labelling stays minimal, meaning that the latest labels added are all necessary, implying that the reachability (or accessibility) of at least one vertex must have strictly increased through labels $\ell$. 
		Since the reachability of vertices is at most the whole vertex set, the latter option can occur at most a finite amount of times before the corresponding labelling is temporally connected, at which point the branch is cut as well.
		Thus, every branch is eventually cut, whether it be because of non-minimality or temporal connectivity, and PTGen terminates.
	\end{proof}
	
	Other various optimisations were added, such as multi-threading and variable stacks. PTGen is coded in Rust and is available at \url{https://gitlab.com/echrstnn/max-temporality}.
	
	The generalisation to proper labellings introduces an explosion of possibilities compared to happy labellings, but the restrictions more than make up of for that, especially the fact that we limit ourselves to cycle graphs. As a rough comparison, PTGen ran on a standard laptop for cycles up to size $n=14$, taking a computation time of at most a couple of hours. This gave us enough evidence and intuition for our so-called generator labelling which was the main goal for adapting STGen.  
	
	%

\subsection{Generator labelling}
\label{subsec:generator_labelling}

For simplicity, we will first focus on the generator labelling for cycles of size $n$ being a multiple a four, denoting this class as \texttt{4k Cycles}. After the definitions, proofs and analysis, we state how to straightforwardly generalize to even cycles, and then to odd cycles for which the generator labelling is slightly different. The densities are slightly different between these specific subclasses, but cover the whole class of cycles, allowing us to draw conclusions for densities of \texttt{Cycles}.

\begin{definition}[Generator labelling of even cycle graph $G$]
	See also \Cref{fig:generator_labelling}. Let list $L_1$ initially contain all odd natural numbers between 1 and $n-1$ included, and list $L_2$ contain all even natural numbers between 1 and $n-1$ included. Let lists $L'_1$ and $L'_2$ be initially empty. 
	Take some edge $e$ of $G$, and let $e_{cc} = e_c = e$ and apply the following repeating process. 
	Put all labels from list $L'_1$ on edges $e_c$ and $e_{cc}$, and then assign the smallest label of $L_1$ to $e_c$, the second smallest to $e_{cc}$, the third back to $e_c$, and so on, distributing the labels of $L_1$ to edges $e_c$ and $e_{cc}$ in this alternating manner.
	Finally, move the smallest label from $L_1$ to $L'_1$ and remove the largest label from $L_1$, and move edge $e_c$ to be the next clockwise edge and edge $e_{cc}$ to be the next counter-clockwise edge. 
	Repeat this process, by taking lists $L_2$ and $L'_2$ next, and then repeat it again using lists $L_1$ and $L'_1$ again, and so forth alternating between these pairs of lists. The last time this process is repeated is when $e_c = e_{cc}$ again, which assigns labels to the last unlabelled edge, denoted $e'$. 
	Ultimately, assign label $\ell + 2$ to $e'$, where $\ell$ is the largest label on $e'$.
\end{definition}	

\begin{figure}[h]
	\begin{subfigure}{.5\textwidth}
		\includegraphics[width=.88\textwidth]{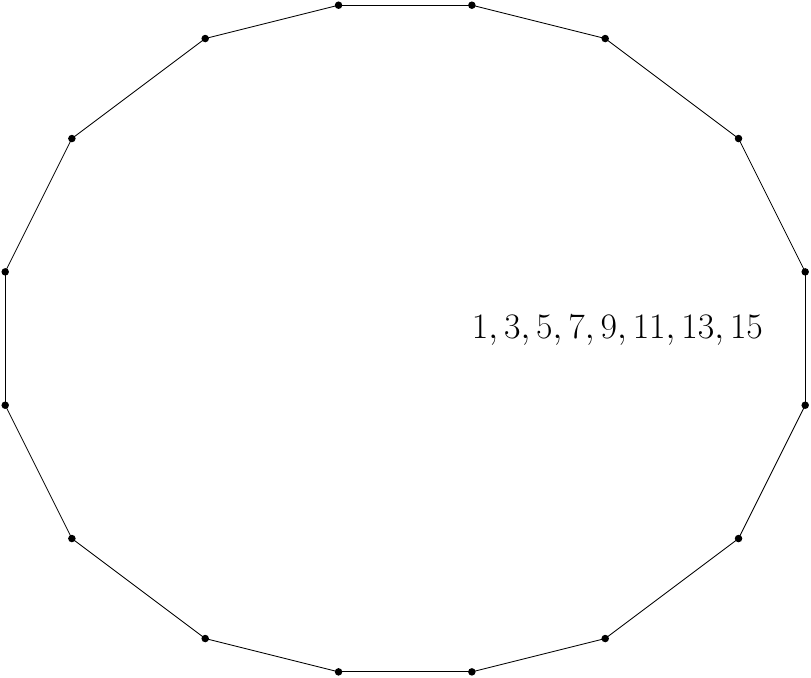}
		\caption{$e_c = e_{cc}$, $L_1 = (1, 3, 5, ..., 15)$ and $L'_1 = \emptyset$.}
	\end{subfigure}
	\hfill
	\begin{subfigure}{.5\textwidth}
		\includegraphics[width=.88\textwidth]{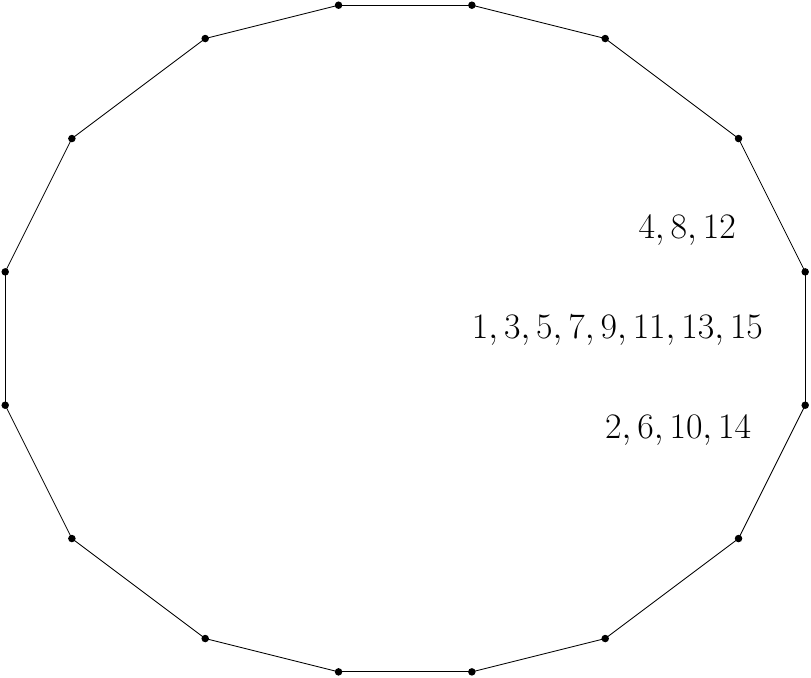}
		\caption{$L_2 = (2, 4, 6, ..., 14)$ and $L'_2 = \emptyset$.}
	\end{subfigure}
	\begin{subfigure}{.5\textwidth}
		\includegraphics[width=.88\textwidth]{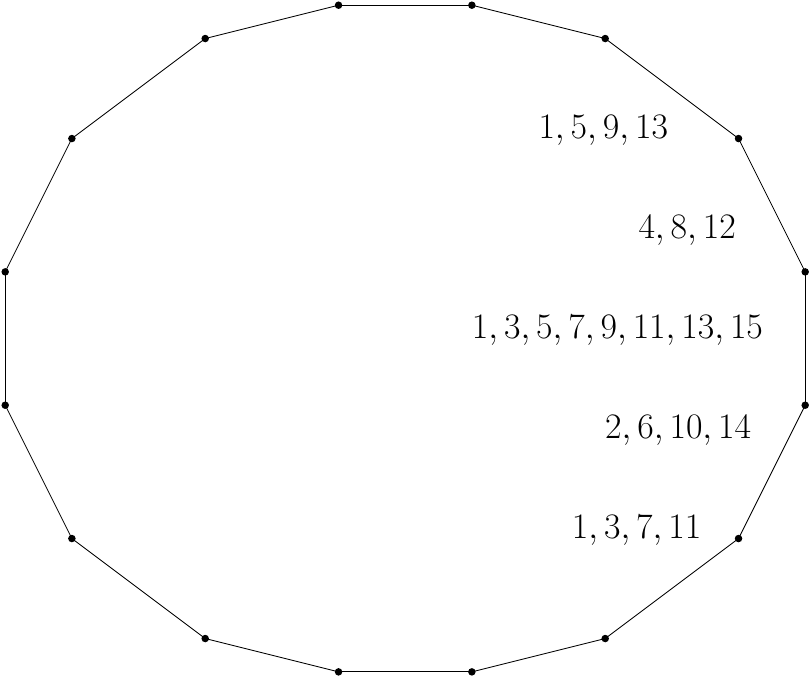}
		\caption{$L_1 = (3, 5, 7, 9, 13)$ and $L'_1 = (1)$.}
	\end{subfigure}
	\hfill
	\begin{subfigure}{.5\textwidth}
		\includegraphics[width=.88\textwidth]{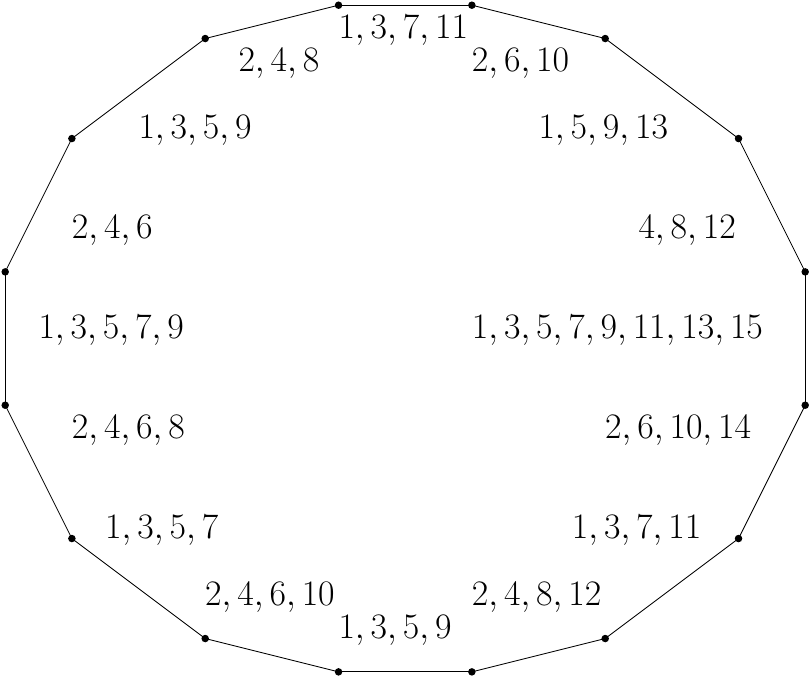}
		\caption{$L_1 = \emptyset$, $L'_1 = (1, 3, 5, 7)$ and $e_c = e_{cc}$ again.}
	\end{subfigure}
	\caption{The generator labelling for $n=16$, starting on rightmost edge $e$, and repeating process with lists in captions. After the processes finish repeating, label $9$ is assigned on leftmost edge $e'$. 
	}
	\label{fig:generator_labelling}
\end{figure}

Proving this labelling results in a minimal and temporally connected graph for any order $n$ cycle graph is more complex than for previous labellings or constructions. This is mostly due to the inherent unreadability of the multitude of journeys at play in the labelling representation of these temporal graphs. 
For this reason, we introduce a different type of representation which is very similar to the so-called link stream representation used in various temporal graph theory papers \cite{latapy2018stream, li2018streaming, zhao2016link}, and thus by slight misuse of terminology, we simply refer to it as the link stream representation.
Link streams intuitively focus more on the time edge aspect of a temporal graph, and less on the structure of the underlying graph. Since we already know the underlying structure of the graph (being a cycle graph), and since we have trouble distinguishing journeys with the labelling representation, link streams seem to be perfect to represent our generator labelling. As an illustrative example, the link stream representation of \Cref{fig:generator_labelling} is given in \Cref{fig:generator_link_stream}, where all edges of the cycle are represented in one dimension, the horizontal dimension, and time is represented in the other, the vertical dimension. A ``label'' is thus represented as a time edge at the intersection of the corresponding edge and value.

\begin{figure}[]
	\centering
	\begin{subfigure}{.75\textwidth}
		\includegraphics[width=\textwidth]{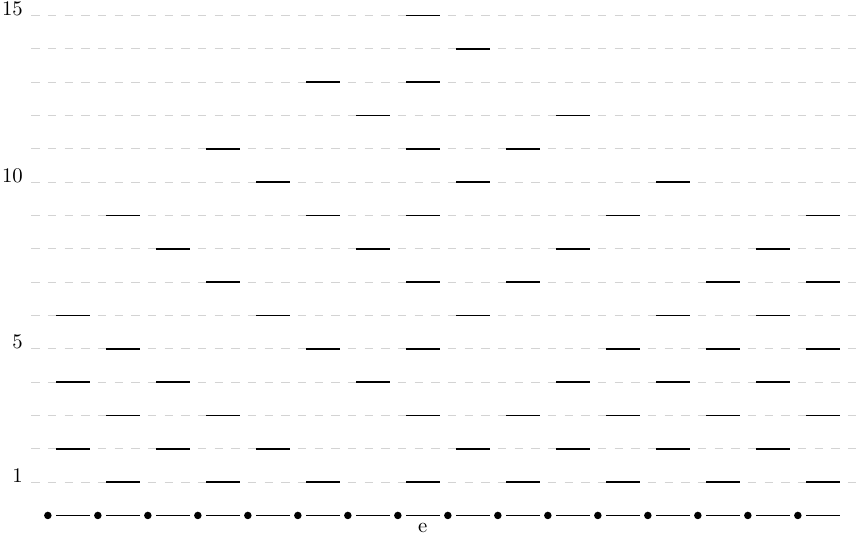}
		\caption{Without journeys.}
	\end{subfigure}
	\begin{subfigure}{.75\textwidth}
		\includegraphics[width=\textwidth]{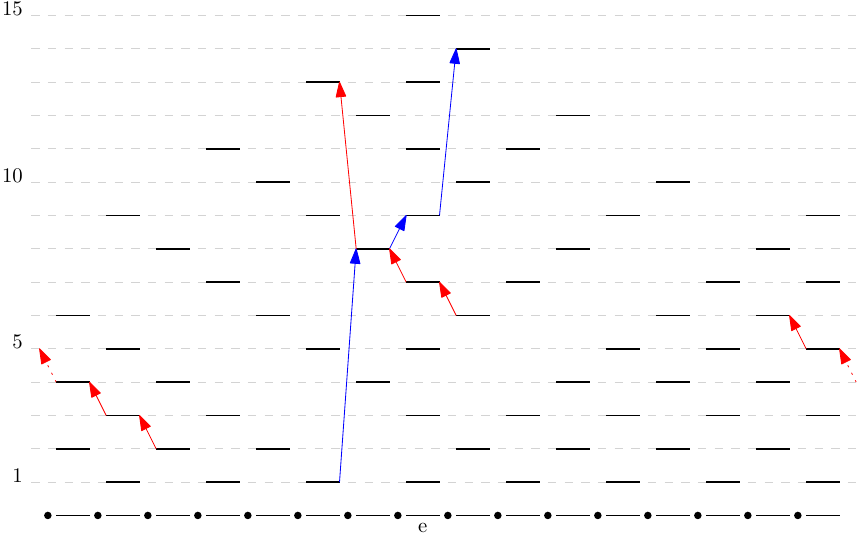}
		\caption{With two clockwise journeys in red and one counter-clockwise journey in blue. Only the outermost red journey is prefix-foremost.}
	\end{subfigure}
	\begin{subfigure}{.75\textwidth}
		\includegraphics[width=\textwidth]{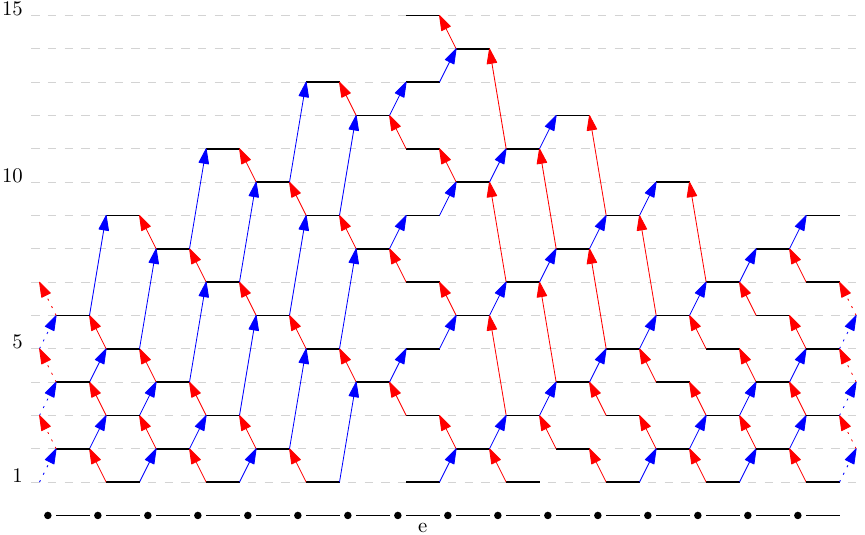}
		\caption{With all dominating journeys.}
	\end{subfigure}
	\caption{The generator labelling for $n=16$ in the link stream representation. Edge $e$ is shown in the middle, and edge $e'$ is completely to the right and connects the rightmost and leftmost vertices. 
	}
	\label{fig:generator_link_stream}
\end{figure}

In this representation, a journey informally corresponds to a (sometimes steep) ``staircase'' of time edges which, obeying the direction of time, cannot go down. 
We now remind and define some concepts concerning journeys in this setting, all of which are illustrated in \Cref{fig:generator_link_stream}. 
A prefix of a journey $(v_1, \ell_1, v_2, \ell_2, ..., \ell_{k-1}, v_k)$ is a part of the journey cut back from the arrival vertex, \textit{i.e.} $(v_1, \ell_1, v_2, \ell_2, ..., \ell_{i-1}, v_i)$ for some $i \leq k$, and a suffix of a journey is a part of the journey cut back from the starting vertex, \textit{i.e.} $(v_j, \ell_j, v_{j+1}, \ell_{j+1}, ..., \ell_{k-1}, v_k)$ again for some $j \leq k$.
A foremost journey from vertex $u$ to vertex $v$ is a journey which arrives at the earliest time among all journeys from $u$ to $v$. A prefix-foremost journey from $u$ to $v$ is a journey which prefixes are all foremost, In other words, a prefix-foremost journey always takes the earliest edges possible on its journey.
We define a clockwise journey as a journey which takes only time edges to the left, and a counter-clockwise journey is composed of only time-edges going to the right.
Finally, we define dominating journeys as clockwise (\textit{resp.} counter-clockwise) journeys such that no other clockwise (\textit{resp.} counter-clockwise) journey exists which covers all its vertices or more. In particular, since a clockwise (\textit{resp.} counter-clockwise) dominating journey is the only clockwise (\textit{resp.} counter-clockwise) journey covering its vertices, it is prefix-foremost.

The following technical lemmas are needed to then prove minimality of the generator labelling.

\begin{lemma}
	\label{lemma:dominating_journey_characterisation}
	In a cycle graph $G = (V, E)$ without any journey covering $V$, a clockwise journey $\mathcal{J} = ((\{v_j, v_{j-1}\}, t_1), (\{v_{j-1}, v_{j-2}\}, t_2), ..., (\{v_i, v_{i-1}\}, t_k))$ is dominating if and only if:
	\begin{itemize}
		\item it starts at the earliest date possible, \textit{i.e.} there exists no time edge $(\{v_j, v_{j-1}\}, t)$ nor $(\{v_{j+1}, v_{j}\}, t)$ with $t < t_1$;
		\item it ends at the latest date possible, \textit{i.e.} there exists no time edge $(\{v_i, v_{i-1}\}, t)$ nor $(\{v_{i-1}, v_{i-2}\}, t)$ with $t > t_k$;
		\item no other time edges exist between successive time edges, \textit{i.e.} for all successive pairs of time edges $(\{v_a, v_{a-1}\}, t_b)$ and $(\{v_{a-1}, v_{a-2}\}, t_c > t_b)$ of $\mathcal{J}$, there exists no time edge $(\{v_a, v_{a-1}\}, t)$ or $(\{v_{a-1}, v_{a-2}\}, t)$ with $t_b < t < t_c$.
	\end{itemize}
	A symmetric characterisation holds for counter-clockwise journeys.
\end{lemma}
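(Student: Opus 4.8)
The plan is to reason entirely in terms of the set of vertices a clockwise journey covers. Under the standing hypothesis that no journey covers $V$, the coverage of any clockwise journey is a \emph{proper} contiguous arc of the cycle, so "start vertex", "end vertex", and the edges just before the start ($\{v_{j+1},v_j\}$) and just after the end ($\{v_{i-1},v_{i-2}\}$) are all well-defined, and any clockwise $\mathcal{J}'$ whose coverage contains $\mathrm{cov}(\mathcal{J})$ must traverse \emph{exactly} the edges $\{v_j,v_{j-1}\},\dots,\{v_i,v_{i-1}\}$ on the common arc, possibly extending past one or both endpoints. I would prove the two implications separately, recalling from the preceding definition that a dominating journey is precisely the unique clockwise journey of maximal (no strict superset) coverage, so violating dominance means producing either a clockwise journey covering strictly more, or a distinct clockwise journey covering exactly the same set.

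For the forward implication ($\mathcal{J}$ dominating $\Rightarrow$ the three conditions) I would argue by contraposition: if any condition fails I exhibit such a competing journey $\mathcal{J}'$. If condition 1 fails via an early time edge on $\{v_{j+1},v_j\}$, I prepend it to additionally cover $v_{j+1}$; if it fails via an earlier time edge on the starting edge $\{v_j,v_{j-1}\}$ itself, I swap it in to obtain a distinct journey of equal coverage. Condition 2 is handled symmetrically by appending (to cover $v_{i-2}$) or swapping at the far end, and condition 3 by swapping an intermediate time edge on whichever of the two successive edges it lies on, always yielding a $\mathcal{J}'\neq\mathcal{J}$ with $\mathrm{cov}(\mathcal{J}')\supseteq\mathrm{cov}(\mathcal{J})$, contradicting dominance.

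For the converse I would assume all three conditions, take any clockwise $\mathcal{J}'$ with $\mathrm{cov}(\mathcal{J}')\supseteq\mathrm{cov}(\mathcal{J})$, and show $\mathcal{J}'=\mathcal{J}$. Writing $s_1<\dots<s_k$ for the times at which $\mathcal{J}'$ crosses the $m$-th arc edge, the heart of the argument is to prove $s_m=t_m$ for every $m$. I would obtain $s_m\ge t_m$ by a forward induction: condition 1 gives the base case $s_1\ge t_1$ (nothing earlier exists on the first edge), and for the step, $s_m>s_{m-1}\ge t_{m-1}$ together with condition 3 (no time edge in $(t_{m-1},t_m)$ on the $m$-th edge) forces $s_m\ge t_m$. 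A symmetric backward induction starting from condition 2 ($s_k\le t_k$) and again using condition 3 gives $s_m\le t_m$, whence $s_m=t_m$ throughout. Finally, conditions 1 and 2 forbid $\mathcal{J}'$ from crossing $\{v_{j+1},v_j\}$ before $t_1$ or $\{v_{i-1},v_{i-2}\}$ after $t_k$, ruling out any extension and forcing $\mathcal{J}'=\mathcal{J}$; hence $\mathcal{J}$ is dominating.

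I expect the two-sided induction establishing $s_m=t_m$ to be the main obstacle, since it must interleave the endpoint conditions with the "no intermediate time edge" condition while carefully tracking the strict monotonicity $s_{m-1}<s_m$ and the non-strict boundary inequalities; keeping straight which of the two edges each instance of condition 3 constrains is exactly where a slip would occur. The symmetric statement for counter-clockwise journeys follows verbatim by reversing the cyclic orientation.
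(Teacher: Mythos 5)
Your proposal is correct, and it is in fact more complete than the paper's own proof: the paper only argues the sufficiency direction (the three conditions imply dominance), by contradiction, whereas you also supply the necessity direction via the contrapositive prepend/append/swap constructions, which is needed for the ``if and only if'' as stated. For the sufficiency direction the two arguments share the same skeleton --- condition 1 anchors the start, condition 3 propagates along the arc, condition 2 anchors the end, and the ``no journey covers $V$'' hypothesis rules out a competitor that starts strictly inside $\mathcal{J}$'s arc and wraps around --- but they are organised differently. The paper performs a case analysis on the starting vertex of the competing journey $\mathcal{J}'$ and disposes of the nontrivial cases by asserting that $\mathcal{J}'$ must ``rejoin or cross'' $\mathcal{J}$ and thereby witness a violation of criterion three; your version replaces that somewhat informal rejoin/cross step with the two-sided induction $s_m \ge t_m$ (forward, from condition 1 and condition 3) and $s_m \le t_m$ (backward, from condition 2 and condition 3), forcing $s_m = t_m$ and hence $\mathcal{J}' = \mathcal{J}$. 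This buys you a more mechanical and checkable argument at the cost of slightly more bookkeeping; the induction details you flag as the delicate point do go through exactly as you describe, since $s_m > s_{m-1} \ge t_{m-1}$ places $s_m$ strictly above $t_{m-1}$ and condition 3 then leaves $s_m \ge t_m$ as the only option, and symmetrically from above.
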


\begin{proof}
	Let us focus on clockwise journeys, the proof being symmetric for counter-clockwise journeys. 
	Suppose by contradiction that a journey $\mathcal{J}$ obeys the three criteria, but is not dominating, meaning there exists some other distinct clockwise journey $\mathcal{J}'$ covering the same vertex set (or more). 
	A case analysis follows depending on which vertex $\mathcal{J}'$ starts.
	
	If journey $\mathcal{J}'$ starts from any vertex that $\mathcal{J}$ covers, except for $v_j$, then to ensure $\mathcal{J}'$ covers the vertices of $\mathcal{J}$, it needs to go all the way around the cycle graph and thus cover $V$, which is explicitly excluded in this lemma.  
	
	If $\mathcal{J}'$ starts at vertex $v_j$, and it contains some earlier time edge than the corresponding time edge in $\mathcal{J}$, then $\mathcal{J}$ doesn't respect criterion three (as this earlier time edge exists between successive time edges of $\mathcal{J}$). 
	If instead it contains a later time edge, then $\mathcal{J}'$ must rejoin or cross $\mathcal{J}$ at some point (since $\mathcal{J}$ uses the latest date of edge $\{v_i, v_{i-1}\}$ by criterion two), implying $\mathcal{J}$ again does not respect criterion three.
	Of course, if $\mathcal{J}'$ does not contain any earlier or later time edge than $\mathcal{J}$, then it will end in the same manner as $\mathcal{J}$ without any way of continuing by criterion two, meaning it is identical to $\mathcal{J}$. 
	
	Lastly, if $\mathcal{J}'$ starts on any other vertex, then since $\mathcal{J}$ respects criterion one, $\mathcal{J}'$ must arrive later than $t_1$ on edge $\{v_j, v_{j-1}\}$. By the same argument as before concerning $\mathcal{J}'$ having a later time edge than $\mathcal{J}$, the former must rejoin or cross the latter at some point, implying $\mathcal{J}$ does not respect criterion three.
	
	Since all cases end in some contradiction, being either $\mathcal{J}$ breaking one of the criteria or $\mathcal{J}'$ being identical to $\mathcal{J}$, we can thus conclude that $\mathcal{J}$ is dominating.
\end{proof}


%


%

\begin{lemma}
	\label{lemma:dominating_journeys_necessity}
	In a cycle graph $G = (V, E)$ without any journey covering $V$, a pair of clockwise and counter-clockwise journeys is necessary if:
	\begin{itemize}
		\item both start at some same vertex $v$;
		\item both are prefix-foremost;
		\item both are a suffix of a dominating journey;
		\item and they do not cross (except on vertex $v$).
	\end{itemize}
\end{lemma}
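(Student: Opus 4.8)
The plan is to read ``the pair is necessary'' as the statement that every time edge (label) appearing on either of the two journeys is a necessary label, and to prove this one label at a time by producing, for each such time edge $c$, a pair of vertices whose reachability is destroyed once $c$ is deleted. By symmetry it suffices to treat the clockwise journey; write it as $\mathcal{J}$, let $v$ be the common starting vertex, and let $w$ denote its arrival vertex. For each time edge $c$ of $\mathcal{J}$ I would take as witness the reachability from $v$ to $w$ (or, for an intermediate time edge, to the vertex of $\mathcal{J}$ reached just after $c$).

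\textbf{First, I would argue that $v$ can reach $w$ only clockwise.} A journey from $v$ to $w$ leaving $v$ counter-clockwise must travel the long way around the cycle, and would therefore cover every vertex that the counter-clockwise partner journey covers \emph{and strictly more}, since it passes the partner's arrival vertex $w'$ and continues towards $w$. Because the partner is a suffix of a \emph{dominating} counter-clockwise journey, no counter-clockwise journey covers a superset of its vertices; combined with the hypothesis that no journey covers all of $V$ and that the two journeys do not cross except at $v$, this excludes the long-way route. Hence any journey $v \leadsto w$ is clockwise.

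\textbf{Next, I would show the clockwise route is tight.} Because $\mathcal{J}$ is prefix-foremost, it reaches each of its vertices at the earliest possible time, so no clockwise journey from $v$ can be ``ahead'' of $\mathcal{J}$ at a shared vertex. Because $\mathcal{J}$ is a suffix of a dominating journey, \Cref{lemma:dominating_journey_characterisation} guarantees it ends at the latest admissible date and has no time edge strictly between any two of its consecutive crossings. Deleting a time edge $c = (\{v_a, v_{a-1}\}, t)$ then forces a journey that arrives at $v_a$ at $\mathcal{J}$'s (earliest) time to use a strictly later crossing of $\{v_a, v_{a-1}\}$; a short ``domino'' induction, relying on the no-intermediate-time-edge criterion, propagates this delay to every subsequent crossing. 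Since $\mathcal{J}$ already uses the latest admissible time on its final edge, the delayed journey runs out of available time edges before reaching $w$, so $c$ is necessary. The symmetric argument handles the counter-clockwise journey.

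\textbf{The hard part will be the first step}: cleanly excluding the counter-clockwise ``long way around'' to $w$. The non-crossing hypothesis and the absence of a $V$-covering journey are exactly what is needed to turn ``covers strictly more than a dominating journey'' into a genuine contradiction, and the argument must be careful about whether the dominating journeys start at $v$ or strictly before it — in the latter case one compares the coverage of the maximal dominating journeys rather than of their suffixes. The second step, by contrast, should reduce to a routine monotonicity and induction argument once the tightness supplied by \Cref{lemma:dominating_journey_characterisation} is in hand.
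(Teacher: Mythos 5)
Your overall architecture matches the paper's: fix the clockwise journey $\mathcal{J}_v^w$ from $v$ to its arrival vertex $w$, show that $v$ cannot reach $w$ counter-clockwise using the partner journey's domination, and then show that no clockwise alternative survives the deletion of a time edge. Your second step, however, is a genuinely different argument from the paper's. The paper shows that $\mathcal{J}_v^w$ is the \emph{unique} clockwise journey from $v$ to $w$ outright: any other such journey $\mathcal{J}$, being non-prefix-foremost, can be grafted onto the prefix of the dominating journey $\mathcal{J}_d$ ending at $v$, producing a second clockwise journey covering all of $\mathcal{J}_d$'s vertices and contradicting domination. You instead run a delay-propagation induction after deleting a single time edge, using the ``no intermediate time edge'' and ``ends latest'' criteria of \Cref{lemma:dominating_journey_characterisation}. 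Your version checks out (criterion three forces each subsequent crossing to be strictly later than $\mathcal{J}_d$'s, and criterion two kills the journey at the final edge), and it is arguably more local and self-contained; the cost is that it leans on the ``dominating $\Rightarrow$ criteria'' direction of \Cref{lemma:dominating_journey_characterisation}, which the paper states but only proves in the converse direction, whereas the paper's argument works directly from the definition of domination.

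The genuine gap is in your first step, and you have correctly located it without closing it. Domination says that no \emph{other} counter-clockwise journey covers all the vertices of the full dominating journey $\mathcal{J}_d$ or more; it says nothing directly about journeys covering a superset of the vertices of the \emph{suffix} $\mathcal{J}_v^u$. A hypothetical long counter-clockwise journey $\mathcal{J}'$ from $v$ past $u$ covers more than the suffix but need not cover the vertices of $\mathcal{J}_d$ preceding $v$, so ``comparing the coverage of the maximal dominating journeys'' does not by itself yield a contradiction. The missing idea, which is exactly what the paper supplies, is this: take $\mathcal{J}'$ prefix-foremost without loss of generality; since prefix-foremost journeys issued from $v$ in a fixed direction are nested, $\mathcal{J}_v^u$ is a prefix of $\mathcal{J}'$, so $\mathcal{J}' = \mathcal{J}_v^u \cdot \mathcal{J}_u^{u'}$ for some continuation $\mathcal{J}_u^{u'}$ departing after $\mathcal{J}_v^u$ (hence after $\mathcal{J}_d$) arrives at $u$. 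Then $\mathcal{J}_d \cdot \mathcal{J}_u^{u'}$ is a counter-clockwise journey covering strictly more than $\mathcal{J}_d$ (using that no journey covers $V$), contradicting domination. Without this concatenation step your exclusion of the long way around does not go through, so you should make it explicit.
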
 

\begin{proof}
	We prove that such a pair of clockwise and counter-clockwise journeys, say journey $\mathcal{J}_v^w$ which clockwise goes up to vertex $w$, and journey $\mathcal{J}_v^u$ which counter-clockwise goes up to vertex $u$, is necessary for reachability from $v$ to $w$ and from $v$ to $u$ respectively. \textit{W.l.o.g.} we give the proof for the former only, the proof for the latter being symmetric. We first prove that no counter-clockwise journey can reach vertex $w$, and then that the only clockwise journey that can reach $w$ is journey $\mathcal{J}_v^w$, from which it follows that this journey is necessary.
	
	First note that vertex $v$ cannot reach further than $u$ in a counter-clockwise manner. Indeed, if by contradiction we suppose there is some counter-clockwise journey $\mathcal{J}_v^{u'}$ from $v$ to vertex $u'$ such that $u'$ is positioned further than $u$, then \textit{w.l.o.g.} we may consider $\mathcal{J}_v^{u'}$ to be prefix-foremost (if it is not, then we can make it so by changing its time edges for the earliest possible).
	Since $\mathcal{J}_v^{u}$ and $\mathcal{J}_v^{u'}$ are both prefix-foremost clockwise journeys, 
	we know that $\mathcal{J}_v^{u}$ must be a prefix of journey $\mathcal{J}_v^{u'}$, \textit{i.e.} $\mathcal{J}_v^{u'}$ is the concatenation of journeys $\mathcal{J}_v^{u}$ and say $\mathcal{J}_u^{u'}$. Journey $\mathcal{J}_v^{u}$ is a suffix of a dominating journey $\mathcal{J}_d$, meaning no other counter-clockwise journey covers the vertices of $\mathcal{J}_d$ or more, but now we obtain our contradiction: the concatenation of $\mathcal{J}_d$ and $\mathcal{J}_u^{u'}$ covers more vertices (the only case where this wouldn't be true is if $\mathcal{J}_d$ covered $V$ which is explicitly excluded from the lemma statement). 
	Since $\mathcal{J}_v^w$ and $\mathcal{J}_v^u$ don't cross (except on vertex $v$), we now know that $v$ cannot reach $w$ through a counter-clockwise journey.
	
	To finish the proof, we show $\mathcal{J}_v^w$ is the only clockwise journey that can reach $w$, meaning all its edges are necessary. 
	Suppose by contradiction another clockwise journey $\mathcal{J}$ exists from $v$ to $w$. It cannot be a prefix-foremost journey, as by definition this would be journey $\mathcal{J}_v^w$. Since $\mathcal{J}$ is not prefix-foremost, it uses some edge $e$ with label $l'$ whereas $\mathcal{J}_v^w$ uses edge $e$ with some label $l < l'$. However, we remind the reader that $\mathcal{J}_v^w$ is a suffix of a dominating journey $\mathcal{J}_d$. Altogether, this means another journey exists covering the same vertices as $\mathcal{J}_d$, being the concatenation of the prefix of $\mathcal{J}_d$ up to vertex $v$, and $\mathcal{J}$. By definition of dominating journeys, this is a contradiction.
\end{proof}

We note that if the pair of journeys from \Cref{lemma:dominating_journeys_necessity} collectively covers $V$, then vertex $v$ can reach all vertices through these journeys.

%

\begin{theorem}
	\label{theorem:generator_labelling_minimalTC}
	The generator labelling yields a minimal temporally connected graph.
\end{theorem}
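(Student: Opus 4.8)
The plan is to prove temporal connectivity and minimality separately, in both cases reducing everything to the two technical lemmas already established, and working throughout in the link stream representation where a clockwise journey is a staircase climbing to the left and a counter-clockwise journey is a staircase climbing to the right. Before invoking either lemma, I would dispatch the hypothesis they share, namely that \emph{no single journey covers all of $V$}. This must be checked once for the generator labelling: I would argue from the list-distribution process that the labels cannot form a single monotone chain spanning all $n$ edges in one consistent direction, since each list $L_1, L_2$ is deliberately split between the clockwise edge $e_c$ and the counter-clockwise edge $e_{cc}$, and the smallest surviving label is repeatedly shelved into $L'_1, L'_2$ while the largest is discarded, so the label values do not increase monotonically all the way around the cycle. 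With this established both \Cref{lemma:dominating_journey_characterisation} and \Cref{lemma:dominating_journeys_necessity} become applicable.

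For temporal connectivity I would, for each starting vertex $v$, exhibit the prefix-foremost clockwise journey $\mathcal{J}_v^w$ and the prefix-foremost counter-clockwise journey $\mathcal{J}_v^u$ emanating from $v$, and use the characterisation in \Cref{lemma:dominating_journey_characterisation} to recognise each as a suffix of a dominating journey, together with the fact that the two staircases do not cross except at $v$. The heart of this step is showing that the two journeys jointly cover all of $V$: I would pin down, via the three criteria of \Cref{lemma:dominating_journey_characterisation}, exactly the vertices $w$ and $u$ at which the clockwise and counter-clockwise prefix-foremost journeys stall, and check from the interleaved label pattern that $w$ and $u$ are adjacent (or overlap) on the far side of the cycle, so that the union of the two covered vertex sets is all of $V$. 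Once coverage is established, the note following \Cref{lemma:dominating_journeys_necessity} immediately yields that $v$ reaches every vertex, and since $v$ was arbitrary the graph is temporally connected.

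For minimality I would reuse the same family of journey pairs. Observe that as $v$ ranges over all $n$ vertices, the dominating-suffix clockwise and counter-clockwise journeys constructed in the connectivity step collectively traverse every time edge of the construction; I would verify this traversal claim directly from the label pattern. Since each such pair satisfies all four hypotheses of \Cref{lemma:dominating_journeys_necessity} and is therefore necessary, every label lies on a necessary pair and hence is itself necessary. This shows no label is redundant, so the labelling is minimal, completing the proof for the multiple-of-four case; the general even and odd cases follow by the straightforward modifications to the construction announced in \Cref{subsec:generator_labelling}.

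The main obstacle I expect is the bookkeeping that underlies both the coverage claim and the traversal claim. Precisely tracking, as a function of the starting vertex $v$, which labels the clockwise and counter-clockwise prefix-foremost journeys pick up, proving that their reaches are exactly complementary so that each pair covers $V$, verifying the non-crossing condition, and then checking that these pairs collectively exhaust every time edge, all hinge on the exact arithmetic of the alternating odd/even list distribution. This is the delicate part, and it must be carried out with small but genuine modifications in each of the multiple-of-four, general even, and odd regimes that the construction distinguishes.
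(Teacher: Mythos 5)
Your overall strategy---reduce everything to \Cref{lemma:dominating_journey_characterisation} and \Cref{lemma:dominating_journeys_necessity} in the link stream picture---is the right one, and it is the same engine the paper uses. But your plan rests on a blanket claim that is false for this labelling: that for \emph{every} vertex $v$, both prefix-foremost journeys out of $v$ are suffixes of dominating journeys and cross only at $v$. The paper explicitly exhibits exceptions. Already in $C_8$, the counter-clockwise prefix-foremost journey from $v_2$ is \emph{not} dominating, so \Cref{lemma:dominating_journeys_necessity} cannot be invoked at $v_2$, and the remaining time edges of the base have to be proven necessary by a separate ad hoc argument (without them $v_2$ cannot reach $v_{-4}$ by any journey in either direction). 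Likewise, the two dominating journeys emanating from the seed vertices cross \emph{inside} the seed, not only at their common start, so the fourth hypothesis of the lemma fails there and reachability of the seed vertices must be argued directly. A uniform application of the lemma over all $n$ vertices, as you propose, therefore breaks down at precisely these vertices, and your minimality argument inherits the same hole: the base time edges not covered by any lemma-certified pair would be left unproven.

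The second issue is that everything you defer as ``bookkeeping''---which labels each prefix-foremost journey picks up, that the clockwise and counter-clockwise reaches are complementary, that the pairs exhaust all time edges---is not a finishing touch but the entire content of the proof. The paper's choice of induction (base case $C_8$ with its seed/trunk/branch/base decomposition, inductive step $C_{4k}\to C_{4k+4}$ tracking how each dominating journey is extended by exactly two time edges or recombined via the roots and apex) is exactly the machinery that makes this bookkeeping tractable, because it only ever reasons about how journeys \emph{change} under a local extension rather than about their global arithmetic. A direct, non-inductive verification of your coverage and traversal claims from the list-distribution arithmetic may well be possible, but as written your proposal is a plan for a proof rather than a proof, and the plan as stated would need to be amended to carve out the exceptional vertices before it could be executed.
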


\begin{proof}
	The proof is by induction. 
	Consider cycle graph $C_8$ as our base case. Apply the generator labelling and compute the dominating journeys. See \Cref{fig:generator_labelling_proof}. Although we are considering cycle graphs in this section, the authors found that the stream link representation (very vaguely) resembles a tree, especially when considering larger cycles. For this reason, let us define some specific sets of time edges as follows. 
	Let the five earliest time edges on edges $\{v_{-2}, v_{-1}\}$, $\{v_{-1}, v_{1}\}$ and $\{v_{1}, v_{2}\}$ be referred to as the seed, and the three latest time edges as the trunk. Let the latest two time edges on edges $\{v_{-4}, v_{-3}\}$ and $\{v_{-3}, v_{-2}\}$ be referred to as a branch, as well as the ones on edges $\{v_{2}, v_{3}\}$ and $\{v_{3}, v_{4}\}$. More specifically, let the former be branch $B_2$, as the dominating clockwise journey starting at vertex $v_2$ ends in these edges, and the latter $B_{-2}$ as the dominating counter-clockwise journey starting at vertex $v_{-2}$ ends here. Finally, let the other time edges be referred to as the base. 
	Note that all time edges are part of some dominating journey, and that all dominating journeys start on a time edge with time 1 in the base (except for the two dominating journeys starting in the seed) and that all dominating journeys end in the latest time edges of branches (except for two dominating journeys ending in the trunk). 
	
	\begin{figure}[]
		\centering
		\begin{subfigure}{.75\textwidth}
			\includegraphics[width=\textwidth]{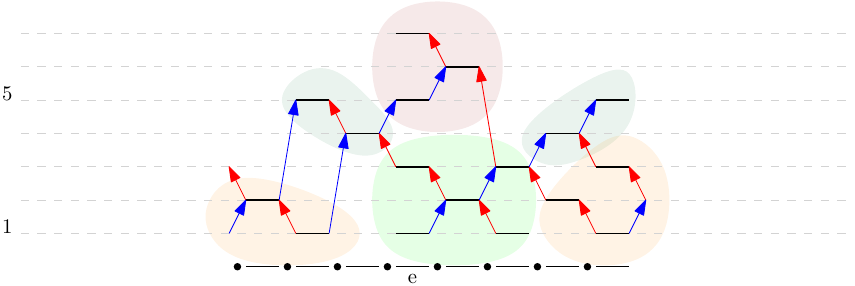}
			\caption{Base case cycle graph $C_8$, with seed, base, trunk and two branches.}
		\end{subfigure}
		\begin{subfigure}{.75\textwidth}
			\includegraphics[width=\textwidth]{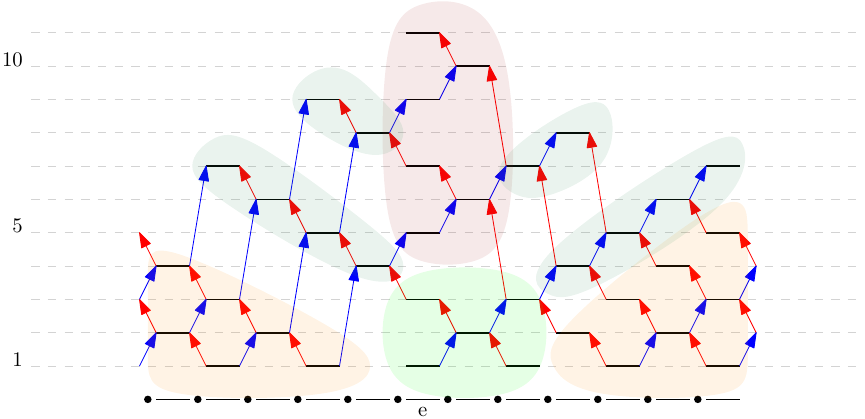}
			\caption{Induction step cycle graph $C_n$ (here $n=12$) before extending to $C_{n+4}$.}
		\end{subfigure}
		\begin{subfigure}{.75\textwidth}
			\includegraphics[width=\textwidth]{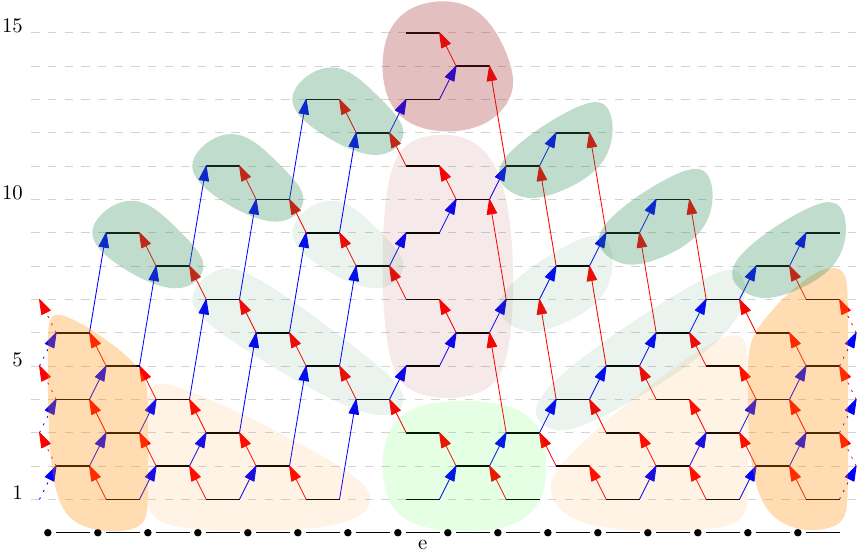}
			\caption{Apex, leaves, and roots extending $C_n$ to obtain $C_{n+4}$, adding two branches.}
		\end{subfigure}
		\caption{Illustration of the proof by induction for \Cref{theorem:generator_labelling_minimalTC}, with the seed (light green), trunk (brown), branches (green) and base (light brown). At the induction step, the apex, leaves, and roots are shown in the same colour as the structures they extend, but less transparent. 
		}
		\label{fig:generator_labelling_proof}
	\end{figure}
	
	It is possible to claim minimality and temporal connectivity for this small temporal graph, although we specifically point out that \Cref{lemma:dominating_journeys_necessity} can be used on all vertices (except for those of the seed) to prove necessity of all dominating journeys starting on these vertices, and reachability of all these vertices. 
	Note that now only the time edges of the base remain to be proven necessary. The time edges of the counter-clockwise dominating journey can be proven necessary by applying \Cref{lemma:dominating_journeys_necessity} on vertex $v_{-2}$, but it cannot be applied to vertex $v_2$ to prove the remaining time edges necessary, as its counter-clockwise prefix-foremost journey is not dominating. However, they are proven necessary through the ad hoc argument: without these edges $v_2$ cannot reach $v_{-4}$, as any clockwise journey by definition cannot reach it except for its clockwise dominating journey which relied on these time edges, and any counter-clockwise journey reaches at most vertex $v_4$. 
	Concerning reachability of the vertices of the seed, it is possible for them to use the dominating journeys starting in the seed to go to any other vertex (note that these journeys do not cross outside of in the seed, but do cover $V$). 
	
	Now, in the inductive step, this structure of seed, base, trunk and branches remains or gets extended when growing the generator labelling for some $C_n$ to some $C_{n+4}$. More precisely, the seed remains as is, the base gets extended with so-called roots, the trunk with a so-called apex, and the branches with leaves. Also, two new branches are created in every inductive step, which sprout from the top of the trunk.
	Underlying all this, we prove that in the inductive step, the dominating journeys get extended slightly, are modified, or created, in a precise manner which ultimately allows us to again use \Cref{lemma:dominating_journeys_necessity} to prove minimality and temporal connectivity, in a very similar manner as how we did for $C_8$.
	
	
	Now, suppose we have a cycle graph $C_{4k}$ with the generator labelling which has been proven minimal and temporally connected, specifically through applying  \Cref{lemma:dominating_journeys_necessity} on all vertices except for the seed.
	Add vertices $v_{2k + 1}$, $v_{2k + 2}$, $v_{-2k - 1}$, and $v_{-2k - 2}$ and the corresponding edges to the link stream representation so as to obtain cycle $C_{4k+4}$. See \Cref{fig:generator_labelling_proof}. This effectively breaks dominating journeys which previously used edge $\{v_{-2k}, v_{2k}\}$, whose time edges now belong to edge $\{v_{-2k}, v_{-2k-1}\}$. We will patch these halves of dominating journeys back together in what follows, although not exactly with their original half. 
	Note that now the generator labelling for $C_{4k+4}$ is exactly this labelling, with some additional time edges which are all later, \textit{i.e.} for all additional time edges $(e, t)$, there exists no already present time edge $(e, t' > t)$. 
	Let the three additional time edges extending the trunk be referred to as the apex, let the leaves be the pairs of additional time edges extending the branches (as well as creating branches $B_{2k}$ and $B_{-2k}$ from the trunk), and let the remaining additional time edges be the roots, which extend the base.
	
	Let us start by proving these additional edges are all part of some dominating journey.
	
	Leaves extending branches $B_i$ extend the corresponding dominating journey starting from vertex $v_i$. The three conditions from \Cref{lemma:dominating_journey_characterisation} hold for this extended journey, as it still starts at time 1, no additional time edges have been added in between the time edges it uses, and it ends at the latest time possible at the top of its respective leaves. Regarding the leaves that create a new branch $B_{2k}$ and $B_{-2k}$, these extend dominating journeys from vertices $v_{2k}$ and $v_{-2k}$ which ended at the top of the trunk before (we can observe two of these exist in $C_8$, and below we prove that in every inductive step two new such journeys are created). These extended journeys remain dominating by the same argument as for other branches.
	All leaves are thus part of a dominating journey.
	Note that this extends (by exactly two time edges) basically half of all previously existing dominating journeys.
	
	The other half of previous dominating journeys are broken up through the addition of the four new vertices and edges. 
	The roots serve to patch these journeys back together. Note that before, all these dominating journeys started in the base, cycled around, and climbed through the branches to finish at the top of some branch. More specifically, such a dominating journey starting from a vertex $v_i$ finished at the top of branch $B_{i-1}$ for $i > 0$ and at the top of branch $B_{i+1}$ otherwise (an exception being the journey starting from $v_{-3}$ which ends at the second largest time edge of the branch $B_{-2}$). We show this remains true after the inductive step. 
	The reconstructed dominating journey, suppose from vertex $v_i$ for $i>0$ (the explanation being symmetric for $i<0$), starts of with the same time edges it had before in the base until it reaches the roots. This means this part of the journey respects two of the conditions of \Cref{lemma:dominating_journey_characterisation}, being it starts at time 1, and no time edges exist in between its time edges as this journey was dominating before and the additional time edges are all later. Now the earliest four time edges possible are taken to continue this journey in the roots, cycling around to the other side of the link stream. This also respects the condition of having no time edges in between these four time edges, as the roots are densely packed by definition. The journey is now four time steps too late to reconnect with the other half it had before, connect it instead with the half of the journey which previously started from vertex $v_{i-4}$ for which it arrived through the roots perfectly on time. This latter half also respects the condition of having no time edges in between its time edges due to part of a dominating journey before, and no additional time edges have been added in between these time edges. Since our journey now follows the part of the dominating journey which previously started at $v_{i-4}$, it arrives at branch $B_{i-5}$, but can now continue through the leaves of $B_{i-3}$ and finally $B_{i-1}$ to end at the latest edge. By construction, this continuation through the leaves respects the conditions of \Cref{lemma:dominating_journey_characterisation} since there are no time edges in between, and it ends at the latest time possible.
	There are two exceptions to this: the reconstruction of the dominating journey starting from vertex $v_3$ only uses three time edges from the roots, before directly ending in the leaves of branch $B_2$, and the one starting from vertex $v_5$ directly goes up through the leaves of $B_2$ and $B_4$ after the roots. Both reconstructed journeys are dominating as well.
	Note that now all dominating journeys starting at vertices $v_i$ with $-2k \leq i \leq 2k$ have been extended (the ones cycling around have been broken apart and refitted first but in terms of length have been extended as well) by exactly two time edges compared to their previous length in $C_{4k}$. 
	
	Observe that some of the earliest roots have not been shown to be part of a dominating journey yet, and also that some halves of previous dominating journeys have not been refitted together yet. We show another four dominating journeys exist which start from the four new vertices, use these earliest roots, as well as the remaining parts of previous dominating journeys, and two of these journeys use the time edges of the apex. 
	Proving these four journeys are dominating is done through again applying \Cref{lemma:dominating_journey_characterisation} with the arguments already explained for the other dominating journeys, and thus we decide to forgo doing this again four more times.
	The dominating journey starting at $v_{-2k-1}$ goes clockwise, starting at time 1 and the four earliest roots, then it continues with the part of the previous dominating journey ending in branch $B_{-2k + 4}$, and goes up through the leaves of branches $B_{-2k + 2}$ and $B_{-2k}$, finishing at the latest time edge of the latter. 
	Starting at vertex $v_{-2k-2}$, we have a counter-clockwise dominating journey, starting at time 1 using only one root, linking up with part of a previous dominating journey which finished at branch $B_{2k-2}$, which is extended further through branch $B_{2k}$ and the apex to end on the second latest time edge. 
	We note that the last two dominating journeys do not cross, except on the first edge, and cover $V$.
	Continuing, we have a clockwise dominating journey starting at $v_{2k+2}$ and time 1, using two roots before using part of a previous dominating journey leading up to branch $B_{-2k-2}$, continuing through the leaves of $B_{-2k}$ and ending in the apex on the largest time edge.
	Finally, there's a counter-clockwise dominating journey from vertex $v_{2k+1}$ using three roots cycling around the link stream, pairing up with part of a previous dominating journey leading up to branch $B_{2k-4}$ which then continues through leaves of $B_{2k+2}$ and $B_{2k}$ to end on the largest time edge of that branch.
	Again, these two journeys do not cross, except on the first edge, and cover $V$.
	
	Thus, we have that all time edges are part of a dominating journey, and that basically the same journeys from $C_{4k}$ remain in $C_{4k+4}$ (albeit some of them recombined differently) and were extended by exactly two time edges. Since for all vertices but those of the seed, \Cref{lemma:dominating_journeys_necessity} was used to prove necessity of the corresponding dominating journeys, this lemma can be used again for these vertices to prove necessity of their corresponding dominating journeys, as well as reachability of these vertices to all others. For the time edges and vertices of the seed, the argument used for $C_8$ can be generalized to prove necessity and reachability as well. Finally, the last four dominating journeys which start on the four new vertices, can use \Cref{lemma:dominating_journeys_necessity} as well, since their clockwise and counter-clockwise prefix-foremost journeys are dominating and can collectively cover $V$. 
	
	In conclusion, we have proven that the base case, being the generator labelling for $C_8$, is minimal and temporally connected. Then, for any inductive step from $C_{4k}$ to $C_{4k+4}$, minimality and temporal connectivity are conserved in this labelling. Thus, the generator labelling produces a minimal and temporally connected graph for any size $4k$. (The generator labelling works for $C_4$ as well, it is easy to check, but the structure of the labelling was easier to explain with $C_8$, as for $C_4$ the labelling is composed of only the seed.)
\end{proof}


Let us now analyse the density of the generator labelling.

\begin{theorem}
	\label{theorem:generator_labelling_Tplus}
	$T^+ \geq T^+(\texttt{4k Cycles}) \geq \tfrac{1}{4}n^2 + 1$.
\end{theorem}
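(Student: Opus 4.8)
The plan is to reduce the whole statement to a single label count. The inequality $T^+ \geq T^+(\texttt{4k Cycles})$ is immediate from the definitions in \Cref{sec:preliminaries}: for $n$ a multiple of four there is, up to isomorphism, exactly one cycle graph $C_n$, so $T^+(\texttt{4k Cycles}) = T^+(C_n)$, and this is at most the maximum temporal cost $T^+$ taken over all graphs on $n$ vertices. For the second inequality, \Cref{theorem:generator_labelling_minimalTC} already guarantees that the generator labelling of $C_n$ yields a minimal and temporally connected graph $\mathcal{G}$, hence $T^+(C_n) \geq T(\mathcal{G})$. It therefore suffices to show that the generator labelling places exactly $\tfrac14 n^2 + 1$ labels; everything else is bookkeeping.

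The first step is to track the sizes of the four lists across the construction. At the start of the $j$-th iteration that uses the pair $(L_1, L'_1)$ we have $|L'_1| = j-1$ and $|L_1| = \tfrac{n}{2} - 2(j-1)$, since each such iteration moves exactly one label from $L_1$ into $L'_1$ and discards the current largest label of $L_1$; symmetrically, the $j$-th iteration using $(L_2, L'_2)$ has $|L'_2| = j-1$ and $|L_2| = (\tfrac{n}{2}-1) - 2(j-1)$. The crucial observation is that an interior iteration (one with $e_c \neq e_{cc}$) places all of $L'_1$ on \emph{both} $e_c$ and $e_{cc}$ and splits $L_1$ between them, contributing $2|L'_1| + |L_1|$ labels. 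By the size identity this telescopes to the constant $\tfrac{n}{2}$ for every $L_1$-iteration and to $\tfrac{n}{2}-1$ for every $L_2$-iteration, independently of $j$.

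The second step is to count iterations and treat the two boundary iterations where $e_c = e_{cc}$. The process runs until $e_c$ and $e_{cc}$ meet again, which after labelling all $n$ edges gives $\tfrac{n}{2} + 1$ iterations in total, alternating between the two list-pairs and beginning (and, since $\tfrac{n}{2}$ is even, ending) with an $L_1$-iteration; this yields $\tfrac{n}{4}+1$ iterations of type $L_1$ and $\tfrac{n}{4}$ of type $L_2$. The opening iteration has $e_c = e_{cc} = e$ but $L'_1 = \emptyset$, so it still contributes $\tfrac{n}{2}$. The final $L_1$-iteration acts on the single edge $e' = e_c = e_{cc}$ with $L_1$ already exhausted, so it places only the $\tfrac{n}{4}$ copied labels of $L'_1$ instead of $\tfrac{n}{2}$, after which the concluding assignment of label $\ell + 2$ to $e'$ adds one further label. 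Summing, the $L_1$-iterations contribute $\tfrac{n}{4}\cdot\tfrac{n}{2} + \bigl(\tfrac{n}{4}+1\bigr) = \tfrac{n^2}{8} + \tfrac{n}{4} + 1$ and the $L_2$-iterations contribute $\tfrac{n}{4}\bigl(\tfrac{n}{2}-1\bigr) = \tfrac{n^2}{8} - \tfrac{n}{4}$, for a grand total of $\tfrac14 n^2 + 1$, as required.

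I expect the main obstacle to be precisely this boundary accounting rather than any one computation. Ignoring boundaries, the naive total would be $\tfrac{n^2}{4} + \tfrac{n}{4}$; the correct value is recovered only because the last $L_1$-iteration suffers a deficit of exactly $\tfrac{n}{4}$ (an empty $L_1$ and a single copy of $L'_1$ on $e'$) and the lone $+2$ label contributes exactly one more. The care needed is in verifying that $L_1$ is exhausted precisely at $e'$ — i.e.\ that the $(\tfrac{n}{4}+1)$-th $L_1$-iteration carries no $L_1$-labels — and that the auxiliary lists are copied once, not twice, on the two edges where $e_c = e_{cc}$. Once these corner cases are pinned down, the telescoping identity makes the per-iteration contributions constant and the sum collapses cleanly to $\tfrac14 n^2 + 1$.
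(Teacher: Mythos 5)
Your proposal is correct and follows essentially the same route as the paper: it invokes \Cref{theorem:generator_labelling_minimalTC} for minimality and temporal connectivity, and then counts labels via the observation that each pair of clockwise/counter-clockwise edges receives a constant $\tfrac{n}{2}$ (resp.\ $\tfrac{n}{2}-1$) labels because the auxiliary lists grow by one as the main lists shrink by two, with $e'$ contributing the remaining $\tfrac{n}{4}+1$. Your bookkeeping of the list sizes and the boundary iterations is in fact more explicit than the paper's, which states the same invariant informally.
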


\begin{proof}
	By \Cref{theorem:generator_labelling_minimalTC}, the generator labelling results in a minimal temporally connected graph. 
	By definition, this labelling assigns $\tfrac{n}{2}$ labels to edge $e$, which are from list $L_1$. Then it distributes $\tfrac{n}{2}-1$ labels to the next pair of clockwise/counter-clockwise edges, from list $L_2$. 
	Continuing, it assigns 1 label to each of the next pair of edges, from $L'_1$, and distributes $\tfrac{n}{2}-2$ labels on them from $L_1$, and then assigns 1 label to each of the next pair of edges, from $L'_2$, and distributes $\tfrac{n}{2}-3$ labels on them from $L_2$.
	This continues in this fashion, assigning one more label to each of the pair of edges as lists $L'_1$ and $L'_2$ grow, and distributing two less labels as lists $L_1$ and $L_2$ deplete. Note that this essentially adds two labels and removes two labels each time, so this remains a total quantity of $\frac{n}{2}$ and $\frac{n}{2}-1$ \textit{resp.} on each pair of edges.
	Finally, on edge $e'$, the generator labelling assigns $\tfrac{n}{4}+1$ labels, which in total gives:
	\begin{align*} 
		T = &\frac{n}{4} + 1 + (\frac{n}{2} + \frac{n}{2}-1 + \frac{n}{2} + \frac{n}{2}-1 + ... +\frac{n}{2} + \frac{n}{2}-1) \hfill\texttt{\qquad repeating $\frac{n}{4}$ times}\\
		= &\frac{n}{4} + 1 + \frac{n}{4}(n-1)\\
		= &\frac{1}{4}n^2 + 1
	\end{align*}
\end{proof}

The temporal cost of the generator labelling is the highest presented in this paper, and with it also comes the highest temporality, namely on edge $e$.

\begin{corollary}
	\label{corollary:generator_labelling_tauplus}
	$\tau^+ \geq \tau^+(\texttt{4k Cycles}) \geq \tfrac{1}{2}n$.
\end{corollary}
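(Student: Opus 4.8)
The plan is to read this off as an immediate bookkeeping consequence of the two preceding results about the generator labelling, since the hard work has already been done. By \Cref{theorem:generator_labelling_minimalTC}, the generator labelling produces a minimal temporally connected graph on the $4k$-cycle, so every label it assigns is necessary. Consequently the temporality of this particular labelling is a legitimate witness lower bound for $\tau^+(\texttt{4k Cycles})$: it suffices to exhibit one edge carrying many labels and invoke minimality to guarantee none of them may be dropped.

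First I would identify the heavily-loaded edge. Inspecting the definition of the generator labelling, at the very first iteration of the repeating process we have $e_c = e_{cc} = e$, so the alternating distribution of list $L_1$ between $e_c$ and $e_{cc}$ collapses entirely onto the single edge $e$. Since $L_1$ initially holds all odd naturals in $[1, n-1]$, and for $n$ a multiple of four there are exactly $\tfrac{n}{2}$ such values, edge $e$ receives $\tfrac{n}{2}$ labels. This is precisely the count already recorded in the proof of \Cref{theorem:generator_labelling_Tplus}, and the surrounding text confirms that $e$ is where the maximum temporality sits, so no further edge need be examined.

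From here the conclusion follows in one step: the labelling being minimal (\Cref{theorem:generator_labelling_minimalTC}) means all $\tfrac{n}{2}$ labels on $e$ are necessary, hence $\tau(\mathcal{G}) \geq \tfrac{n}{2}$ for this minimal temporally connected $\mathcal{G}$, giving $\tau^+(\texttt{4k Cycles}) \geq \tfrac{1}{2}n$. The outer inequality $\tau^+ \geq \tau^+(\texttt{4k Cycles})$ is then just the subclass/superclass bound from the preliminaries, as the $4k$-cycle is one graph on $n$ vertices and $\tau^+$ is the maximum of $\tau^+(G)$ over all such graphs.

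I do not expect a genuine obstacle: the content of the corollary is entirely carried by \Cref{theorem:generator_labelling_minimalTC}, and what remains is a short verification. The only points needing care are confirming the $\tfrac{n}{2}$ count (that the odd values in $[1,n-1]$ number exactly $\tfrac{n}{2}$ when $n$ is a multiple of four) and noting explicitly that the first-iteration collapse $e_c = e_{cc} = e$ is what concentrates all of $L_1$ on $e$; everything else is the standard class-domination argument.
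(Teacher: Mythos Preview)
Your proposal is correct and matches the paper's approach exactly: the paper does not supply a separate proof for this corollary, merely noting in the preceding sentence that the highest temporality occurs on edge $e$, whose $\tfrac{n}{2}$ labels were already counted in the proof of \Cref{theorem:generator_labelling_Tplus}. Your write-up simply makes explicit the bookkeeping the paper leaves implicit.
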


We claim that the generator labelling produces a minimal temporally connected graph for all even $n$. This can be proven by either extending our proof so the inductive step is from $n$ to $n+2$, showing this extends dominating journeys by exactly one time edge instead of two. This adds some symmetry case analysis to our already non-trivial proof, which is why we choose to omit it. Another option is to keep the inductive step from $n$ to $n+4$ but start with a cycle graph $C_{4k+2}$ instead of $C_{4k}$, such as $C_6$, which avoids the symmetry case analysis. Analysis of the density as in \Cref{theorem:generator_labelling_Tplus} and \Cref{corollary:generator_labelling_tauplus} reveals that for all $n$ even, $\tau^+(\texttt{Even Cycles}) \geq \tfrac{1}{2}n$ as well but $T^+(\texttt{Even Cycles}) \geq \tfrac{1}{4}n^2 + 1$ which are the same results as for \texttt{4k Cycles}.\footnote{The trivial cycle of size $2$ is an exception: let the generator labelling add label 1 on the single edge.}

Also, we can adapt the generator labelling slightly, so that it works on odd cycles.

\begin{definition}[Generator labelling of odd cycle graph $G$]
	See also \Cref{fig:generator_labelling_odd}. 
	Let list $L_1$ initially contain all odd natural numbers between 1 and $n$ included, and list $L_2$ contain all even natural numbers between 1 and $n$ included. Let list $L'$ be initially empty. 
	Take some edge $e$ of $G$ and put all labels from $L_1$ on it. Remove the largest label from $L_1$. Now consider the clockwise incident edge $e_c$ and the counter-clockwise incident edge $e_{cc}$, and assign the smallest label from $L_2$ to $e_c$, the second smallest to $e_{cc}$, the third back to $e_c$, and so on, distributing the labels of $L_2$ to edges $e_c$ and $e_{cc}$ in this alternating manner. Remove the largest label from $L_2$, and let $e_c$ be the next clockwise edge, and $e_{cc}$ the next counter-clockwise edge.
	Now, repeat the following process. 
	Distribute the labels from $L'$ to edges $e{cc}$ and $e_c$ in the alternating manner starting on edge $e_{cc}$ with the smallest label. When all labels from $L'$ have been distributed, continue the alternating distribution with the labels from $L_1$. Finally, move the smallest label from $L_1$ to $L'$ and remove the largest label from $L_1$, and let edge $e_c$ be the next clockwise edge and edge $e_{cc}$ be the next counter-clockwise edge. 
	Repeat this process, by using list $L_2$ next and starting the distribution on $e_c$, and then repeat it again using list $L_1$ and starting on $e_{cc}$ again, and so forth alternating between these pairs of lists and edges $e_c$ and $e_{cc}$. 
	The last time this process is repeated is when $e_c$ and $e_{cc}$ are incident edges.
\end{definition}

\begin{figure}[h]
	\begin{subfigure}{.5\textwidth}
		\includegraphics[width=.88\textwidth]{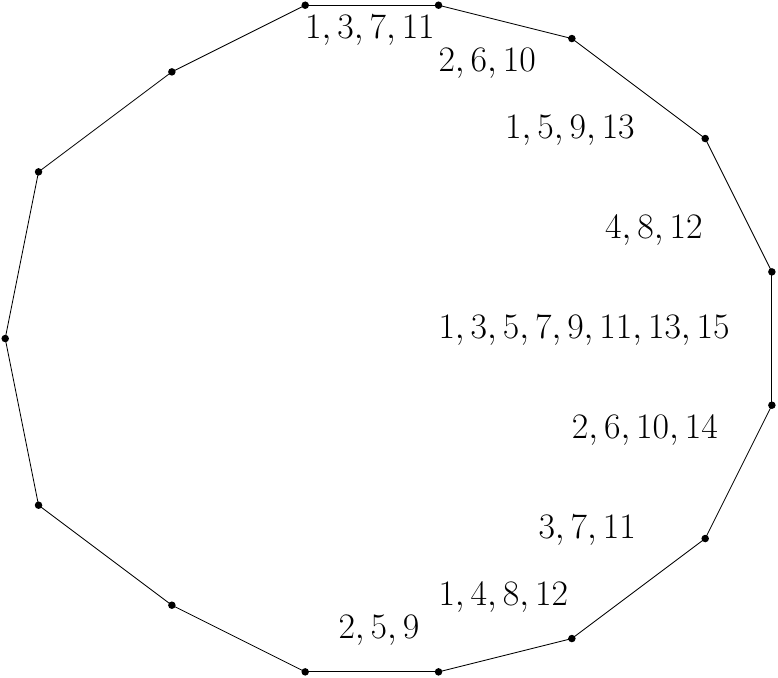}
		\caption{Start on $e_{cc}$, $L' = (1,2)$ and $L_1 = (3,5,7,9,11)$.}
	\end{subfigure}
	\hfill
	\begin{subfigure}{.5\textwidth}
		\includegraphics[width=.88\textwidth]{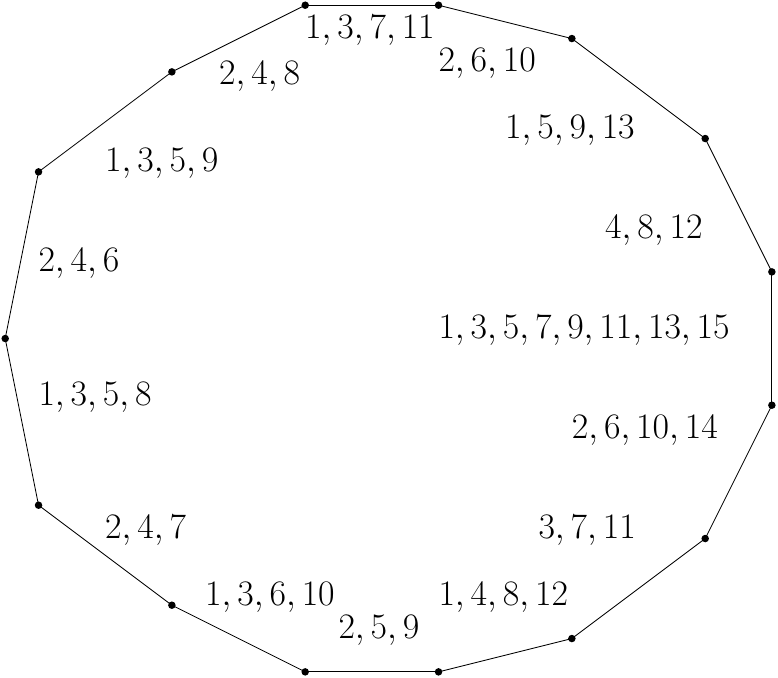}
		\caption{Start on $e_c$, $L' = (1,2,3,4,5)$ and $L_2 = (6,8)$.}
	\end{subfigure}
	\begin{subfigure}{.5\textwidth}
		\vfill
		\includegraphics[width=1\textwidth]{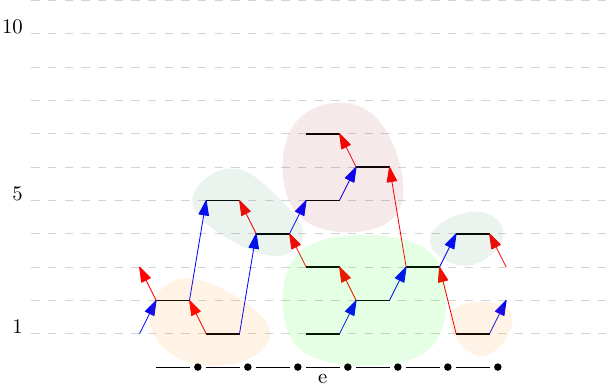}
		\caption{Base case cycle graph $C_7$.}
	\end{subfigure}
	\hfill
	\begin{subfigure}{.5\textwidth}
		\includegraphics[width=1\textwidth]{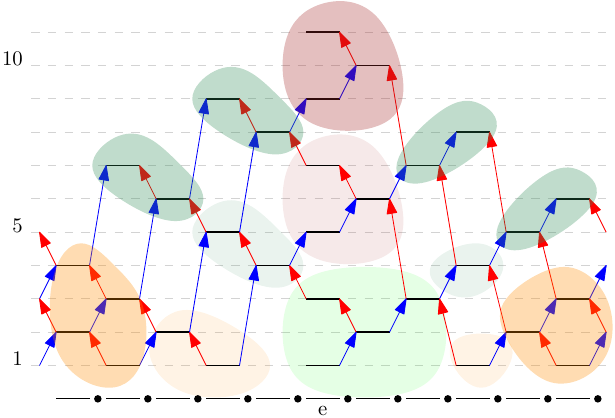}
		\caption{Inductive step extension to $C_{n+4}$ (here $C_{11}$).}
	\end{subfigure}
	\caption{The generator labelling for odd cycle graphs: \textbf{(a, b)} The process halfway through and at the final step respectively for cycle graph $C_{15}$; \textbf{(c, d)} The link stream representation of base case $C_7$ with seed, trunk, branches, and base, which is extended to $C_{11}$ with apex, leaves, and roots. 
	}
	\label{fig:generator_labelling_odd}
\end{figure}

This labelling for odd cycles can be proven minimal and temporally connected in a similar manner as for even cycles (see \Cref{fig:generator_labelling_odd} for the base case and inductive step). Again, a similar analysis as for \Cref{theorem:generator_labelling_Tplus} gives corresponding density results: 

\begin{theorem}
	\label{theorem:generator_labelling_odd_Tplus}
	$T^+(\texttt{Odd Cycles}) \geq 
	\tfrac{1}{4}n^2 + \tfrac{3}{4}$.
\end{theorem}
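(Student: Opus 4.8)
The plan is to reuse the counting scheme of \Cref{theorem:generator_labelling_Tplus} with the list bookkeeping adapted to the odd construction. First I would invoke the odd-cycle analogue of \Cref{theorem:generator_labelling_minimalTC} (asserted in the surrounding text to hold by the same inductive argument): the odd generator labelling yields a minimal temporally connected graph. Since on a minimal temporally connected labelling every assigned label is necessary, its temporal cost equals the number of labels placed, and any such labelling of the odd cycle witnesses a lower bound on $T^+(\texttt{Odd Cycles})$. (There is a unique cycle of each odd order $n$, so the universal quantifier in the definition of $T^+(\texttt{Class})$ ranges over the single graph $C_n$, and exhibiting one labelling suffices.) It then remains only to count.

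The initial step assigns every element of $L_1$ to the chosen edge $e$; as $L_1$ holds the odd values in $[1,n]$ and $n$ is odd, this is $\tfrac{n+1}{2}$ labels, after which $L_1$ is trimmed to $\tfrac{n-1}{2}$ elements. The remaining $n-1$ edges are then grouped into $\tfrac{n-1}{2}$ clockwise/counter-clockwise pairs, and the heart of the argument is the invariant that each such pair receives exactly $\tfrac{n-1}{2}$ labels in total.

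The main obstacle, and the only genuinely nontrivial accounting, is establishing this invariant. The key observation is that the quantity $|L'| + |L_{\mathrm{act}}|$, where $L_{\mathrm{act}} \in \{L_1, L_2\}$ is the list active at a given iteration, is conserved across the iterations that share a fixed active list. Indeed, each iteration distributes $L'$ followed by $L_{\mathrm{act}}$ onto the current pair, then moves the smallest element of $L_{\mathrm{act}}$ into $L'$ (so $|L'|$ grows by one) and discards its largest (so $|L_{\mathrm{act}}|$ drops by two). Because $L_1$ and $L_2$ alternate as the active list, between two consecutive iterations with the same active list $|L'|$ increases by two while that list decreases by two, leaving their sum unchanged. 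The two base values — $0 + \tfrac{n-1}{2}$ on the first $L_1$-iteration and $1 + \tfrac{n-3}{2}$ on the first $L_2$-iteration — both equal $\tfrac{n-1}{2}$, which pins the per-pair total at $\tfrac{n-1}{2}$ throughout, including the terminal pair at which $e_c$ and $e_{cc}$ become incident.

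Summing the contribution of $e$ and of the $\tfrac{n-1}{2}$ pairs then gives
\begin{align*}
	T = \frac{n+1}{2} + \frac{n-1}{2}\cdot\frac{n-1}{2} = \frac{2(n+1) + (n-1)^2}{4} = \frac{n^2 + 3}{4} = \tfrac{1}{4}n^2 + \tfrac{3}{4},
\end{align*}
an integer precisely because $n$ is odd. As this is the temporal cost of a minimal temporally connected labelling of $C_n$, it certifies $T^+(\texttt{Odd Cycles}) \geq \tfrac{1}{4}n^2 + \tfrac{3}{4}$, as claimed.
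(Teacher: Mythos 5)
Your proposal is correct and matches what the paper intends: the paper itself omits this proof, stating only that "a similar analysis as for \Cref{theorem:generator_labelling_Tplus} gives corresponding density results," and your argument is precisely that analysis carried out — invoking the (likewise asserted) odd-cycle analogue of \Cref{theorem:generator_labelling_minimalTC} and then counting $\tfrac{n+1}{2}$ labels on $e$ plus $\tfrac{n-1}{2}$ labels on each of the $\tfrac{n-1}{2}$ edge pairs. Your conservation argument for $|L'| + |L_{\mathrm{act}}|$ is a clean way to pin down the per-pair total, and the arithmetic checks out against the figure data (e.g.\ $L'=(1,2,3,4,5)$, $L_2=(6,8)$ at the final step for $n=15$).
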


\begin{theorem}
	\label{theorem:generator_labelling_odd_tauplus}
	$\tau^+ \geq \tau^+(\texttt{Odd Cycles}) \geq \tfrac{1}{2}n + \tfrac{1}{2}$.
\end{theorem}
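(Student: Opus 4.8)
The plan is to mirror the argument used for the even case in \Cref{corollary:generator_labelling_tauplus}, leaning on the (claimed) analog of \Cref{theorem:generator_labelling_minimalTC} for odd cycles, which guarantees that the odd generator labelling yields a minimal temporally connected graph. Granting that, the temporality is just the largest number of labels placed on a single edge, and the obvious candidate is the starting edge $e$, which is the one edge treated before the two cursors $e_c$ and $e_{cc}$ separate.

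First I would count the labels on $e$. By construction, the odd generator labelling begins by putting \emph{all} of list $L_1$ on $e$ and nothing else is ever added there. For odd $n$, the list $L_1$ consists of the odd naturals $1, 3, 5, \ldots, n$, which has exactly $\tfrac{n+1}{2}$ elements. Hence $e$ carries $\tfrac{n+1}{2} = \tfrac{1}{2}n + \tfrac{1}{2}$ labels. Invoking minimality from the odd analog of \Cref{theorem:generator_labelling_minimalTC}, every label on $e$ is necessary, so all $\tfrac{1}{2}n + \tfrac{1}{2}$ of them genuinely count toward the temporality. This gives $\tau^+(\texttt{Odd Cycles}) \geq \tfrac{1}{2}n + \tfrac{1}{2}$, and the first inequality $\tau^+ \geq \tau^+(\texttt{Odd Cycles})$ follows directly from the definitions, since $\tau^+$ maximises over all graphs on $n$ vertices whereas $\tau^+(\texttt{Odd Cycles})$ is a value realised on the particular graph $C_n$.

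The only genuine obstacle is the deferred minimality-and-temporal-connectivity statement for odd cycles, which the paper handles by the inductive link-stream argument analogous to \Cref{theorem:generator_labelling_minimalTC} (with base case $C_7$ and inductive step $C_n \to C_{n+4}$, as illustrated in \Cref{fig:generator_labelling_odd}); once that is in hand, the remaining counting is routine. For completeness I would also remark, as in the even case, that no other edge exceeds this count: every subsequent pair $(e_c, e_{cc})$ merely splits a depleting list in an alternating manner between two distinct edges, so each individual edge receives strictly fewer than $\tfrac{n+1}{2}$ labels. This confirms that $e$ attains the temporality and that it equals exactly $\tfrac{1}{2}n + \tfrac{1}{2}$, although only the lower bound is needed for the stated result.
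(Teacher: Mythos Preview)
Your proposal is correct and follows essentially the same approach as the paper: the paper does not give an explicit proof here but simply states that the odd generator labelling is minimal and temporally connected by an argument analogous to \Cref{theorem:generator_labelling_minimalTC}, and that a similar analysis to \Cref{theorem:generator_labelling_Tplus} yields the density results. Your explicit count of $|L_1| = \tfrac{n+1}{2}$ labels on edge $e$ is exactly the intended content of that deferred ``similar analysis''.
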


Putting all these densities of specific classes together, we thus obtain the following densities for the general class of cycle graphs.

\begin{theorem}
	\label{theorem:cycles_Tplus}
	$T^+(\texttt{Cycles}) \geq \tfrac{1}{4}n^2 + \tfrac{3}{4}$.
\end{theorem}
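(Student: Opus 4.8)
The plan is to read off the claimed bound directly from the two density results already established for the even and odd subclasses of \texttt{Cycles}, using the worst-case definition of the class-level measure. Recall that $T^+(\texttt{Cycles})$ is defined as the largest $x$ such that \emph{every} cycle graph attains maximum temporal cost at least $x$; concretely, for each order $n$ the class contains the single cycle $C_n$, which is either even or odd. Hence the class bound is the pointwise minimum, over the parity of $n$, of the bounds available for even and odd cycles, and the whole argument reduces to comparing the two already-proven inequalities and keeping the smaller one.

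Carrying this out: for even $n$, the generator labelling yields $T^+(C_n) \geq \tfrac{1}{4}n^2 + 1$ (via \Cref{theorem:generator_labelling_Tplus} together with the subsequent remark extending the result from \texttt{4k Cycles} to all even cycles). For odd $n$, the adapted generator labelling gives, by \Cref{theorem:generator_labelling_odd_Tplus}, the bound $T^+(C_n) \geq \tfrac{1}{4}n^2 + \tfrac{3}{4}$. Since $\tfrac{1}{4}n^2 + \tfrac{3}{4} < \tfrac{1}{4}n^2 + 1$, the odd case is the binding constraint, so every cycle, regardless of parity, satisfies $T^+(C_n) \geq \tfrac{1}{4}n^2 + \tfrac{3}{4}$, which is exactly the asserted inequality.

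I do not expect any genuine obstacle here: all the substantive work — constructing the generator labelling and verifying that it produces a minimal temporally connected graph, then counting its labels — has already been carried out in \Cref{theorem:generator_labelling_minimalTC} and its odd-cycle analogue, and in \Cref{theorem:generator_labelling_Tplus} and \Cref{theorem:generator_labelling_odd_Tplus}. The only point requiring care is conceptual rather than computational: because the class measure is a minimum (a guarantee that must hold for \emph{every} graph in the class), one must take the \emph{weaker} of the two parity bounds. Reporting the larger even-cycle value $\tfrac{1}{4}n^2 + 1$ would be the natural mistake to avoid, as it fails on odd cycles.
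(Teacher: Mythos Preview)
Your proposal is correct and follows essentially the same approach as the paper: combine the even-cycle and odd-cycle bounds from the generator labelling and take the weaker one, since $T^+(\texttt{Cycles})$ is defined as a guarantee for every cycle. Your writeup is in fact more explicit than the paper's own one-line proof, which simply notes that all cycles attain the claimed density through the generator labelling.
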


\begin{proof}
	All cycles attain a temporal cost of at least $\tfrac{1}{4}n^2$ through the generator labelling.
\end{proof}

\begin{theorem}
	\label{theorem:cycles_tauplus}
	$\tau^+(\texttt{Cycles}) \geq \lceil \tfrac{1}{2}n \rceil$.
\end{theorem}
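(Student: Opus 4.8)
The plan is to prove the bound separately for even and odd $n$ and to observe that in each parity the temporality already established for the generator labelling meets $\lceil \tfrac{1}{2}n \rceil$ exactly. Recall that $\tau^+(\texttt{Cycles})$ is the largest value attained by \emph{every} cycle; since every cycle on $n$ vertices is either even or odd, it suffices to exhibit, for each parity, a proper minimal temporally connected labelling whose temporality is at least $\lceil \tfrac{1}{2}n \rceil$, which is precisely what the (even and odd) generator labellings provide.

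First I would treat the even case. By \Cref{theorem:generator_labelling_minimalTC}, extended to all even cycles as claimed after \Cref{corollary:generator_labelling_tauplus}, the generator labelling of an even cycle is minimal and temporally connected, and by \Cref{corollary:generator_labelling_tauplus} together with the subsequent remark for \texttt{Even Cycles} it places $\tfrac{n}{2}$ labels on the starting edge $e$, so $\tau^+(\texttt{Even Cycles}) \geq \tfrac{1}{2}n$. Since $n$ is even, $\lceil \tfrac{1}{2}n \rceil = \tfrac{1}{2}n$, and the bound holds with equality here.

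Next I would treat the odd case by invoking \Cref{theorem:generator_labelling_odd_tauplus}, which gives $\tau^+(\texttt{Odd Cycles}) \geq \tfrac{1}{2}n + \tfrac{1}{2}$. Since $n$ is odd, $\lceil \tfrac{1}{2}n \rceil = \tfrac{n+1}{2} = \tfrac{1}{2}n + \tfrac{1}{2}$, so the bound is again met. Combining the two parities then yields $\tau^+(\texttt{Cycles}) \geq \lceil \tfrac{1}{2}n \rceil$ for all $n$.

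The genuinely hard work has already been discharged, namely establishing that the odd and even generator labellings are simultaneously proper, minimal, and temporally connected, which rests on \Cref{theorem:generator_labelling_minimalTC} and its odd-cycle analogue. The only remaining obstacle is bookkeeping the ceiling function: I must confirm that the two parity-dependent bounds, $\tfrac{1}{2}n$ and $\tfrac{1}{2}n + \tfrac{1}{2}$, coincide with $\lceil \tfrac{1}{2}n \rceil$ in their respective cases, and that the generator labelling attains the claimed temporality on edge $e$ for \emph{all} even $n$ rather than only for multiples of four, for which I would rely on the alternative inductive base $C_{4k+2}$ mentioned after \Cref{corollary:generator_labelling_tauplus}.
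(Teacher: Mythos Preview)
Your proposal is correct and follows essentially the same approach as the paper: the paper's proof is the single sentence ``All cycles attain a temporality of at least $\lceil \tfrac{1}{2}n \rceil$ through the generator labelling,'' and your argument simply unpacks this by treating the even and odd parities separately via \Cref{corollary:generator_labelling_tauplus} (extended to all even cycles) and \Cref{theorem:generator_labelling_odd_tauplus}, then verifying the ceiling bookkeeping. Your version is more explicit but not different in substance.
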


\begin{proof}
	All cycles attain a temporality of at least $\lceil \tfrac{1}{2}n \rceil$ through the generator labelling.
\end{proof}

\section{Cactus graphs}
\label{sec:cacti}

As a reminder, cactus graphs are graphs such that any edge is part of at most one simple cycle. 
Cactus graphs are a superclass of both \texttt{Trees} and \texttt{Cycles}, so both classes give upper bounds on densities for \texttt{Cacti}. Moreover, \texttt{Cacti} subgraph dominates \texttt{Trees}, meaning $T^+(\texttt{Cacti}) = T^+(\texttt{Trees}) = 2n-3$ and $\tau^+(\texttt{Cacti}) = \tau^+(\texttt{Trees}) = 2$. These densities are however very unrepresentative of cactus graphs which are not trees. For this reason, we explore cactus graphs a bit more in detail by considering the additional parameter $c$, which is the largest simple cycle size, or circumference, of the graph. In other words, we technically study the classes \texttt{Circumference $c$ Cacti} in this section, but for simplicity we refer to these as \texttt{Cacti} in the following. Analysing cactus graphs with circumference $c$ combines the labellings and results from \texttt{Trees} (when $c=2$) and \texttt{Cycles} (when $c=n$). 


\begin{definition}[Combined labelling of cactus graph $G$]
	See also \Cref{fig:combined_labelling}. 
	Start by identifying the largest simple cycle in $G$, denote it as $C$ and note $|C| = c$. Cycle $C$ is contracted into a supervertex $v_C$ for the first and last step of this labelling, and is the focus of the second step.
	The first step starts the pivot labelling on some spanning tree with the pivot vertex $p = v_C$, and halts after assigning label $n-c$ on edge $\{v', v_C\}$.
	Step two consists of applying the generator labelling on $C$. Shift the resulting labels by adding $n-c$. Let $L$ denote the maximum label used up to now.
	Finally, in step three, continue and finish the pivot labelling but do not remove the largest label on edge $\{v', v_C\}$ if $c > 2$. Shift these last labels by adding $L+1$.
\end{definition}

Note that the combined labelling reverts to the pivot labelling if $c = 2$, \textit{i.e.} if $G$ is a tree, and to the generator labelling if $c = n$, \textit{i.e.} if $G$ is a cycle.

\begin{figure}[h]
	\begin{subfigure}{.5\textwidth}
		\includegraphics[width=.95\textwidth]{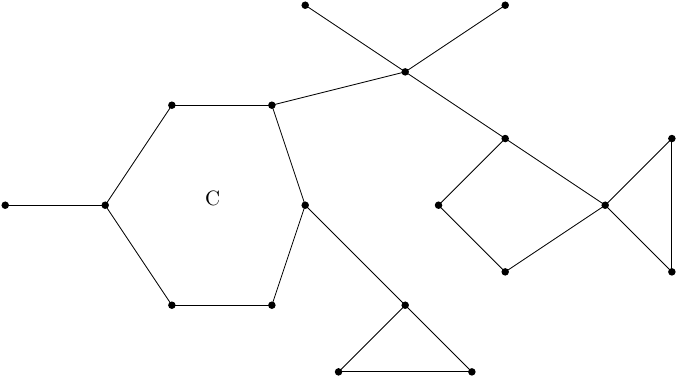}
		\caption{Cactus graph $G$, with largest cycle $C$ on the left.}
	\end{subfigure}
	\hfill
	\begin{subfigure}{.5\textwidth}
		\includegraphics[width=.95\textwidth]{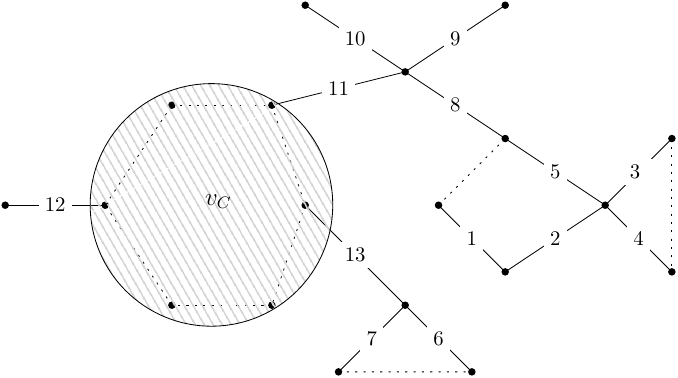}
		\caption{First step: converge to $v_C$ with pivot labelling.}
	\end{subfigure}
	\begin{subfigure}{.5\textwidth}
		\vfill
		\includegraphics[width=.95\textwidth]{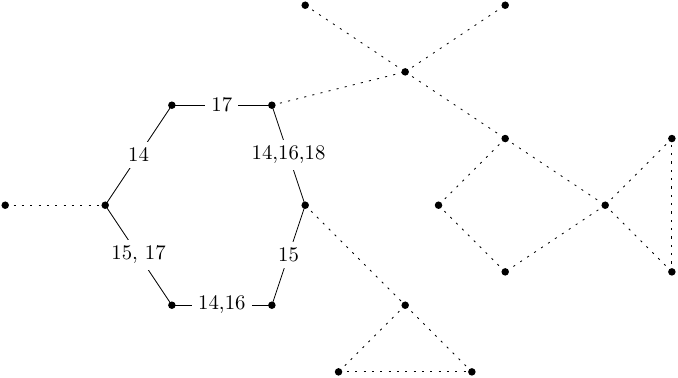}
		\caption{Second step: generator labelling on $C$.}
	\end{subfigure}
	\hfill
	\begin{subfigure}{.5\textwidth}
		\includegraphics[width=.95\textwidth]{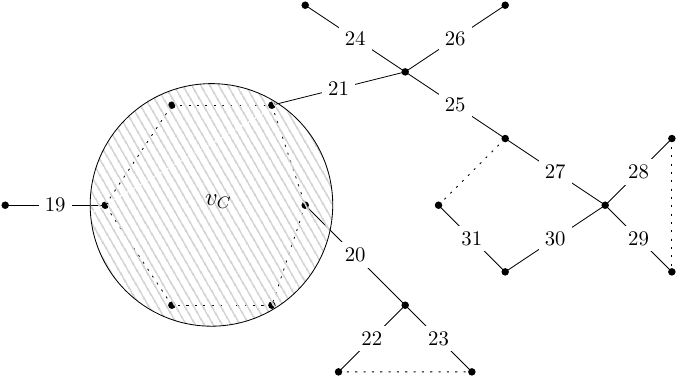}
		\caption{Third step: disperse from $v_C$ with pivot labelling.}
	\end{subfigure}
	\caption{Example of the combined labelling on cactus graph $G$. Step one and three consider the cycle $C$ to be contracted into supervertex $v_C$. Dotted edges are disregarded in corresponding steps.
	}
	\label{fig:combined_labelling}
\end{figure}

\begin{theorem}
	\label{theorem:combined_labelling_minimalTC}
	The combined labelling yields a minimal temporally connected graph.
\end{theorem}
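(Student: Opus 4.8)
The plan is to analyse the three phases of the construction separately and then glue them using the cactus structure. First I would fix notation reflecting the clean separation of time windows: the converge part of the pivot labelling occupies the labels $1,\dots,n-c$, the shifted generator labelling occupies $n-c+1,\dots,L$, and the disperse part occupies $L+1$ and above, so the three phases lie in disjoint, strictly increasing intervals. I would also record two structural facts that hold because $G$ is a cactus and $C$ is one of its cycles. A tree branch is glued to $C$ at a single cut vertex, so any excursion of a journey into a branch must return to that same cut vertex and therefore cannot connect two distinct vertices of $C$; consequently every route between two vertices of $C$ effectively stays inside $C$. Moreover each tree vertex $u$ meets $C$ only through a unique attachment vertex $a_u$. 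These are exactly the properties that let me treat the contracted graph as a tree rooted at $v_C$.

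For temporal connectivity I would prove two claims. Claim (1): every vertex reaches every vertex of $C$ by time $L$. A tree vertex $u$ reaches $a_u$ by time at most $n-c$, since the converge labels strictly increase toward $v_C$ along the unique path; then, because the shifted generator labelling is temporally connected by \Cref{theorem:generator_labelling_minimalTC} and uses only labels in $[n-c+1,L]$, vertex $a_u$ spreads $u$'s reach to all of $C$ by time $L$, and a vertex already on $C$ reaches all of $C$ directly by the same theorem. Claim (2): from any vertex of $C$ the disperse part (labels $>L$) reaches every tree vertex $w$, the journey to $w$ leaving $C$ at its attachment $a_w$. Because the three windows are increasing, chaining (1) into (2) through the attachment vertices produces a journey from any vertex to any other, so the graph is temporally connected.

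For minimality I would argue group by group. A converge label on an edge $\{v,\mathrm{parent}(v)\}$ is the only time-respecting way for the subtree below $v$ to leave toward $C$: the path in the tree is unique, and the disperse labels on the upward path increase away from $C$ and hence only support journeys heading away from $C$, so removing the converge label disconnects that subtree; the disperse labels are necessary by the symmetric argument. Each generator label is necessary within $C$ by \Cref{theorem:generator_labelling_minimalTC}, and it stays necessary in $G$ because no branch can reroute a journey between two distinct vertices of $C$ and the disjoint windows prevent the converge or disperse labels from substituting for it. Finally I would treat the label on $\{v',v_C\}$ retained when $c>2$: in the plain pivot labelling the analogous label is redundant precisely because $v_C$ holds all information by the end of the converge part, whereas here the attachment $a$ of $v'$ only accumulates the information coming from the far arc of $C$ during $[n-c+1,L]$, strictly after the converge label $n-c$; hence a vertex on that far arc can reach $v'$ only through the disperse label $>L$, making it necessary for $c>2$ and correctly removed when $c=2$, where $G$ is a tree and the construction reverts to the pivot labelling.

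The hard part will be the cross-phase interaction in the minimality argument: verifying that introducing the converge and disperse tree labellings does not make any generator label redundant, and conversely that the generator labelling does not make a converge or disperse label redundant. Both directions hinge on the two ingredients isolated at the start, namely the cactus property that every route between two vertices of $C$ stays inside $C$, and the strict ordering of the three time windows that forbids any cross-phase shortcut. The retained-label case for $c>2$ is the most delicate instance of this interaction, and it is where the timing argument above must be carried out carefully.
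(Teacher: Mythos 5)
Your proposal is correct and follows essentially the same route as the paper's proof: split into the three time-disjoint phases, inherit necessity of the tree labels from the pivot labelling and of the cycle labels from \Cref{theorem:generator_labelling_minimalTC}, and justify the retained label on $\{v',v_C\}$ by observing that the far side of $C$ only reaches the attachment vertex of $v'$ after the converge label $n-c$ has passed. You are somewhat more explicit than the paper about why no cross-phase substitution is possible (the cut-vertex structure of the cactus plus the strictly increasing windows), which the paper compresses into the remark that steps one and three ``cannot possibly ensure'' reachability inside $C$, but the argument is the same.
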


\begin{proof}
	Cases $c=2$ and $c=n$ both are proven to yield a minimal temporally connected graph already in \Cref{sec:trees} and \Cref{theorem:generator_labelling_minimalTC} respectively. 
	Consider in the following $2 < c < n$.
	The labels from steps one and three are necessary, since together they apply the pivot labelling, which is proven to result in a minimal temporally connected graph. The exception to this is the largest label on edge $\{v', v_C\}$ which is removed in the pivot labelling but is not in the combined labelling. Indeed, in the pivot labelling, pivot vertex $p$ is reached by all vertices at time $n-1$ and thus can start reaching all vertices immediately at the same date. In this labelling, supervertex $v_C$ is reached by all vertices as well at time $n-c$, but this is not the case for the actual vertex in $C$ which is adjacent to $v'$. It is however reached by all vertices after step two, or after time $L$, and therefore the two labels on edge $\{v', v_C\}$ are necessary. 
	All labels used in step two are also necessary, since they ensure reachability among the vertices in $C$ by \Cref{theorem:generator_labelling_minimalTC}, which step one and three cannot possibly ensure.
	The resulting temporal graph is thus minimal. 
	It is also temporally connected, since any vertex $u$ can reach any other vertex $v$ by simply following the corresponding path along the spanning tree. If this path goes through the cycle $C$, then the vertex it arrives on in $C$ can reach any other vertex in $C$, thus also the vertex adjacent to the remainder of the path, after which it can continue the path. 
\end{proof}

For cactus graphs with circumference $c$, we thus obtain the following densities.

\begin{theorem}
	\label{theorem:cacti_Tplus}
	$T^+(\texttt{Cacti}) \geq \tfrac{1}{4}c^2 + 2(n-c)$.
\end{theorem}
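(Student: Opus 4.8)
The plan is to establish the bound purely by counting the labels produced by the combined labelling, since \Cref{theorem:combined_labelling_minimalTC} already guarantees that this labelling is both minimal and temporally connected on any cactus graph. Consequently every label it assigns is necessary, so the total number of labels is a valid lower bound on $T^+(\texttt{Cacti})$, and it suffices to show this total is at least $\tfrac{1}{4}c^2 + 2(n-c)$. I would first note the two degenerate regimes as sanity checks: at $c=2$ the bound reads $2n-3$, matching the pivot-labelling result of \Cref{lemma:trees_Tplus_geq_2n3}, and at $c=n$ it reads $\tfrac{1}{4}n^2$, matching the generator-labelling result of \Cref{theorem:cycles_Tplus}; in both cases the combined labelling degenerates accordingly. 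The real work is the intermediate case $2 < c < n$, and it is entirely a matter of careful bookkeeping.

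First I would count the contribution of step two. This step applies the generator labelling to the circumference-$c$ cycle $C$, so by \Cref{theorem:cycles_Tplus} (equivalently \Cref{theorem:generator_labelling_Tplus} when $c$ is a multiple of four) it contributes at least $\tfrac{1}{4}c^2$ labels, all necessary for reachability within $C$. The uniform additive shift by $n-c$ is a bijection on labels and so does not change their count.

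Next I would count steps one and three together, which realise the pivot labelling on a spanning tree of the graph obtained from $G$ by contracting $C$ to the supervertex $v_C$. This contracted graph has $n-c+1$ vertices, hence its spanning tree has $n-c$ edges. As in the analysis underlying \Cref{lemma:trees_Tplus_geq_2n3}, the pivot labelling assigns two labels per tree edge — one during the converging (reverse breadth-first) phase of step one and one during the broadcasting (breadth-first) phase of step three — for $2(n-c)$ labels in total. The one subtlety, which is precisely the point where the combined labelling differs from the plain pivot labelling, is that for $c>2$ the largest label on edge $\{v', v_C\}$ is deliberately \emph{not} removed; hence none of these $2(n-c)$ labels is discarded, and the tree part contributes exactly $2(n-c)$ rather than the $2(n-c)-1$ of the pure tree setting.

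Summing the two contributions yields at least $\tfrac{1}{4}c^2 + 2(n-c)$ necessary labels, which is the claimed bound. The main obstacle is not a deep argument but getting the bookkeeping exactly right: one must verify that the contracted graph genuinely has $n-c+1$ vertices, that the label shifts guarantee the step-two labels are disjoint from those of steps one and three so that nothing is double-counted, and that retaining the extra label on $\{v', v_C\}$ is exactly what accounts for the $2(n-c)$ term. With minimality and temporal connectivity already inherited from \Cref{theorem:combined_labelling_minimalTC}, no further reasoning is needed.
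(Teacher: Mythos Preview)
Your proof is correct and follows essentially the same approach as the paper: count the labels of the combined labelling step by step, with steps one and three contributing $n-c$ labels each and step two contributing the generator-labelling count on $C$, then invoke \Cref{theorem:combined_labelling_minimalTC} for minimality. The only cosmetic difference is that you invoke the weaker bound $\tfrac{1}{4}c^2$ for step two from the outset, whereas the paper first obtains $\tfrac{1}{4}c^2 + 2(n-c) + \tfrac{3}{4}$ and then drops the constant to make the formula agree with the tree case $c=2$; your bookkeeping is arguably cleaner in that respect.
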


\begin{proof}
	If $c=2$, the temporal cost is $2n-3$ (\Cref{theorem:trees_Tplus_eq_2n3}) and if $c=n$, the temporal cost is $\tfrac{1}{4}n + \tfrac{3}{4}$ (\Cref{theorem:cycles_Tplus}). Consider now $2 < c < n$. Step one adds exactly $n-c$ labels. Step two, by \Cref{theorem:cycles_Tplus}, uses at least $\tfrac{1}{4}c^2 + \tfrac{3}{4}$ labels. Finally, step three, being symmetric to step one, uses $n-c$ labels as well. This results in at least $\tfrac{1}{4}c^2 + 2(n-c) + \tfrac{3}{4}$ labels, which correlates with the case $c = n$, but does not for $c=2$, as this results in a cost greater than $2n-3$, which by \Cref{theorem:trees_Tplus_eq_2n3} is impossible. Removing the constant term $\tfrac{3}{4}$ resolves this issue; in other words, all cactus graphs can attain a density of at least $\tfrac{1}{4}c^2 + 2(n-c)$. 
\end{proof}


The densest edge in the combined labelling is the edge on which step two starts the generator labelling in cycle $C$ (unless $c=2$ in which case the temporality is 2).

\begin{corollary}
	\label{corollary:cacti_tauplus}
	$\tau^+(\texttt{Cacti}) \geq \lceil \tfrac{1}{2}c \rceil$.
\end{corollary}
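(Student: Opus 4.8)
The plan is to exhibit, for every cactus graph $G$ with circumference $c$, a single minimal temporally connected labelling whose temporality is at least $\lceil \tfrac{1}{2}c \rceil$. Since $\tau^+(G)$ is by definition the maximum temporality over all proper minimal temporally connected labellings of $G$, and $\tau^+(\texttt{Cacti})$ is the value guaranteed for \emph{every} graph in the class, it suffices to point to the combined labelling and count the labels on its densest edge. So the whole argument reduces to a bookkeeping claim on the combined labelling, which \Cref{theorem:combined_labelling_minimalTC} already certifies to be minimal and temporally connected.

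First I would dispatch the two boundary regimes. When $c = 2$ the graph is a tree and the combined labelling degenerates to the pivot labelling, whose temporality is $2$ by \Cref{theorem:trees_tauplus}; since $\lceil \tfrac{1}{2}\cdot 2 \rceil = 1 \le 2$, the bound holds (and is merely loose here). When $c = n$ the graph is a cycle and the labelling is exactly the generator labelling, so the claim is precisely \Cref{theorem:cycles_tauplus}. It then remains to treat $2 < c < n$, which is where the combined labelling genuinely interleaves its three steps.

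For this middle range I would track where labels land on edges. Steps one and three contract the largest cycle $C$ into the supervertex $v_C$ and run the pivot labelling on a spanning tree; consequently every tree edge receives only pivot labels, and by $\tau^+(\texttt{Trees}) = 2$ (\Cref{theorem:trees_tauplus}) each such edge carries at most two labels, including the bridge $\{v', v_C\}$ which gets exactly its two necessary labels. The edges of $C$, on the other hand, are untouched by steps one and three and receive labels \emph{only} from the generator labelling of step two; the additive shift applied to these labels does not change how many of them sit on any given edge. Hence the densest edge of $C$ carries the same number of labels it would in a standalone generator labelling of a $c$-cycle, namely $\lceil \tfrac{1}{2}c \rceil$ by \Cref{theorem:cycles_tauplus} and \Cref{corollary:generator_labelling_tauplus}. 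Since $\lceil \tfrac{1}{2}c \rceil \ge 2$ for all $c \ge 3$, this cycle edge is at least as dense as any tree edge, so the temporality of the combined labelling equals $\lceil \tfrac{1}{2}c \rceil$, giving $\tau^+(G) \ge \lceil \tfrac{1}{2}c \rceil$.

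I do not expect a serious obstacle here, as this is essentially a transfer of the cycle result through the combined labelling; the only point requiring care is confirming that no edge outside $C$ ever accumulates more than two labels and that the shift operations in steps two and three preserve per-edge label counts, both of which follow immediately from the structure of the pivot and generator components and from $\tau^+(\texttt{Trees}) = 2$. Once this is verified, taking the maximum over edges yields the stated temporality, and ranging over all cactus graphs of circumference $c$ delivers the class bound.
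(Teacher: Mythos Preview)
Your argument is correct and is exactly the paper's approach: the paper simply remarks that the densest edge in the combined labelling is the edge on which step two starts the generator labelling in $C$, carrying $\lceil \tfrac{1}{2}c \rceil$ labels (with the tree case $c=2$ giving temporality $2$), and leaves the corollary without further proof. Your write-up merely spells out the bookkeeping the paper leaves implicit.
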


Note that for $c=2$, these results are proven tight in \Cref{sec:trees}. The tightness of these results for other $c$ is open, but due to \Cref{lemma:bridge_edge},  we know that dense temporal cactus graphs have a density depending mainly on the cycles, \textit{i.e.} the structure outside of the cycles are all bridge edges and can thus only contribute a linear amount of labels.

\section{Conclusion}
\label{sec:conclusion}

Our results on densities of different graph classes are presented in \Cref{fig:results}. Since density results of a class \texttt{Class} transfer to any superclass \texttt{Superclass} which subgraph dominates \texttt{Class}, our (tight or lower bound) results thus transfer for \texttt{Connected}, the class of connected graphs, \texttt{Hamiltonian}, the class of Hamiltonian graphs, and \texttt{Circumference $c$}, the class of graphs of circumference $c$. 

\begin{figure}
	\centering
	\begin{tabular}{|c || c | c | c |}
		\hline
		Measure \textbackslash~Graph class          & \texttt{Trees} (\texttt{Connected}) & \texttt{Cycles} (\texttt{Hamiltonian}) & \texttt{Cacti} (\texttt{Circumference $c$})\\
		\hline
		\hline
		Maximum temporal cost $T^+$ & \makecell{$2n-3$ \\ \Cref{theorem:trees_Tplus_eq_2n3}} & \makecell{$\geq \tfrac{1}{4}n^2 + \tfrac{3}{4}$ \\ \Cref{theorem:cycles_Tplus}} & \makecell{$\geq \tfrac{1}{4}c^2 + 2(n-c)$ \\ \Cref{theorem:cacti_Tplus}} \\
		\hline
		Maximum temporality $\tau^+$ & \makecell{$2$ \\ \Cref{theorem:trees_tauplus}} & \makecell{$\geq \lceil \tfrac{1}{2}n \rceil$ \\ \Cref{theorem:cycles_tauplus}} & \makecell{$\geq \lceil \tfrac{1}{2}c \rceil$ \\ \Cref{corollary:cacti_tauplus}} \\
		\hline
	\end{tabular}
	\caption{Main results of our density measures on specific classes of graphs. Lower bounds are marked with $\geq$. 
		\label{fig:results}}
\end{figure}

We believe it may be promising to try and show that the lower bounds for cycles are optimal, by lowering the presented upper bounds, especially concerning temporality which may be easier to lower than temporal cost. Indeed, through our experiments with PTGen, no cycle graph has admitted a larger labelling than the generator labelling, leading us to believe that for some reason the temporality cannot exceed $\lceil \tfrac{1}{2}n \rceil$.


Note that in our paper, we specifically focussed on proper labellings. Other types of labellings exist (see \textit{e.g.} \cite{casteigts2022simple}) such as happy labellings which are proper labellings with at most one label per edge, or strict labellings which allow incident edges to have the same label(s) but do not allow journeys to cross multiple edges with a same label. When considering these different types of labellings, and the corresponding densest labellings possible, we obtain the preliminary results presented in  \Cref{fig:other_labellings}.

\begin{figure}
	\centering
	\begin{tabular}{|c || c | c | c | c |}
		\hline
		Measure \textbackslash~Labelling & Proper & Happy & Strict \\
		\hline
		\hline
		Maximum temporal cost $T^+$ & \makecell{$\geq \tfrac{1}{4}n^2 + 1$\\ \Cref{theorem:generator_labelling_Tplus}} & \makecell{$\geq \tfrac{1}{18}n^2 + \tfrac{35}{18}n - O(1)$\\ \Cref{theorem:happy_labelling_Tplus}} & \makecell{$\geq \tfrac{1}{2}n^2 - \tfrac{1}{2}n$\\ $\mathcal{G} = (G_1 = K_n)$}\\
		\hline
		Maximum temporality $\tau^+$ & \makecell{$\geq \tfrac{1}{2}n + \tfrac{1}{2}$\\ \Cref{theorem:generator_labelling_odd_tauplus}} & \makecell{$1$\\ by definition} & \makecell{$> \frac{1}{2}n + \tfrac{1}{2}$ ?\\experimentally} \\
		\hline
	\end{tabular}		
	\caption{\label{fig:other_labellings} Comparison of densest temporal graphs depending on the types of labellings considered.}
\end{figure}

Of course, the results for proper labellings were presented in this paper. Note that although both densities were obtained from analysing cycle graphs, the maximum temporal cost $T^+$ was specifically from \texttt{Even Cycles}, and the maximum temporality $\tau^+$ specifically from \texttt{Odd Cycles}.
We would like to note as well that we ran our labelling generator on general graphs instead of constraining it to cycle graphs only. Empirical evidence, for general temporal graphs up to size $n=8$, suggested that maximum temporality $\tau^+ = \lceil \tfrac{1}{2}n \rceil$ may be optimal not only in cycle graphs, but in general graphs as well, which would indicate that proper labellings may not be ideal for creating the densest temporal graphs (especially when compared to strict labellings, discussed below). These optimality results, for cycles and for general graphs, unfortunately remain to be proven.

The labelling from Axiotis and Fotakis is a happy labelling, and should thus take the spot for densest happy labelling. We have shown however in \Cref{corollary:adhoc_construction_Tplus} that our ad-hoc construction from \Cref{sec:better_lower_bounds} is a slightly denser labelling. The problem is that this labelling was specifically designed to attain a large maximum temporality, and is thus not a happy labelling. We show below that we can transform our labelling into a happy labelling, while still beating the one from Axiotis and Fotakis, which as a reminder is of size $\tfrac{1}{18} n^2 + \tfrac{3}{2}n + O(1)$.

\begin{theorem}
	\label{theorem:happy_labelling_Tplus}
	Concerning happy labellings, $T^+ \geq \tfrac{1}{18}n^2 + \tfrac{35}{18}n - O(1)$.
\end{theorem}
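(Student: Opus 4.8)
The plan is to modify the ad-hoc construction from \Cref{sec:better_lower_bounds} so that it becomes a happy labelling (one label per edge) while preserving minimality, temporal connectivity, and — crucially — the quadratic term of its temporal cost. Recall from \Cref{corollary:adhoc_construction_Tplus} that the ad-hoc construction achieves $T = \tfrac{1}{18}n^2 + \tfrac{31}{18}n - O(1)$, but its only non-happy edge is the central edge $e = \{a, b\}$, which carries $k$ distinct labels $\{1, k^2, 2k^2, \dots, kk^2\}$ together with the extra label $1$. Since $n = 3k+1$, the single edge $e$ accounts for only a linear number of labels, so removing its multiplicity costs us nothing in the quadratic term; the dense part lives between the vertices $u_i$ and $w_j$ and is already happy.

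First I would replace the single multi-labelled edge $e$ by a gadget of happy edges that reproduces the same reachability behaviour. The key observation is that edge $e$ is a bridge in the ad-hoc construction: it is the unique connection between the $a$-side and the $b$-side. Its $k+1$ labels serve to let each vertex $u_i$ cross from $a$ to $b$ at time $ik^2$ (so it can reach $v_i$), and to let $a$ reach the $v_i, w_i$ side at time $k^2$. I would subdivide $e$ by inserting auxiliary vertices and route each required crossing time through its own happy edge, i.e. build a small ``parallel-path'' gadget between $a$ and $b$ in which each of the $k$ crossing times is realised by a distinct edge carrying a single label. Concretely, adding a constant number of vertices per crossing time yields $O(k) = O(n)$ new vertices and $O(k)$ new happy edges, each replacing one of the former labels of $e$. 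This keeps the underlying graph simple and the labelling happy.

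Next I would verify that the gadget preserves both temporal connectivity and minimality, essentially re-running the itemised reachability and necessity checks from the proof of \Cref{lemma:adhoc_minimal}, but with every use of ``label $ik^2$ on edge $e$'' replaced by ``traversal of the $i$-th crossing edge of the gadget''. Each crossing edge is necessary because it is the unique route enabling $u_i$ to reach $v_i$ at the required time, exactly as before; the old argument transfers mutatis mutandis. Finally I would recount the total number of labels: the quadratic dense part $\tbinom{}{}$ between $u_i$ and $w_j$ is untouched and still contributes $\tfrac{1}{18}n^2$, while the gadget and the remaining tree-like edges contribute a linear term. Careful bookkeeping of the linear contributions — the $k$ crossing edges, the subdivision vertices, and the edges $\{u_i,a\}, \{b,v_i\}, \{v_i,w_i\}, \{u_i,u_{i+1}\}, \{v_i,v_{i+1}\}$ — should yield a coefficient of $\tfrac{35}{18}$ on the linear term, slightly improving on the $\tfrac{31}{18}$ of the non-happy version because the subdivision adds a few more necessary happy edges.

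The main obstacle I anticipate is the gadget design: I must ensure that the subdivided crossing edges, each with a single label, do not accidentally create new journeys that (i) make some previously necessary label redundant, or (ii) shortcut the carefully tuned time windows so that some pair of vertices becomes reachable ``too early'' and breaks a necessity argument. In particular the auxiliary vertices introduced by subdivision must themselves reach and be reached by everyone on the correct schedule, so their incident labels have to be interleaved between the existing time values (between consecutive multiples of $k^2$, using the slack the construction already leaves). Getting these interleavings right while keeping every new edge happy and every new label necessary is the delicate part; once the gadget is validated, the density count and the improved linear coefficient follow by routine arithmetic.
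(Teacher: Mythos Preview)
Your high-level plan --- take the ad-hoc construction, replace the single multi-labelled edge $e$ by happy gadgetry, and recount --- is the same as the paper's. However, the specific gadget you propose does not work, and the failure is exactly the point the paper singles out.

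You suggest inserting ``a constant number of vertices per crossing time'', i.e.\ $\Theta(k)$ new auxiliary vertices. You then claim this only perturbs the linear term. That is false: the density is measured as a function of $n$, and the dense part of the construction contributes $\binom{k}{2} \sim \tfrac{1}{2}k^2$ labels. In the original ad-hoc graph $n = 3k+1$, so $\tfrac{1}{2}k^2 \sim \tfrac{1}{18}n^2$. If you add, say, $ck$ auxiliary vertices for some constant $c>0$, the new order is $n' = (3+c)k + O(1)$, whence $\tfrac{1}{2}k^2 \sim \tfrac{1}{2(3+c)^2}\,(n')^2$, and the quadratic coefficient drops strictly below $\tfrac{1}{18}$. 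The paper explicitly discards this semaphore-style subdivision for precisely this reason: it ``would add a linear amount of vertices to the construction, which would ultimately make the temporal cost lower than Axiotis and Fotakis' construction''. Your intuition that ``the subdivision adds a few more necessary happy edges'' and therefore improves the linear coefficient to $\tfrac{35}{18}$ is backwards: the extra edges are linear in $k$, but so is the vertex blow-up, and the latter dilutes the leading term.

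The paper's fix is to add only \emph{one} vertex. The crossing role of the $i$-th label on $e$ is delegated to the already-present vertex $v_{i+1}$: the journey $u_i \leadsto v_i$ is rerouted as $u_i \to a \to v_{i+1} \to b \to v_i$ using edges that each carry a single label. To make these reroutings coexist without colliding on the edges $\{b,v_i\}$, vertex $b$ is split into $b_1$ and $b_2$ (adding one vertex, one edge $\{b_1,b_2\}$, and one edge $\{b_2,u_k\}$ for connectivity), and the crossings alternate between them. Now $n = 3k+2$, the dense $u_i$--$w_j$ block is untouched, and the recount gives $T = \tfrac{1}{2}k^2 + \tfrac{13}{2}k - 2 = \tfrac{1}{18}n^2 + \tfrac{35}{18}n - \tfrac{55}{9}$. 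The crucial missing idea in your proposal is this reuse of the existing $v_i$ vertices as intermediaries, so that the vertex count grows by $O(1)$ rather than $\Theta(k)$.
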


\begin{proof}
	Take the ad-hoc labelling, which is a happy labelling except for edge $e$ which has $k$ labels. To modify this labelling so as to obtain a happy one, one idea is to replace each of these labels $ik^2$ by introducing an intermediary vertex with edge and label $ik^2$ to vertex $a$ and edge with label $ik^2 + \epsilon$ to vertex $b$ for some $\epsilon < 1$ (and afterwards reordering the labels so they are all integers). This trick is similar to the so-called ``semaphores'' used in \cite{casteigts2022simple, bhadra2012computing, balev4590651temporally}. Supposing we could solve additional minor issues such as temporal connectivity concerning these additional vertices, this would add a linear amount of vertices to the construction, which would ultimately make the temporal cost lower than Axiotis and Fotakis' construction. 
	
	Nevertheless, our adaptation follows this idea, but instead of adding new intermediary vertices, we use vertices which are already present, being vertices $v_i$. Indeed, vertices $w_i$ are not suitable to act as the intermediary vertices, as the edges $\{u_i, w_j\}$ would interfere with the new journeys. Vertices $v_i$ do not have this problem, but introduce another issue: vertices $v_i$ already have edges to vertex $b$. We solve this by splitting up vertex $b$ into two vertices $b_1$ and $b_2$. See also \Cref{fig:adhoc_adaptation}.
	We thus add these new journeys replacing the multiple labels on edge $e$, alternating between vertices $b_1$ and $b_2$, and previous edges $\{b, v_i\}$ alternating as well. Due to another minor issue, we use vertex $v_{i+1}$ for journey with label $ik^2$. This leaves label $k^3$ on edge $e = \{a, b_1\}$, which is acceptable as it is the only label on that edge, respecting the constraints of a happy labelling. Finally, to ensure temporal connectivity (particularly from and to new vertices $b_1$ and $b_2$), we add edge $\{b_1, b_2\}$ with label 1 and edge $\{b_2, u_k\}$ with label $k^4$. 
	
	\begin{figure}[h]
		\begin{subfigure}{.5\textwidth}
			\includegraphics[width=\textwidth]{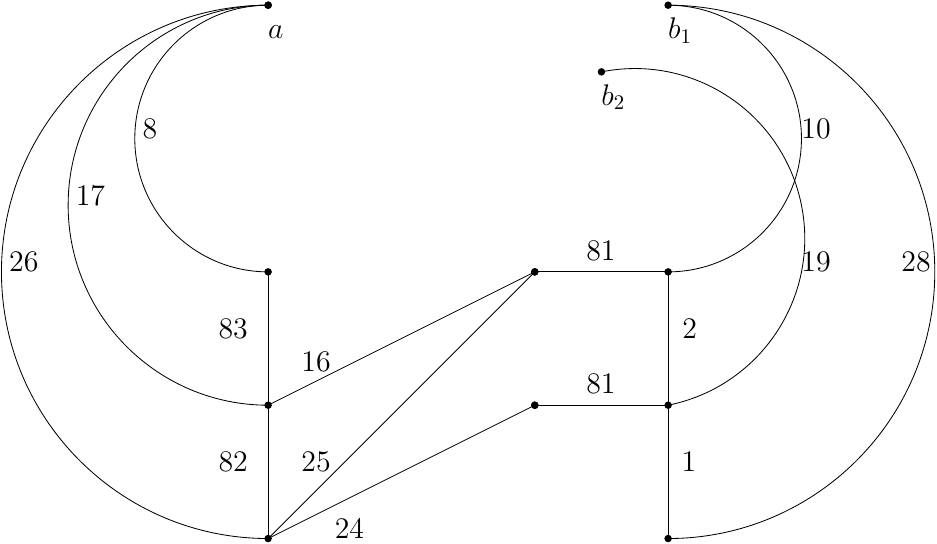}
			\caption{Original construction with vertex $b$ split into two.}
		\end{subfigure}
		\hfill
		\begin{subfigure}{.5\textwidth}
			\includegraphics[width=\textwidth]{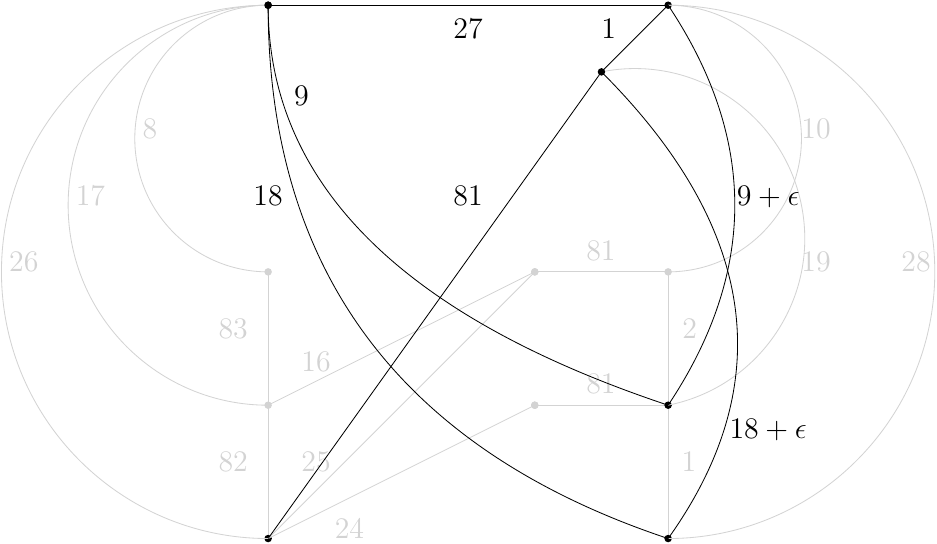}
			\caption{Journeys $u_i \leadsto v_i$ now go through vertex $v_{i+1}$.}
		\end{subfigure}
		\caption{The ad-hoc construction with $k=3$ adapted for happy labellings.
		}
		\label{fig:adhoc_adaptation}
	\end{figure}
	
	The result is clearly a happy labelling. 
		Let us show the ad-hoc construction is temporally connected. All vertices can trivially reach their neighbours, so we only look at the vertices which aren't neighbours.
		
		\begin{itemize}
			\item Vertex $a$ can reach vertices $b_2$, $v_i$ and $w_i$ by journey $(a, v_{i+1}, b_1$ or $b_2, v_i, w_i)$. 
			\item Vertex $b_1$ and $b_2$ have the same reachability since they are joined by an edge with lowest label $1$. They can reach vertices $w_i$ through journey $(b_1$ or $b_2, v_i, w_i)$, and vertices $u_i$ through journey $(b_2, u_k, u_{k-1}, u_{k-2}, ..., u_i)$. 
			\item Vertices $u_i$ can reach vertex $b_1$ through journey $(u_i, a, b_1)$ and vertex $b_2$ through journey $(u_i, a, u_k, b_2)$. They can reach vertices $v_j$ and $w_j$ such that $i \leq j$ through journey $(u_i, a, v_{i+1}, b_1$ or $b_2, v_j, w_j)$.\footnote{Vertex $u_k$ is an exception: it can reach $v_k$ through journey $(u_k, a, b_1, v_k)$.} They can reach the other vertices $v_j$ and $w_j$ such that $i > j$ by journey $(u_i, w_j, v_j)$. Finally, vertices $u_i$ can reach all vertices $u_j$ such that $i > j$ through journey $(u_i, u_{i-1}, u_{i-2}, ..., u_j)$, and vertices $u_j$ such that $i<j$ through journey $(u_i, a, u_j)$.
			\item Vertex $v_1$ can reach vertices $a$, $b_1$ and $b_2$ by passing through vertex $v_2$. Vertices $v_i$ can reach vertices $u_j$ through journey $\{v_i, a, u_k, u_{k-1}, u_{k-2}, ..., u_j\}$. 
			Vertices $v_i$ can reach vertices $v_j$ and $w_j$ such that $i \geq j$ through journey $(v_i, v_{i-1}, v_{i-2}, ..., v_j, w_j)$, vertex $w_{i+1}$ through journey $(v_i, v_{i+1}, w_{i+1})$, and other vertices $v_j$ and $w_j$ through journey $(v_i, b_1$ or $b_2, v_j, w_j)$.
			\item Finally, vertices $w_i$ can reach vertices $u_j$ through journey $(w_i, u_k, u_{k-1}, u_{k-2}, ..., u_j)$.
			They can reach vertices $a$ and $b_1$ through journey $(w_i, u_k, a, b_1)$, and vertex $b_2$ through $(w_i, u_k, b_2)$. 
			Vertices $w_i$ can reach vertices $v_j$ and $w_j$ such that $i < j$ through journey $(w_i, u_j, a, v_{j+1}, b_1$ or $b_2, v_j, w_j)$. Finally, they can reach vertices $v_j$ and $w_j$ such that $i > j$ through journey $(w_i, u_k, w_j, v_j)$.
		\end{itemize}
		Now let us prove the labelling is minimal.
		\begin{itemize}
			\item Edges $\{u_i, a\}$ are necessary for $u_i$ to reach $v_i$.
			\item Edges $\{a, v_i\}$ are necessary for $u_{i-1}$ to reach $v_{i-1}$.
			\item Edges $\{v_i, b_1$ or $b_2\}$ are necessary for $u_{i-1}$ to reach $v_{i-1}$, or for $u_i$ to reach $v_i$.
			\item Edge $\{a, b_1\}$ is necessary for vertex $u_k$ to reach vertex $v_k$.
			\item Edges $\{u_i, u_j\}$ are necessary for $u_k$ to reach $u_1$.
			\item Edges $\{v_i, v_j\}$ are necessary for $v_k$ to reach $v_1$.
			\item Edges $\{v_i, w_i\}$ are necessary for $v_i$ to reach $w_i$.
			\item Edges $\{u_i, w_j\}$ are necessary for $u_i$ to reach $w_j$.
			\item Edge $\{b_1, b_2\}$ is necessary for $b_1$ to reach $u_k$.
			\item Edge $\{b_2, u_k\}$ is necessary for $b_2$ to reach $u_k$.
		\end{itemize}
	
	Note that now $n = 3k +2$, and thus the total number of labels is:
	\begin{align*} 
		T&=k + k + k-1 + k-1 + k-1 + (1 + 2 + 3 + ... + k-1) + 2(k-1) + 3\\
		&= 7k -2 + \frac{k(k-1)}{2}\\
		&= \frac{1}{2}k^2 + \frac{13}{2}k -2\\
		&= \frac{1}{2}(\frac{1}{3}n - \frac{2}{3})^2 + \frac{13}{2}(\frac{1}{3}n - \frac{2}{3}) -2\\
		&= \frac{1}{18}n^2 + \frac{35}{18}n - \frac{55}{9}
	\end{align*}
\end{proof}

Concerning the results related to strict labellings, simply take a complete graph and assign label 1 to each edge to obtain a temporal cost of $\tfrac{1}{2}n^2 - \tfrac{1}{2}n$. We have also modified our labelling generator so as to work for strict labellings, and while running it on cycles, we have found labellings seemingly having more than $\tfrac{1}{2}n$ labels on an edge (see \Cref{fig:strict_labellings}). 
\begin{figure}[h]
	\begin{subfigure}{.5\textwidth}
		\includegraphics[width=.65\textwidth]{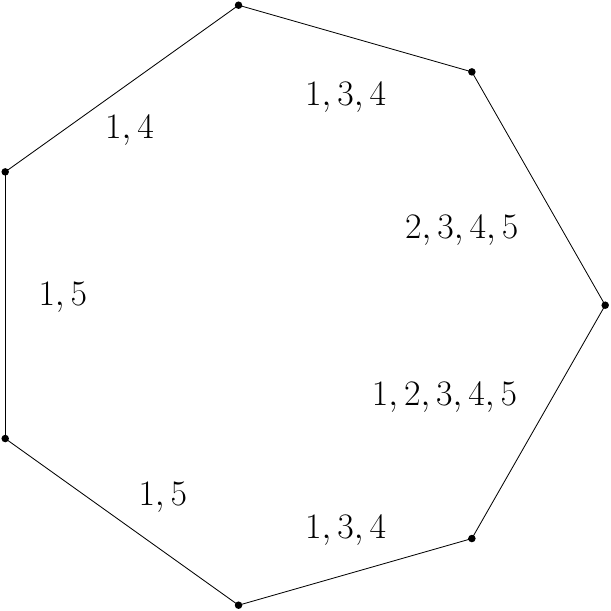}
		\caption{Cycle graph $C_7$ with temporality $\tau = 5$.}
	\end{subfigure}
	\hfill
	\begin{subfigure}{.5\textwidth}
		\includegraphics[width=.75\textwidth]{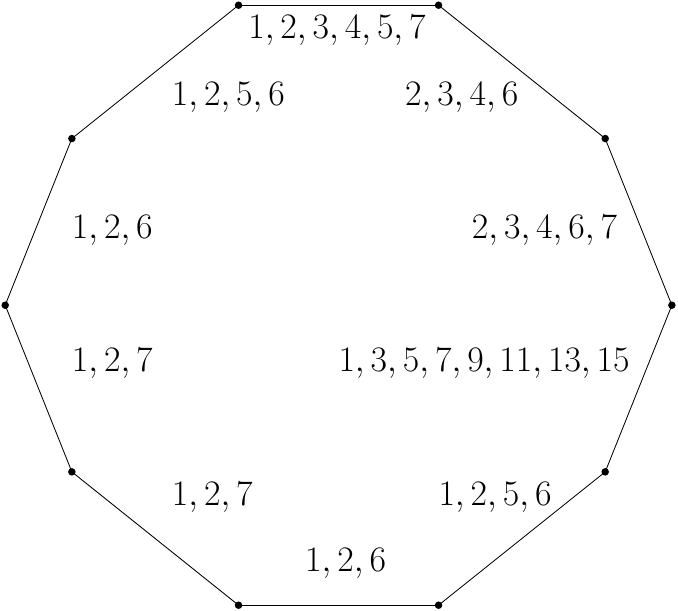}
		\caption{Cycle graph $C_{10}$ with temporality $\tau = 8$.}
	\end{subfigure}
	\caption{Minimal temporally connected graphs with empirically large $\tau$ in the strict setting.
	}
	\label{fig:strict_labellings}
\end{figure}
These examples seem to also have a larger temporal cost than $\tfrac{1}{4}n^2$, but do not empirically outperform the presented $\tfrac{1}{2}n^2 -\tfrac{1}{2}n$ (although the $C_7$ does attain it).
These are only experimental results. That being said, an intuition for strict labellings to allow for more labels than proper labellings can be the following. Consider the path graph on 3 vertices. Observe that to temporally connect this graph with a proper labelling, at most 3 labels can be used. However, with a strict labelling, 4 labels can be used. A more detailed analysis of this type of labellings is needed. 
We did not include non-strict labellings, which allow for journeys to use consecutive edges with a same label, as intuitively dense labellings would never use this property, as such a labelling could (again, intuitively) be made even denser by replacing any non-strict journey, which allows for journeys in both ways, with two strict journeys, essentially doubling the amount of labels. 
There is also the case of simple labellings, which are labellings with at most one label per edge. Of course, results from happy labellings transfer, but it is not directly clear how non-properness would improve upon happy labellings results.

Another interesting direction for future work is to study the computational complexity of corresponding decision problems, such as deciding if the maximum temporal cost or temporality of a given graph is at least some value $k$.
Indeed, even though some graphs admitting very dense labellings can be perceived as quite negative when considering the context of some adversary wasting precious networking resources, it may be computationally hard for the adversary to do so. The density measures presented for Hamiltonian graphs and for graphs of circumference $c$ correspond to labellings which are NP-hard to construct, as it requires knowledge of the Hamiltonian cycle, or of the simple cycle of size $c$.
If computing these density measures is proven to be NP-hard, then the study of polynomial cases, approximation algorithms, and fixed-parameter tractability algorithms could be of interest. 
Our results imply the corresponding problems are in P for trees and in APX for cycles (and cacti with large circumference) as the generator labelling is a 4-approximation for maximum temporal cost, and a 2-approximation for maximum temporality. However, any problem on cycle graphs is in the complexity class TALLY, and is thus unlikely to be NP-hard as this would imply $P=NP$.

\bibliography{paper}

\end{document}